\newtheorem{theorem}{Theorem}
\newtheorem{lemma}[theorem]{Lemma}
\newtheorem{corollary}[theorem]{Corollary}
\newtheorem{definition}[theorem]{Definition}
\newcommand{\ot}{\leftarrow}
\renewcommand{\O}{\mathrm{O}}
\renewcommand{\mid}{\,:\,}
\newcommand{\PARTITION}{{\sc Partition}}
\newcommand{\PRODUCTPARTITION}{{\sc ProductPartition}}
\newcommand{\strongly}{}
\renewcommand{\theenumi}{$(\alph{enumi})$}
\newcommand{\nop}[1]{}
\newcommand{\siff}{\Leftrightarrow}
\title{Optimal Composition Ordering Problems for Piecewise Linear Functions}
\author{Yasushi Kawase\thanks{Tokyo Institute of Technology. E-mail: {\tt kawase.y.ab@m.titech.ac.jp}}
\and Kazuhisa Makino\thanks{Kyoto University. E-mail: {\tt makino@kurims.kyoto-u.ac.jp}}
\and Kento Seimi\thanks{The Toa Reinsurance Company, Limited. E-mail: {\tt kento.seimi@gmail.com}}
}
\begin{document}
\maketitle

\begin{abstract}
In this paper, we introduce maximum composition ordering problems.
The input is $n$ real functions $f_1,\dots,f_n:\mathbb{R}\to\mathbb{R}$
and a constant $c\in\mathbb{R}$.
We consider two settings: total and partial compositions.
The maximum total composition ordering problem is to compute
a permutation $\sigma:[n]\to[n]$ which maximizes $f_{\sigma(n)}\circ f_{\sigma(n-1)}\circ\dots\circ f_{\sigma(1)}(c)$, where $[n]=\{1,\dots,n\}$.
The maximum partial composition ordering problem is to compute
a permutation $\sigma:[n]\to[n]$ and a nonnegative integer $k~(0\le k\le n)$
which maximize $f_{\sigma(k)}\circ f_{\sigma(k-1)}\circ\dots\circ f_{\sigma(1)}(c)$.

We propose $\O(n\log n)$ time algorithms for
the maximum total and partial composition ordering problems for monotone linear functions $f_i$, which generalize linear deterioration and shortening models for the time-dependent scheduling problem.
We also show that the maximum partial composition ordering problem can be solved in polynomial time if $f_i$ is of form
$\max\{a_ix+b_i,c_i\}$ for some constants $a_i\,(\ge 0)$, $b_i$ and $c_i$.
%As a corollary, we show that the two-valued free-order secretary problem can be solved in polynomial time.
We finally prove that there exists no constant-factor approximation algorithm
for the problems, even if $f_i$'s are monotone, piecewise linear functions with at most two pieces,
unless P=NP.
\end{abstract}

\section{Introduction}
In this paper, we introduce {\em optimal composition ordering problems} and mainly study their time complexity.
The input of the problems is \(n\) real functions \(f_1,\dots,f_n:\mathbb{R}\to\mathbb{R}\)
and a constant \(c\in\mathbb{R}\). 
In this paper, we assume that the input functions are piecewise linear,
and the input length of a piecewise linear function is 
the sum of the sizes of junctions and coefficients of linear functions.
We consider two settings:  {\em total} and {\em partial} compositions.
The maximum total composition ordering problem is to compute 
a permutation \(\sigma:[n]\to[n]\) that maximizes \(f_{\sigma(n)}\circ f_{\sigma(n-1)}\circ\dots\circ f_{\sigma(1)}(c)\), 
where \([n]=\{1,\dots,n\}\).
The maximum partial composition ordering problem is to compute
a permutation \(\sigma:[n]\to[n]\) and a nonnegative integer \(k~(0\le k\le n)\) that maximize \(f_{\sigma(k)}\circ f_{\sigma(k-1)}\circ\dots\circ f_{\sigma(1)}(c)\).
For example, 
if the input consists of  $f_1(x)=2x-6$, $f_2(x)=\frac{1}{2}x+2$, $f_3(x)=x+2$, and $c=2$,
then the ordering $\sigma$ such that $\sigma(1)=2$, $\sigma(2)=3$, and $\sigma(3)=1$ is optimal for the maximum total composition ordering problem. 
In fact, $f_1\circ f_3\circ f_2(c)=f_1(f_3(f_2(c)))=f_1(f_3(c/2+2))=f_1(c/2+4)=c+2=4$ provides the optimal value of the problem. 
The ordering $\sigma$ above and $k=2$ is optimal for the maximum partial composition ordering problem, where $f_3\circ f_2(c)=5$.
We remark that the minimization versions are equivalent to the maximization ones. %, which can be shown in the next section.

We also consider the maximum exact $k$-composition ordering problem,
which is a problem to compute 
a permutation \(\sigma:[n]\to[n]\) that maximizes \(f_{\sigma(k)}\circ f_{\sigma(k-1)}\circ\dots\circ f_{\sigma(1)}(c)\) for given $n$ functions $f_1,\dots,f_n:\mathbb{R}\to\mathbb{R}$,
a constant \(c\in\mathbb{R}\), and a nonnegative integer \(k~(0\le k\le n)\).

As we will see in this paper, the optimal composition ordering problems are natural and fundamental in many fields such as artificial intelligence, computer science, and operations research. 
However, to the best of the authors' knowledge, no one explicitly studies the problems from the algorithmic point of view. 
We below describe the single machine time-dependent scheduling problems 
and the free-order secretary problem, which can be formulated as the optimal composition ordering problems. 
%% We also show that the subset sum problem and the Hamilton cycle problem are 
%% naturally reducible to the optimal composition ordering problems in Appendix.

\subsection*{Time-dependent scheduling}
Consider the machine scheduling problems with time-dependent processing times,
called {\em time-dependent scheduling problems} \cite{cheng2004acs,gawiejnowicz2008tds}. 

Let  \(J_i\) \((i=1,\dots,n)\)  denote  a job with
a ready time \(r_i \in \mathbb{R}\), a deadline \(d_i \in \mathbb{R}\), 
and a processing time \(p_i:\mathbb{R}\to\mathbb{R}\), where $r_i \leq d_i$ is assumed. 
Different from the classical setting,
the processing time \(p_i\) is \emph{not} constant,
but depends on the \emph{starting} time of job \(J_i\).
The model has been studied to deal with learning and deteriorating effects, 
for example \cite{gawiejnowicz1995sjw,gupta1988sfs,ho1993cos,tanaev1994sts,wajs1986paf}.
Here each \(p_i\) is assumed to satisfy \(p_i(t)\le s+p_i(t+s)\) for any \(t\) and \(s\ge 0\),
since we should be able to finish processing job \(J_i\) earlier if it starts earlier.
%we can earlier finish processing the job \(J_i\) if it is earlier started processing.
Among time-dependent settings, we consider the single machine scheduling problem to minimize the makespan, 
where the input is  the start time \(t_0~(=0)\) and a set of   \(J_i\) \((i=1,\dots,n)\) above. 
The makespan denotes the time when all the jobs have finished processing, and 
we assume that the machine can handle only one job at a time and preemption is not allowed.
We show  that the problem can be seen as the minimum total composition ordering problem.

For simplicity, let us first consider the simplest case, that is,  
each job has neither the ready time \(r_i\) nor the deadline \(d_i\).
Let \(c=t_0\), and 
for each $i\in[n]$, define the function $f_i$ by \(f_i(t)=t+p_i(t)\). 
Note that job \(J_i\) has been finished processing at time \(f_i(t)\)
if it is started processing at time \(t\).
This implies that \(f_{\sigma(n)}\circ f_{\sigma(n-1)}\circ\dots\circ f_{\sigma(1)}(t_0)\)
denotes the makespan of the scheduling problem 
when we fix the ordering \(\sigma\) of the jobs.
Therefore, the problem is represented as the minimum total composition ordering problem. 
More generally, let us consider the case in which each job \(J_i\) also has 
both the ready time \(r_i\) and the deadline \(d_i\)  with \(d_i\ge r_i\).
Define the function $f_i$ by 
\begin{align*}
f_i(t)=
\begin{cases}
r_i+p_i(r_i) &(t\le r_i),\\
t+p_i(t)     &(r_i<t\le d_i-p_i(t)),\\
\infty       &(t>d_i-p_i(t)).
\end{cases}
\end{align*}
Then the problem can be reduced to the minimum total composition ordering problem \(((f_i)_{i\in [n]},c=t_0)\). 
A number of restrictions on the processing time $p_i(t)$ has been studied in this literature (e.g., \cite{cai1998oas,cheng2003sjw,melnikov1980ppo}). 
%We here describe the linear deterioration and shortening models.

In the {\em linear deterioration} model, 
the processing time $p_i$ is restricted to be 
a monotone increasing linear function that satisfies \(p_i(t)=a_i t+b_i\) for two positive constants $a_i$ and $b_i$.
Here \(a_i\) and \(b_i\) are respectively called
the {\em deterioration rate} and the {\em basic processing time}
of job \(J_i\).
Gawiejnowicz and Pankowska \cite{gawiejnowicz1995sjw},
Gupta and Gupta \cite{gupta1988sfs},
%Browne and Yechiali \cite{browne1990sdj},
Tanaev {\it et al.} \cite{tanaev1994sts},
and Wajs \cite{wajs1986paf}
obtained the result that 
the time-dependent scheduling problem of this model (without the ready time $r_i$ nor the deadline $d_i$)  is solvable in \(\O(n\log n)\) time 
by scheduling jobs in the nonincreasing order of ratios \(b_i/a_i\).
% Gawiejnowicz and Pankowska \cite{gawiejnowicz1995sjw}  extended the result to 
%% the case in which \(a_i=0\) or \(b_i=0\). 
%% Monsheiov \cite{mosheiov1994sju}
%% considered the {\em proportional deterioration} model, i.e., \(b_i=0\) for all $i\in[n]$,
%% and showed that the makespan  does not depend on the job processing ordering, i.e., 
%% any scheduling provides an optimal makespan.
%% Cheng and Ding \cite{cheng2000sms} provided
%% an \(\O(n^5)\)-time algorithm for the model 
%% \(p_i(t)=at+b_i\) \((a,b_i>0)\) with a deadline \(d_i\), where we note that all $p_i$'s have the same slope $a$. 
%% For the {\em nonlinear} deterioration model,  
%% it is known that the problem can be solved in \(\O(n\log n)\) time, if $p_i$ satisfies $p_i(t)=g(t)+b_i$ for some nondecreasing (general)  function $g$ \cite{melnikov1980ppo}.
As for the hardness results, it is known that 
the proportional deterioration model with ready time and deadline, 
the linear deterioration model with ready time, 
and the linear deterioration model with a deadline are all \strongly NP-hard \cite{cheng1998tco,gawiejnowicz2007sdj}.

Another important model is called the {\em linear shortening} model introduced by Ho {\it et al.} \cite{ho1993cos}.
In this model, the processing time $p_i$ is restricted to be a monotone decreasing linear function
that satisfies \(p_i(t)=-a_i t+b_i\) with two constants $a_i$ and $b_i$ with $1>a_i>0$, $b_i>0$.
%, and  \(a_j\left(\sum_{i=1}^n b_i-b_j\right)<b_j\).
%% Here the assumptions on $a_i$ and $b_i$ make sense from the practical point of view
%% (e.g., the processing time is nonnegative). %% add meaning
They showed
that the time-dependent scheduling problem of this model can be solved in \(\O(n\log n)\) time 
by again scheduling jobs in the nonincreasing order of the ratios \(b_i/a_i\).

\subsection*{Free-order secretary problem}
The {\em free-order secretary problem} is another application of the optimal composition ordering problems, 
which is closely related to 
a branch of the problems such as the full-information secretary problem \cite{ferguson1989wst},
knapsack and matroid secretary problems \cite{babaioff2007aks,babaioff2007msp,gharan2011ovo}
and stochastic knapsack problems \cite{dean2005aaa,dean2008ats}. 
Imagine that an administrator wants to hire the best secretary out of \(n\) applicants for a position.
Each applicant \(i\) has a nonnegative independent random variable \(X_i\) as his ability for the secretary.
Here \(X_1,\dots,X_n\) are not necessarily based on the same probability distribution,
and assume that the administrator knows all the probability distributions of $X_i$'s before their interviews, 
where such information can be obtained by their curriculum vitae and/or results of some written examinations. 
The applicants are interviewed one-by-one, and 
the administrator can observe the value \(X_i\) during the interview of the applicant $i$. 
A decision on each applicant is to be made immediately after the interview.
Once an applicant is rejected, he will never be hired.
The interview process is finished if some applicant is chosen, where we assume that 
the last applicant is always chosen if he is interviewed
since the administrator has to hire exactly one candidate. 
The objective is to find an optimal strategy for this interview process,
i.e., to find an interview ordering together with the stopping rule that maximizes the expected value of the secretary hired.

Let \(f_i(x)=\mathbf{E}[\max\{X_i,x\}]\).
For example, let us assume that \(X_i\) is an $m$-valued random variable 
that takes the value \(a_i^j\) with probability \(p_i^j \geq 0\) (\(j=1,\dots,m\)).
Here we assume that \(a_i^1\ge\dots\ge a_i^{m} \ge 0\) and  $\sum_{j=1}^m p_i^j=1$. 
Then we have 
\begin{align*}
f_i(x)
&=\sum_{j=1}^m p_i^j\max\{a_i^j,x\}%\\
%% &=\begin{cases}
%% \sum_{j=1}^mp_i^ja_i^j&   \text{if}~~ x \leq a_i^m, \\
%% \sum_{j=1}^{l}p_i^ja_i^j+\sum_{j=l+1}^{m}p_i^j x & \text{if}~~ a_i^{l+1} < x \leq a_i^{l} \ \ \  \left(l=1,\dots, m-1 \right),\\ 
%% x &  \text{if}~~ x> a_i^1,
%% \end{cases}\\
=\max_{l=0, \dots , m}\left\{\sum_{j=1}^{l}p_i^ja_i^j+\sum_{j=l+1}^{m}p_i^j x\right\}.
\end{align*}
Note that this $f_i$ is a  monotone piecewise linear function with at most \((m+1)\) pieces. 
We now claim that our secretary problem can be represented by the maximum total composition ordering problem \(((f_i)_{i\in[n]},c=0)\).

Let us consider  the best stopping rule for the interview to maximize the expected value  for the secretary hired
when the interview ordering is fixed in advance. 
 Assume that the applicant $i$ is interviewed in the $i$th place.
Note that \(\mathbf{E}[X_n]\,(=f_{n}(0))\) is the expected value  under the condition that 
all the applicants except for the last one are rejected, since the last applicant is hired.   
Consider the situation that all the applicants except for the last two ones are rejected. 
Then it is a best stopping rule that 
 the applicant $n-1$ is hired if and only if $X_{n-1} \geq f_{n}(0)$ is satisfied (i.e., 
 the applicant $n$ is hired if and only if $X_{n-1} < f_{n}(0)$), 
where $f_{n-1}\circ f_n(0)$ is  the expected value for the best stopping rule,   under this situation. 
By applying backward induction,  we have the following best stopping rule:
we hire the applicant $i\,(< n)$  and stop the interview process,  
if  $X_i \geq f_{i+1}\circ\dots\circ f_{n}(0)$  (otherwise, the next applicant is interviewed), 
and we hire the applicant $n$ if no applicant $i \,(< n)$ is hired. 
We show that  
$f_{1}\circ\dots\circ f_{n}(0)$ is the maximum expected value for the secretary hired, if the  
interview ordering is fixed such that the applicant $i$ is interviewed in the $i$th place.

Therefore, the secretary problem (i.e., finding an interview ordering, together with a stopping rule) can be formulated as the maximum total composition ordering problem \(((f_i)_{i\in[n]},c=0)\). 
%We note that it can also be seen as the partial composition ordering problem from the discussion in Appendix. 

\subsection*{Main results obtained in this paper}
In this paper, we consider the computational issues for the optimal composition ordering problems,  when all $f_i$'s are monotone and almost linear. 

We first show that the problems become tractable if all $f_i$'s are monotone and linear, i.e., $f_i(x)=a_ix+b_i$ for $a_i \geq 0$. 
%% Theorem 1,2
\begin{theorem}\label{theorem:partial-total}
The maximum partial and total composition ordering problems for monotone nondecreasing linear functions
are both solvable  in \(\O(n\log n)\) time.
\end{theorem}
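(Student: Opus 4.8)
\medskip
\noindent\textbf{Proof plan.}
The first observation is that a composition of affine maps is affine, so for a fixed permutation \(\sigma\) the value \(f_{\sigma(n)}\circ\dots\circ f_{\sigma(1)}(c)\) is completely determined; the only freedom is the order of the functions (and, in the partial version, which ones to drop). The basic local tool is a position-independent exchange rule: for \(f_i(x)=a_ix+b_i\) and \(f_j(x)=a_jx+b_j\) one has \(f_i\circ f_j(x)\ge f_j\circ f_i(x)\) for \emph{every} \(x\) if and only if \((a_i-1)b_j\ge(a_j-1)b_i\). Since every \(f_\ell\) is nondecreasing, increasing the value produced after some prefix never decreases the final value, so an adjacent-transposition argument shows that an optimal ordering must have every two neighbours ordered by this rule. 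If this rule were a total preorder we would be done by sorting in \(\O(n\log n)\) time --- and indeed it is transitive, and the theorem is immediate, when all \(a_i>1\) (this recovers the classical linear-deterioration result) or when all \(0<a_i<1\). The difficulty, and the crux of the proof, is that in general the rule is \emph{not} transitive: it compares the angular positions of the vectors \(w_i=(b_i,\,a_i-1)\), and these may span more than a closed half-plane once both \(a_i>1\) and \(0<a_i<1\) occur, so plain sorting can fail.

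\medskip
To expose the right structure I would reparametrise each non-constant \(f_i\) (\(a_i>0\)) through its fixed point \(x_i^{*}=b_i/(1-a_i)\) when \(a_i\neq1\), so that \(f_i(x)-x_i^{*}=a_i\,(x-x_i^{*})\), and split the functions into \emph{expanding} (\(a_i>1\): moves the value away from \(x_i^{*}\)), \emph{contracting} (\(0<a_i<1\): moves it toward \(x_i^{*}\)), \emph{shifts} (\(a_i=1\): they pairwise commute, so their internal order is irrelevant), and \emph{constants} (\(a_i=0\): they overwrite the value with \(b_i\)). Constants are removed by a reduction: in any total ordering that uses a constant, everything placed before the \emph{last}-applied constant is irrelevant, so one guesses which constant that is (at most \(n\) choices) and is left with a \emph{partial} problem on the non-constant functions whose start value is that constant; hence it suffices to solve the partial problem for expanding/contracting/shift functions and, separately, the total problem when no constant occurs. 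Restricted to the expanding class the exchange rule reads ``smaller \(x_i^{*}\) first'' and is transitive; the same holds inside the contracting class; the comparison between an expanding and a contracting function runs the other way, which is exactly what destroys transitivity across classes.

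\medskip
For the partial problem one additionally knows that along an optimal selection the running value is nondecreasing --- a value-decreasing function together with everything after it may be deleted without loss, since what remains is a composition of nondecreasing functions applied to the larger pre-value. The plan is then to prove that an optimal partial solution has the shape ``a block of contracting functions in increasing fixed-point order, then the positive shifts, then a block of expanding functions in increasing fixed-point order'' (with at most one constant, placed first and used only if it beats what the contracting block achieves), and that membership can be decided greedily by a single left-to-right sweep that includes a function precisely when it strictly raises the current value; after one \(\O(n\log n)\) sort of the fixed points this sweep is linear. For the total problem the reduction above disposes of constants, and in the remaining case --- where, unlike in the partial case, the running value need not be monotone, since the ``harmful'' functions must still be placed --- I would prove an analogous decomposition into two fixed-point-sorted blocks separated by a threshold value, with the threshold ranging over only the \(\O(n)\) candidates \(c\) and the \(x_i^{*}\), so that after a single sort each candidate is evaluated incrementally. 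The real obstacle is establishing these structural lemmas rigorously --- in particular ruling out intransitive triples in an optimum and handling the borderline cases \(a_i=1\) and \(a_i=0\) cleanly inside the sorted order; the \(\O(n\log n)\) running time then follows by routine bookkeeping.
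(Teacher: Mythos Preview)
Your plan for the \emph{partial} problem is on target and essentially matches the paper: parametrise by the fixed point $\gamma(f)=b/(1-a)$, place the contracting functions ($a<1$) first in increasing $\gamma$-order and then the expanding ones ($a\ge 1$) in increasing $\gamma$-order; positive shifts ($a=1,\,b>0$) sit at the head of the expanding block since $\gamma=-\infty$, and constants ($a=0$) land in the contracting block with $\gamma=b$. The paper states this as sorting lexicographically on $(\delta,\gamma)$ with $\delta(f)=+1$ iff $a\ge 1$, and proves optimality not by the adjacent-exchange route you sketch (which would need transitivity) but by showing directly, through a four-case analysis on $\overline f_i(x)=\max\{f_i(x),x\}$, that in this order \emph{every} pair $i<j$ satisfies $\overline f_{\sigma(i)}\preceq\overline f_{\sigma(j)}$ (Lemma~\ref{lemma:gammaprec-check}); combined with Lemma~\ref{lemma:extotalorder} this bypasses intransitivity altogether. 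Your greedy sweep is then simply the evaluation of $\overline f_{\sigma(n)}\circ\dots\circ\overline f_{\sigma(1)}(c)$.

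The genuine gap is the \emph{total} problem. Your conjectured ``two fixed-point-sorted blocks separated by a threshold'' is not the right structure. The paper's key result (Lemma~\ref{lemma:tc}) is that some optimum is a \emph{cyclic shift} of the lexicographic $(\delta,\gamma)$-order, so the optimal pattern generally has \emph{three} blocks---e.g.\ $[\text{contracting, high }\gamma \mid \text{expanding} \mid \text{contracting, low }\gamma]$ when $\prod_i a_i\ge 1$ (Lemma~\ref{lemma:prod>}), with the contracting $\gamma$'s wrapping around; the dual three-block pattern appears when $\prod_i a_i<1$. No single threshold on the fixed points recovers this. Proving the cyclic-shift lemma is where the real work lies, and it is not an ``intransitive triples'' argument: intransitive triples \emph{do} survive in optima. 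The paper first bounds the number of $\delta$-sign changes in an optimum by two via dominance inequalities among \emph{four} functions with alternating $\delta$ (Lemmas~\ref{lemma:comp+-+-}--\ref{lemma:comp-+-+}, yielding Lemma~\ref{lemma:atmost2}), and then uses three-function inequalities (Lemmas~\ref{lemma:comp+-+}--\ref{lemma:comp-+-}) together with a lemma locating $\gamma(f_j\circ f_i)$ relative to $\gamma(f_i),\gamma(f_j)$ (Lemma~\ref{lemma:gammacomp}) to force the cyclic form. Only after this structural lemma does the $\O(n\log n)$ bound follow: one sort, then an $O(1)$-per-step recurrence over the $n$ cyclic shifts.
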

Recall that the algorithm for the linear shortening model (resp., the linear deterioration model) for the time-dependent scheduling problem
is easily generalized to the case when all $a_i$'s satisfy $a_i< 1$ (resp., $a_i > 1$). 
The best composition ordering is obtained as the nondecreasing order of ratios \(b_i/a_i\).
% (resp., the nondecreasing order of ratios \(b_i/a_i\)). 
This idea can be extended to the maximum partial composition ordering problem in the mixed case (i.e., some $a_i>1$ and some $a_{i'}<1$) of Theorem \ref{theorem:partial-total}.
However, we cannot extend it to the maximum total composition ordering problem.  
In fact, we do not know if there exists such a simple criterion on the maximum total composition ordering.  
We instead present an efficient algorithm that chooses the best ordering among linearly many candidates.

We also provide a dynamic-programming based polynomial-time algorithm for the exact $k$-composition setting.
\begin{theorem}\label{theorem:exact}
The maximum exact $k$-composition ordering problem for monotone nondecreasing linear functions is solvable in \(\O(k\cdot n^2)\) time.
\end{theorem}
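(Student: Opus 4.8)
The plan is to combine a structural exchange lemma with a three–dimensional dynamic program. Throughout, write \(f_i(x)=a_ix+b_i\) with \(a_i\ge 0\); call \(f_i\) \emph{expanding} if \(a_i>1\) and \emph{contracting} if \(a_i<1\), and set \(\rho_i=b_i/(a_i-1)\) (so \(-\rho_i\) is the fixed point of \(f_i\)); the degenerate cases \(a_i\in\{0,1\}\) are dealt with separately at the end. The composition of linear functions is linear, and if \(g\ge h\) pointwise and \(F\) is a composition of nondecreasing functions then \(F\circ g\ge F\circ h\) pointwise. Hence, comparing an ordering with the one obtained by transposing two consecutively applied functions \(f_i,f_j\), and using that \(f_j\circ f_i(x)-f_i\circ f_j(x)=b_i(a_j-1)-b_j(a_i-1)\) is a constant (the same computation as in the analysis behind Theorem~\ref{theorem:partial-total}), one sees that in any optimal ordering two same-type functions applied consecutively must appear in nonincreasing order of \(\rho\). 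The first and main step is to upgrade this to the \emph{structural lemma}: there is an optimal exact-\(k\) ordering in which the used expanding functions are applied, from innermost to outermost, in nonincreasing order of \(\rho_i\), and likewise the used contracting functions. I would prove this by taking an optimal ordering minimising the number of inverted same-type pairs: a closest such pair necessarily has only opposite-type functions between them (otherwise a nearer inverted pair would exist), and one then rearranges that short segment of the form ``expanding, then a sorted contracting block, then expanding'' so as to put the two expanding functions in \(\rho\)-order without decreasing the value on the actual incoming argument.

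I expect this repair step to be the main obstacle. The relation ``\(f_i\) should be applied before \(f_j\)'' is \emph{not} transitive once expanding and contracting functions are mixed --- one can exhibit a three-function cycle --- so there is no single sort key for all \(n\) functions and a plain bubble sort does not work. What saves the argument is that the pair to be fixed is separated only by an already \(\rho\)-sorted block of the opposite type, so only a bounded rearrangement is required; pushing through the case analysis (comparing the few candidate compositions of that segment at the given input) is the technical heart of the proof.

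Granting the structural lemma, the rest is routine. Sort the expanding functions as \(p_1,\dots,p_s\) and the contracting functions as \(q_1,\dots,q_t\), each in nonincreasing order of \(\rho\). Every ordering of the form guaranteed by the lemma is obtained by scanning the two sorted lists from the front and, at each position, either discarding the current head or \emph{taking} it; the taken functions are composed in the order taken, the first taken being applied innermost to \(c\). Let \(V[i][j][\ell]\) be the maximum value of \(f_{s_\ell}\circ\dots\circ f_{s_1}(c)\) over all runs that have so far consumed \(p_1,\dots,p_i\) and \(q_1,\dots,q_j\) and have taken exactly \(\ell\) of them. Because every slope is nonnegative, each transition \(v\mapsto av+b\) is nondecreasing in \(v\), so keeping only the maximum value per state is correct. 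From \(V[i][j][\ell]\) there are four \(\O(1)\)-time transitions --- discard or take \(p_{i+1}\), discard or take \(q_{j+1}\) --- and the optimum is \(V[s][t][k]\); the program ranges over all such merges, which are all valid orderings and include the optimal one supplied by the lemma, so its value equals the optimum of the problem.

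For the running time, there are \(\O(s\cdot t\cdot k)=\O(n^2k)\) states with \(\O(1)\) work each, hence \(\O(n^2k)\) overall, and the preliminary sorting costs only \(\O(n\log n)\). The degenerate functions are absorbed without changing the bound: a translation \(x\mapsto x+b_i\) behaves like an extreme contracting function when \(b_i\ge 0\) and like an extreme expanding one when \(b_i<0\) (so it can be appended to the appropriate sorted list), while a constant function \(x\mapsto b_i\) erases everything applied before it and can be handled by solving the residual instance that starts from \(b_i\).
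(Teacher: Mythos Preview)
Your structural lemma --- that some optimal ordering has both the chosen expanding functions and the chosen contracting functions applied in monotone \(\rho\)-order from innermost to outermost --- is false, and so is the correctness of the dynamic program built on it. Take the three functions
\[
c_1(x)=\tfrac12 x,\qquad c_2(x)=\tfrac12 x+10,\qquad e(x)=10x-130,
\]
so \(c_1,c_2\) are contracting with fixed points \(0\) and \(20\), and \(e\) is expanding with fixed point \(130/9\in(0,20)\). All six total compositions have slope \(5/2\); computing the constant terms gives
\[
e\!\circ\!c_2\!\circ\!c_1:\;-30,\quad c_2\!\circ\!e\!\circ\!c_1:\;-55,\quad c_2\!\circ\!c_1\!\circ\!e:\;-22.5,\quad c_1\!\circ\!e\!\circ\!c_2:\;-15,
\]
with the remaining two orderings at \(-80\) and \(-27.5\). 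The unique optimum is \(c_1\circ e\circ c_2\), which applies \(c_2\) (the contraction with the \emph{larger} fixed point) innermost and \(c_1\) outermost --- exactly the opposite of what your lemma asserts. Your DP only merges the list \((c_1,c_2)\) in that order with the singleton list \((e)\), hence only examines the first three compositions above and outputs \(-22.5\) instead of \(-15\). The ``bounded rearrangement'' you propose as the repair step cannot work here either: every reordering of the segment \((c_2,e,c_1)\) that places \(c_1\) before \(c_2\) is strictly worse.

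What the paper proves is different in kind. It shows (Lemma~\ref{lemma:tc}) that the optimum over any set of monotone linear functions is always a \emph{cyclic rotation} of one fixed sequence, the lexicographic order by \((\delta,\gamma)\); this is established not by pairwise exchanges but via inequalities governing compositions of three and four functions (Lemmas~\ref{lemma:gammacomp}--\ref{lemma:comp-+-}). The \(O(kn^2)\) dynamic program then loops over the \(n\) possible rotation offsets and, for each, runs an \(O(nk)\) knapsack-style DP that picks \(k\) functions in the induced linear order. In the example above the lex order is \((c_1,c_2,e)\), and the rotation starting at \(c_2\) recovers the optimum \(c_1\circ e\circ c_2\).
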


We next consider monotone, piecewise linear case.
It can be directly shown from the time-dependent scheduling problem that the maximum total composition ordering problem is NP-hard, 
even if all $f_i$'s are monotone, concave, and piecewise linear functions with at most two pieces, i.e., 
$f_i(x)=\min\{a_i^1x+b_i^1,\,a_i^2x+b_i^2\}$ for some constants $a_i^1$, $a_i^2$, $b_i^1$, 
and $b_i^2$ with $a_i^1,a_i^2>0$.
It turns out that all the other cases become intractable, even if all $f_i$'s are monotone and consist of at most two pieces.
Furthermore, the problems are inapproximable.

\begin{theorem}\label{theorem:hardness}
\noindent
{\rm (i)}  %The maximum total composition ordering problem is NP-hard, even if 
% all $f_i$'s are monotone, concave, and piecewise linear functions with at most two pieces. 
For any positive real number \(\alpha~(\le 1)\),
there exists no $\alpha$-approximation algorithm 
for the maximum total (partial) composition ordering problem
even if all $f_i$'s are monotone, concave, and piecewise linear functions with at most two pieces,
unless P=NP.

\noindent
{\rm (ii)} %The maximum total and partial composition ordering problems are both NP-hard, even if 
% all $f_i$'s are monotone, convex, and piecewise linear functions with at most two pieces.
For any positive real number \(\alpha~(\le 1)\),
there exists no $\alpha$-approximation algorithm 
for the maximum total (partial) composition ordering problem
even if all $f_i$'s are monotone, convex, and piecewise linear functions with at most two pieces,
unless P=NP.
\end{theorem}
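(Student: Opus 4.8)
The plan is to give, for each of the two cases, a gap-introducing polynomial-time reduction from \PARTITION\ (equivalently \SUBSETSUM; if a multiplicative gadget turns out to be cleaner one can instead reduce from \PRODUCTPARTITION, whose numbers already have polynomial bit-length). Concretely, from a \PARTITION\ instance with numbers summing to $2W$ I want to build $n+\O(1)$ monotone two-piece functions and a start value $c$ so that (completeness) in a yes-instance some total ordering $\sigma$ composes to a value at least a positive quantity $V$ computable from the instance, while (soundness) in a no-instance every ordered sub-collection of the functions — hence every total \emph{and} every partial composition — evaluates to at most $0$. This dichotomy yields both parts of the theorem simultaneously for both the total and the partial problem: if $A$ were an $\alpha$-approximation algorithm with $\alpha\in(0,1]$, then on a yes-instance $A$ would output a feasible solution of value $\ge\alpha\cdot\OPT\ge\alpha V>0$, whereas on a no-instance every feasible solution, and in particular $A$'s output, has value $\le 0$; since the value of a fixed composition of piecewise linear functions can be computed in polynomial time, comparing $A$'s output value with $0$ would decide \PARTITION, so no such $A$ exists unless $\mathrm P=\mathrm{NP}$. (If one prefers to avoid negative values, the same conclusion follows from a parametrized gap in which yes-instances have $\OPT\ge t$ and no-instances $\OPT\le 1$, one reduction per target $t\in\mathbb N$; for each fixed $\alpha$ pick $t>1/\alpha$.)

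Parts (i) and (ii) will be handled by two parallel constructions, one outputting functions $\min\{a_i^1x+b_i^1,a_i^2x+b_i^2\}$ with $a_i^1,a_i^2>0$ (concave) and one outputting functions $\max\{a_i^1x+b_i^1,a_i^2x+b_i^2\}$ with $a_i^1,a_i^2>0$ (convex); the two are mirror images, reflecting that the substitution $g_i(x)=-f_i(-x)$, $c\mapsto-c$ turns a concave monotone two-piece instance into a convex one and swaps the minimization and maximization directions. For, say, the concave case I would use one-sided threshold (ramp) functions whose breakpoints encode the numbers $w_i$, plus a distinguished ``marker'' function and a final one-piece linear amplifier: the start value and the breakpoints are tuned so that a composition can stay in the ``good'' linear regime (the one that eventually produces a positive output) only if it applies exactly a sub-collection $S$ of the item-functions with $\sum_{i\in S}w_i=W$ before the marker and the complement $\bar S$ after it — using that $\sum_{i\in S}w_i=W$ is equivalent to the conjunction of $\sum_{i\in S}w_i\le W$ and $\sum_{i\notin S}w_i\le W$, so the single exact constraint is enforced as two one-sided threshold tests on complementary parts of the ordering. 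The final amplifier, a line of large but polynomially bounded slope, then maps ``the intermediate value hit the target'' to a positive output and ``it missed'' to a nonpositive one.

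I expect the gadget design to be the main obstacle, for three reasons. First, a monotone two-piece function has no interior maximum, so an exact equality cannot be detected by a single ``tent''; it must emerge from the interaction of several one-sided thresholds, which constrains where the marker and item-functions may be placed in the ordering. Second, all slopes and breakpoints must have polynomial bit-length, which rules out the most obvious multiplicative encodings based on powers $\lambda^{w_i}$ unless one reduces from \PRODUCTPARTITION\ instead of \PARTITION. Third, and most delicately, soundness must be proved in full generality: one has to show that \emph{no} interleaving whatsoever of the item-functions with the marker(s) — including orderings that leave and re-enter the good regime several times, or that use only a sub-collection in the partial setting — can reach the target value unless a valid partition exists, and one must bound the second-best reachable value away from the target by a polynomial margin so that the amplifying slope, and hence the whole reduction, stays polynomial.
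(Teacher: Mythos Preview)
Your high-level plan---a gap-introducing reduction from \PARTITION/\PRODUCTPARTITION, a special ``marker'' gadget that can be hit only by a valid partition, and a final linear amplifier to blow up the gap---is exactly the paper's strategy. Two points deserve correction, one a simplification and one a genuine gap.

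For the concave case, the interleaving worry you flag as the main obstacle evaporates if you make the item functions \emph{linear}. The paper takes $f_i(x)=x+a_i$ for $i\in[n]$, a single concave tent $f_{n+1}(x)=\min\{2x,\tfrac12 x+\tfrac32 T\}$, and an amplifier $f_{n+2}$. Since only one function is nonlinear and the linear ones commute, the total composition depends solely on where $f_{n+1}$ sits; if $q$ is the sum of the $a_i$ applied before it, the value (before amplification) is $\min\{q,-\tfrac12 q+\tfrac32 T\}+2T$, peaking at $q=T$ and dropping by at least $\tfrac12$ otherwise. No case analysis over interleavings is needed, and the partial problem is handled because $f_i(x)\ge x$ on the relevant range.

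The real gap is your treatment of the convex case. The substitution $g_i(x)=-f_i(-x)$ does swap concave/convex, but---as you yourself note---it also swaps the optimization direction. Hardness of \emph{maximum} concave composition therefore transfers to hardness of \emph{minimum} convex composition, not to the maximum convex problem you need for part~(ii); the two are not equivalent, and no amount of ``parallel construction'' derives one from the other by symmetry. The paper accordingly uses a genuinely different, multiplicative gadget: reduce from \PRODUCTPARTITION{} with item functions $f_i(x)=\max\{a_i^{-1}(x-T^2)+T^2,\ a_i(x-T^2)+T^2\}$ that all share the fixed point $T^2$, so any composition of a subset multiplies slopes (either $\prod a_i$ or $\prod a_i^{-1}$ depending on which branch is active). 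A linear shift $f_{n+1}(x)=x+2T$ and an amplifier then create the gap, with the target attained exactly when some subset has product $T$. This multiplicative structure is not the mirror of the additive concave construction, and your plan as stated does not supply it.
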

\noindent
Note that $f_i$ can be represented by $f_i(x)=\max\{a_i^1x+b_i^1,\,a_i^2x+b_i^2\}$ for some constants $a_i^1$, $a_i^2$, $b_i^1$, 
and $b_i^2$ with $a_i^1,a_i^2>0$ if $f_i$ is a monotone, convex, and piecewise linear function with at most two pieces. 

As for the positive side, if each $f_i$ is a monotone, convex, and piecewise linear function with at most two pieces such that one of the pieces is constant, then 
we have the following result, which implies that  
the  two-valued free-order secretary problem can be solved in 
 \(\O(n^2)\) time.
%% Theorem 3
\begin{theorem}\label{theorem:pconst}
Let $f_i(x)=\max\{a_ix+b_i,c_i\}$ for some constants $a_i\,(\ge 0)$, $b_i$ and $c_i$.
Then the maximum partial composition ordering problem is solvable in \(\O(n^2)\) time.
\end{theorem}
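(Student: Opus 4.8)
The plan is to exploit the special structure of $f_i(x)=\max\{a_ix+b_i,c_i\}$: each function has a "constant floor" $c_i$ (its value on the part of the domain where the affine piece is dominated) and behaves like the affine function $a_ix+b_i$ above the threshold $x_i := (c_i-b_i)/a_i$ (when $a_i>0$; the case $a_i=0$ makes $f_i$ constant and is handled trivially). The key observation is that once we decide which functions are applied \emph{and} the composition enters a "high" regime, the floor $c_i$ no longer matters, so an optimal partial composition essentially splits into two phases: a first phase whose only purpose is to push the running value as high as possible, and a second phase consisting of a sub-multiset of the remaining functions applied in an order that is governed by the linear theory of Theorem \ref{theorem:partial-total} (nondecreasing order of ratios $b_i/a_i$ among the affine pieces). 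The algorithm will guess a small amount of combinatorial information and then fill in the rest greedily.

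Concretely, first I would establish a normal-form lemma for an optimal solution $(\sigma,k)$: there is an index $j\le k$ and a threshold value $v^\ast$ such that (a) the value of the composition just before step $j$ is $v^\ast$, (b) from step $j$ onward every applied function $f_{\sigma(\ell)}$ uses its affine piece, i.e. the running value stays $\ge x_{\sigma(\ell)}$, and (c) the functions used in the second phase are sorted by nondecreasing $b/a$ ratio (this uses the exchange argument behind Theorem \ref{theorem:partial-total}, since on the affine regime the two-piece functions are genuinely linear). For the first phase I would argue that we may assume it consists of at most one "active" function applied at the very end — because the only role of the first phase is to reach $v^\ast$, and the best single value reachable from $c$ by one function is $\max_i f_i(c)$, while applying more functions can only help if they are themselves already in the affine regime, in which case they could be folded into the second phase. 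So after sorting, the candidate values $v^\ast$ come from a polynomial-size set (roughly: $c$ itself, the constants $c_i$, and the values $f_i(c)$).

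Given a target threshold $v^\ast$ (one of the $\O(n)$ candidates), the remaining task is: among the functions $f_i$ with $x_i \le v^\ast$ (so that applying $f_i$ to something $\ge v^\ast$ keeps us affine, assuming $a_i\ge 1$; functions with $a_i<1$ need care since they may \emph{drop} the value below threshold), choose a sub-multiset and an order maximizing the linear composition started from $v^\ast$. I would reduce this to the monotone-linear partial problem of Theorem \ref{theorem:partial-total} restricted to the eligible functions, which is solvable in $\O(n\log n)$; iterating over the $\O(n)$ thresholds yields $\O(n^2\log n)$, and a more careful incremental maintenance of the sorted structure (adding functions one at a time as $v^\ast$ increases) removes the log factor to give $\O(n^2)$. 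The main obstacle I anticipate is handling functions with $0\le a_i<1$ cleanly in the phase decomposition: such a function can pull the running value down, possibly back below some other function's threshold, so the clean "two-phase" picture may need refinement — e.g. showing that in an optimal solution all contracting functions ($a_i<1$) are applied either entirely before the first expanding function or as a sorted suffix, mirroring the mixed-case analysis already used for Theorem \ref{theorem:partial-total}. Establishing that structural claim rigorously, and bounding the number of relevant breakpoints so the enumeration stays polynomial, is the heart of the proof; once it is in place, the algorithm and its $\O(n^2)$ running time follow from the linear case as a black box.
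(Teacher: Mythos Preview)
Your high-level intuition is right and matches the paper's approach: an optimal solution uses the constant floor $c_i$ of at most one function, and that function can be taken to be the very first one applied. The paper states this as Lemma~\ref{lemma-1ab} and proves it by a clean move-to-front argument: take the \emph{last} index $i^*$ at which the constant part is active, observe that the running value there equals $c_{\sigma(i^*)}$ regardless of what came before, and cyclically rotate $\sigma(i^*)$ to the front. Your sketch (``applying more functions can only help if they are themselves already in the affine regime, in which case they could be folded into the second phase'') gestures at this but is not a proof; in particular it does not explain why no \emph{later} function ever triggers its floor. You should replace the hand-wave with the move-to-front argument.

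The second step of your plan is overcomplicated and, as stated, not correct. Once Lemma~\ref{lemma-1ab} is in place, the remaining subproblem is simply the maximum \emph{partial} composition problem for the \emph{linear} functions $a_ix+b_i$ (equivalently, total composition for $\overline f_i(x)=\max\{a_ix+b_i,x\}$), started from $c$ or from $c_k$ with function $k$ removed. There is no ``eligibility'' restriction to functions with threshold $x_i\le v^\ast$: a function whose threshold lies above the current value may still be applied later once the value has grown, and the linear partial problem of Theorem~\ref{theorem:partial-total} already decides optimally which functions to use and in what order, for all slopes $a_i\ge 0$ simultaneously via the lexicographic order on $(\delta,\gamma)$. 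Your worry about contracting functions ($a_i<1$) dropping the value below some threshold is therefore a non-issue in the correct reduction---Lemma~\ref{lemma-1ab} guarantees the floors are never re-triggered, and the mixed-slope analysis is entirely absorbed into Theorem~\ref{theorem:partial-total}. With this cleaner reduction you get $n+1$ linear-partial instances, each solvable in $\O(n)$ after one $\O(n\log n)$ sort, giving $\O(n^2)$ directly without the incremental-maintenance trick you propose.
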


We summarize the current status on 
the time complexity of the maximum total composition ordering problem
in Table~\ref{table:result-maxtcop}.
Here the bold letters represent our results,
and the results for the minimum and/or partial versions are described as the ones 
for the maximum total composition ordering problem, 
since the minimum  and  partial versions 
can be transformed into the maximum total one as shown in Section \ref{sec:maxpcop}.

\begin{table}[hb]  
\caption{The current status on the time complexity of the maximum total composition ordering problem.}
\label{table:result-maxtcop}
\def\tabcolsep{5pt}
\def\arraystretch{1.2}
\begin{center}
\begin{tabular}{lcc}
\hline Functions&Complexity&References\\\hline
$f_i(x)=a_ix$ ($a_i> 1$)&$\O(n)$&\cite{mosheiov1994sju}\\
$f_i(x)=a_ix+b_i$\quad($a_i>1,~b_i<0$)&$\O(n\log n)$&{\small\cite{gawiejnowicz1995sjw,gupta1988sfs,tanaev1994sts,wajs1986paf}}\\
%$f_i(x)=\begin{cases}ax+b_i&(x\ge d_i)\\-\infty&(x<d_i)\end{cases}$\quad($a>1,\ b_i<0$)&$\O(n^5)$&\cite{cheng2000sms}\\
%$f_i(x)=x+g(x)+b_i$ ($b_i<0$, $g(x)$: nondecreasing)&$\O(n\log n)$&\cite{melnikov1980ppo}\\
$f_i(x)=\min\{ax+b_i,r_i\}$\quad($a>1,~b_i<0$)&NP-hard&\cite{cheng1998tco}\\
$f_i(x)=\begin{cases}\min\{a_ix,r_i\}&(x\ge d_i)\\-\infty&(x<d_i)\end{cases}$\quad($a_i>1$)&\strongly NP-hard&\cite{gawiejnowicz2007sdj}\\
$f_i(x)=\min\{a_ix+b_i,c_i\}$\quad $(a_i>1)$&\strongly NP-hard&\cite{cheng1998tco}\\
$f_i(x)=a_ix+b_i$\quad($1> a_i\ge 0,~b_i<0$)&$\O(n\log n)$&\cite{ho1993cos}\\
%$f_i(x)=\min\{ax+b_i,r_i\}$\quad($1>a> 0,~b_i<0$)&$\O(n^6\log n)$&\cite{cheng1998tco}\\
$f_i(x)=\begin{cases}a_ix+b_i&(x\ge d_i)\\-\infty&(x<d_i)\end{cases}$\quad($1>a_i> 0$)&\strongly NP-hard&\cite{cheng1998tco}\\
$f_i(x)=a_ix+b_i$\quad $(a_i\ge 0)$&\(\boldsymbol{\O(n\log n)}\)&\textbf{[Theorem \ref{theorem:partial-total}]}\\
$f_i(x)=\max\{x,a_ix+b_i\}$\quad $(a_i\ge 0)$&\(\boldsymbol{\O(n\log n)}\)&\textbf{[Theorem \ref{theorem:partial-total}]}\\
$f_i(x)=\max\{x,a_ix+b_i,c_i\}$\quad $(a_i\ge 0)$&\(\boldsymbol{\O(n^2)}\)&\textbf{[Theorem \ref{theorem:pconst}]}\\
$f_i(x)=\max\{x,\min\{a_i^1x+b_i^1,a_i^2x+b_i^2\}\}$\quad $(a_i^1,a_i^2>0)$&\textbf{NP-hard}&\textbf{[Theorem \ref{theorem:hardness}]}\\
$f_i(x)=\max\{a_i^1x+b_i^1,a_i^2x+b_i^2\}$\quad $(a_i^1,a_i^2>0)$&\textbf{NP-hard}&\textbf{[Theorem \ref{theorem:hardness}]}\\
\hline
\end{tabular}
\end{center}
\end{table}

\subsection*{The organization of the paper}
The rest of the paper is organized as follows.
In Section \ref{sec:preliminaries}, 
we show that the minimum and/or partial versions of the optimal composition ordering problem can be formulated as the maximum total composition ordering problem.  
In Section~\ref{sec:maxpcop}, we prove the partial composition part of Theorem~\ref{theorem:partial-total} and Theorem \ref{theorem:pconst}, 
and in Section \ref{sec:maxtcop}, we prove the total composition part of Theorem \ref{theorem:partial-total} and Theorem\ref{theorem:exact}.
Finally, Section \ref{sec:nr} provides a proof of Theorem \ref{theorem:hardness}.
%, i.e., 
%the intractable results for  the optimal composition ordering problems, even if $f_i$'s are almost linear. 
%Theorem \ref{theorem:exact}. 
%Due to the space constraint, some of the proofs are omitted, which can be found in Appendix.

\section{Properties of Function Composition}\label{sec:preliminaries}
In this section,
we present two basic properties of the optimal composition ordering problems, which imply that 
the {\em maximum total} composition ordering problem represents all the other composition ordering problems
namely, the  minimum  partial, the minimum total, and the maximum partial ones.   

Let us start with the lemma that the minimization problems are equivalent to the maximization ones.
For a function $f:\mathbb{R} \to \mathbb{R}$, define a function $\tilde{f} :\mathbb{R} \to \mathbb{R}$
by 
\begin{equation}
\label{eq-a1}
\tilde{f}(x):=-f(-x). 
\end{equation}
For example, if $f(x)=2x-3$, then we have $\tilde{f}(x)=2x+3$. 
By the definition, we have $\tilde{\tilde{f}}=f$, and $\tilde{f}$ inherits several properties for $f$, e.g., linearity and monotonicity.

\begin{lemma}\label{lemma:minmax}
Let \(c\) be a real, and for \(i=1,\dots,n\), let \(f_i:\mathbb{R}\to\mathbb{R}\) be real functions.
Then we have the following two statements.
\begin{enumerate}
\item A permutation \(\sigma:[n]\to[n]\) is optimal for 
the maximum total composition ordering problem $((f_i)_{i\in[n]},c)$ if and only if
it is optimal for 
the minimum total composition ordering problem $((\tilde{f}_i)_{i\in[n]},-c)$.

\item A permutation \(\sigma:[n]\to[n]\) and an integer $k$ with  \(0\le k\le n\) form
an optimal solution for 
the maximum partial composition ordering problem $((f_i)_{i\in[n]},c)$
 if and only if they form an optimal solution for 
the minimum partial composition ordering problem $((\tilde{f}_i)_{i\in[n]},-c)$.
\end{enumerate}
\end{lemma}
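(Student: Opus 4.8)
The plan is to exhibit the tilde operation as a sign-reversing involution that turns composition into composition, so that the objective value of any fixed ordering is merely negated when we pass from $((f_i)_{i\in[n]},c)$ to $((\tilde f_i)_{i\in[n]},-c)$. First I would record the elementary identity $\widetilde{g\circ h}=\tilde g\circ\tilde h$ for all $g,h:\mathbb R\to\mathbb R$. Unwinding \eqref{eq-a1}, $(\tilde g\circ\tilde h)(x)=\tilde g\bigl(-h(-x)\bigr)=-g\bigl(h(-x)\bigr)=-(g\circ h)(-x)=\widetilde{g\circ h}(x)$, and note in passing that $\widetilde{\,\cdot\,}$ is indeed a well-defined self-map of $\mathbb R\to\mathbb R$, so iterated composition makes sense, and that $\widetilde{\mathrm{id}}=\mathrm{id}$.

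Next, by a one-line induction on $m$ using this identity (base case $m=0$, the empty composition, which is $\mathrm{id}$), I obtain for every permutation $\sigma:[n]\to[n]$ and every $m$ with $0\le m\le n$
\[
\widetilde{f_{\sigma(m)}\circ\dots\circ f_{\sigma(1)}}=\tilde f_{\sigma(m)}\circ\dots\circ\tilde f_{\sigma(1)},
\]
and hence, evaluating at $-c$ and applying \eqref{eq-a1} once more,
\[
\tilde f_{\sigma(m)}\circ\dots\circ\tilde f_{\sigma(1)}(-c)=-\bigl(f_{\sigma(m)}\circ\dots\circ f_{\sigma(1)}(c)\bigr).
\]

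For statement (1), specialize to $m=n$: the value attained by $\sigma$ in the minimum total problem $((\tilde f_i)_{i\in[n]},-c)$ is exactly the negative of the value attained by $\sigma$ in the maximum total problem $((f_i)_{i\in[n]},c)$. Since $x\mapsto -x$ is order-reversing, $\sigma$ minimizes the former over all permutations precisely when it maximizes the latter, which is the asserted equivalence of optimal permutations. For statement (2), use the displayed identity for a general $m=k$: a pair $(\sigma,k)$ attains value $v$ in the maximum partial problem for $((f_i)_{i\in[n]},c)$ iff it attains value $-v$ in the minimum partial problem for $((\tilde f_i)_{i\in[n]},-c)$, so the maximum over all pairs $(\sigma,k)$ on one side corresponds to the minimum over all pairs $(\sigma,k)$ on the other, and the optimal solution sets coincide. (One could equivalently avoid the induction by invoking $\widetilde{f_{\sigma(m)}\circ\dots\circ f_{\sigma(1)}}=\tilde f_{\sigma(m)}\circ\dots\circ\tilde f_{\sigma(1)}$ as a direct consequence of associativity of composition, but the inductive phrasing keeps the reduction transparent.)

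There is essentially no obstacle here: the argument is routine once the identity $\widetilde{g\circ h}=\tilde g\circ\tilde h$ is observed. The only points requiring a modicum of care are the bookkeeping of the base case — $m=0$ (and $k=0$ in the partial version), where both sides collapse to $-c=-(c)$ — and the verification that $\widetilde{\,\cdot\,}$ preserves the properties (linearity, monotonicity) that will be needed when this reduction is applied later, which is immediate from \eqref{eq-a1} since $\tilde f(x)=-f(-x)$ is an affine reparametrization of $f$.
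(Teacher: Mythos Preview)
Your proof is correct and follows essentially the same approach as the paper: both establish the identity $\tilde f_{\sigma(k)}\circ\dots\circ\tilde f_{\sigma(1)}(-c)=-\bigl(f_{\sigma(k)}\circ\dots\circ f_{\sigma(1)}(c)\bigr)$ and then observe that negation reverses order, so maximizers become minimizers. The paper compresses this into a single displayed equation without the intermediate lemma $\widetilde{g\circ h}=\tilde g\circ\tilde h$ or the induction, but the content is identical.
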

\begin{proof}
For any permutation \(\sigma:[n]\to[n]\) and an integer $k$ with \(0\le k\le n\), we have
\begin{align*}
f_{\sigma(k)}\circ f_{\sigma(k-1)}\circ\dots\circ f_{\sigma(1)}(c)
&=-\tilde{f}_{\sigma(k)}\circ \tilde{f}_{\sigma(k-1)}\circ\dots\circ \tilde{f}_{\sigma(1)}(-c),
\end{align*}
which proves  the lemma. 
\end{proof}

\noindent
Due to the lemma, this paper deals with the maximum composition ordering problems only. 

We next show the relationships between total and partial compositions.  
For a function $f:\mathbb{R} \to \mathbb{R}$, define a function $\overline{f}:\mathbb{R} \to \mathbb{R}$
by 
\begin{equation}
\label{eq-a2}
\overline{f}(x):=\max\{f_i(x),x\}.
\end{equation}

\begin{lemma}\label{lemma:partialtotal}
Let \(c\) be a real, and for \(i=1,\dots,n\), let \(f_i:\mathbb{R}\to\mathbb{R}\) be real functions.
Then the objective value of the maximum partial composition ordering problem $((f_i)_{i\in[n]},c)$ 
  is equal to the one of the maximum total composition ordering problem $((\overline{f}_i)_{i\in[n]},c)$.  
Moreover, we have the following relationships for the optimal solutions. 
\begin{enumerate}
\item 
If a permutation \(\sigma:[n]\to[n]\) and an integer $k$ with \(0\le k\le n\) form an optimal solution for the maximum partial composition ordering problem $((f_i)_{i\in[n]},c)$, then $\sigma$ is optimal for the maximum total composition ordering problem $((\overline{f}_i)_{i\in[n]},c)$. 
\item Let \(\sigma:[n]\to[n]\) denote an optimal permutation for 
the maximum total composition ordering problem $((\overline{f}_i)_{i\in[n]},c)$. 
Then let $k$ denote the number of $i$'s such that 
\begin{equation}
\label{eq-a3}
\overline{f}_{\sigma(i)}\circ \dots \circ \overline{f}_{\sigma(1)}(c)
>\overline{f}_{\sigma(i-1)}\circ \dots \circ \overline{f}_{\sigma(1)}(c), 
\end{equation}
and $\tau : [n]\to[n]$ denote a permutation such that $\tau(j)$ $(j \leq k)$ is equal to the $j$th $\sigma(i)$ that satisfies {\rm (\ref{eq-a3})}.  
Then $(\tau, k)$
is optimal for the maximum partial composition ordering problem $((f_i)_{i\in[n]},c)$. 
%\begin{align*}
%\{\tau(1),\dots,\tau(k)\}=
%\{\sigma(i)\mid \overline{f}_{\sigma(i)}\circ \dots \circ \overline{f}_{\sigma(1)}(c)
%>\overline{f}_{\sigma(i-1)}\circ \dots \circ \overline{f}_{\sigma(1)}(c)\}
%%%%\label{eq:partialtotal}
%\end{align*}
%where \(\sigma^{-1}(\tau(1))<\dots< \sigma^{-1}(\tau(k))\).
\end{enumerate}
\end{lemma}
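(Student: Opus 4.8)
The plan is to argue that taking the total composition of the functions $\overline{f}_i = \max\{f_i(x), x\}$ exactly captures the freedom of stopping early, because applying $\overline{f}_i$ to the current value $y$ either genuinely advances the value (when $f_i(y) > y$, i.e.\ it behaves like $f_i$) or leaves it untouched (when $f_i(y) \le y$, i.e.\ it behaves like the identity). First I would record the elementary monotonicity fact that for any value $y$ and any function $g$, $\overline{g}(y) \ge y$; hence along any permutation $\sigma$ the sequence $y_0 = c$, $y_i = \overline{f}_{\sigma(i)}(y_{i-1})$ is nondecreasing. This already shows the total-composition value $\overline{f}_{\sigma(n)}\circ\cdots\circ\overline{f}_{\sigma(1)}(c)$ equals $\max_{0\le k\le n} y_k$, and in fact equals $y_n$.

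For the inequality ``$\mathrm{OPT}_{\text{partial}} \le \mathrm{OPT}_{\text{total}}(\overline{f})$'', given any partial solution $(\sigma, k)$ for $((f_i)_{i\in[n]}, c)$, I would extend $\sigma$ arbitrarily to a full permutation and show by induction on $i$ that $\overline{f}_{\sigma(i)}\circ\cdots\circ\overline{f}_{\sigma(1)}(c) \ge f_{\sigma(i)}\circ\cdots\circ f_{\sigma(1)}(c)$ for $i \le k$: the induction step uses that $\overline{f}_{\sigma(i)}$ dominates $f_{\sigma(i)}$ pointwise together with monotonicity is \emph{not} available in general (the $f_i$ need not be monotone), so instead I would be careful and only claim the weaker chain — actually the cleanest route is: $\overline{f}_{\sigma(i)}\circ\cdots\circ\overline{f}_{\sigma(1)}(c) \ge$ the value obtained by applying $f_{\sigma(1)},\dots,f_{\sigma(k)}$ and then identity, \emph{provided} one applies the $\overline{f}$'s in an order where the ``real'' applications come first; since we are free to reorder, pick the permutation that runs $\sigma(1),\dots,\sigma(k)$ first. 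Then for $j \le k$ one still needs $\overline{f}_{\sigma(j)}(y) \ge f_{\sigma(j)}(y)$, which holds pointwise by definition, but to push it through the composition one needs monotonicity of the intermediate $\overline{f}$'s — and indeed each $\overline{f}_i$, being a max of $f_i$ with the (monotone) identity, need not be monotone either. The right fix, and the key step, is to observe that we do \emph{not} need pointwise domination through the composition: instead, by \eqref{eq-a3}-type bookkeeping, the total-composition process on $\overline{f}$ simply \emph{is} a partial composition on the $f_i$'s restricted to the indices where a strict increase occurs, so the two optimum values coincide by a direct two-sided comparison rather than an inequality chase.

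Concretely, for statement~2 I would take an optimal $\sigma$ for $((\overline{f}_i)_{i\in[n]}, c)$, let $S = \{i : \eqref{eq-a3}\text{ holds}\}$ with $|S| = k$, and note that for $i \notin S$ the step is an identity on the running value, so deleting those steps does not change the final value; moreover on the retained steps $\overline{f}_{\sigma(i)}$ acts exactly as $f_{\sigma(i)}$ (since there the max is attained by the $f_i$ branch, strictly). Hence $f_{\tau(k)}\circ\cdots\circ f_{\tau(1)}(c) = \overline{f}_{\sigma(n)}\circ\cdots\circ\overline{f}_{\sigma(1)}(c) \ge \mathrm{OPT}_{\text{total}}(\overline{f}) \ge \mathrm{OPT}_{\text{partial}}(f)$, the last inequality because any partial solution $(\rho, \ell)$ for $f$ yields, via the same ``prepend the real steps, pad with identity'' argument, a total ordering for $\overline{f}$ of at least that value. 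For statement~1, given an optimal partial $(\sigma, k)$ for $f$, I would argue its total-$\overline{f}$ value under $\sigma$ is at least $f_{\sigma(k)}\circ\cdots\circ f_{\sigma(1)}(c) = \mathrm{OPT}_{\text{partial}}(f) = \mathrm{OPT}_{\text{total}}(\overline{f})$, forcing $\sigma$ to be total-optimal.

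The main obstacle I anticipate is precisely the non-monotonicity of the $f_i$ (and hence $\overline{f}_i$): the naive ``pointwise domination plus monotonicity'' propagation through a composition is invalid, so the proof must route around it by exploiting that $\overline{f}_i$ equals \emph{either} $f_i$ \emph{or} the identity \emph{at each evaluated point}, and by using the freedom to permute so that all nontrivial applications are performed consecutively at the front. Once that structural observation is in place, both directions and both parts follow by short, essentially bookkeeping, arguments.
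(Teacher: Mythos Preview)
Your overall architecture matches the paper's: one direction via the extraction of strict-increase steps (your statement~2 argument, which is the paper's equality~(\ref{eq-qe2})), and the other direction via ``$\overline f_i\ge f_i$ and $\overline f_i\ge \mathrm{id}$'' (the paper's chain~(\ref{eq-qe1})). The extraction direction is fine in both writeups and needs no monotonicity: whenever $\overline f_{\sigma(i)}(y)>y$ one has $\overline f_{\sigma(i)}(y)=f_{\sigma(i)}(y)$, and the non-increasing steps can be deleted.

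The gap is in the \emph{other} direction, and your proposed workaround does not close it. You are right that ``$\overline f_i\ge f_i$ pointwise'' does not propagate through a composition without monotonicity. But your fix --- ``prepend the real steps, pad with identity'' --- has the same defect: applying $\overline f_{\rho(j)}$ at step $j$ gives $\max\{f_{\rho(j)}(y_{j-1}),\,y_{j-1}\}$, which equals $f_{\rho(j)}(y_{j-1})$ only when $f_{\rho(j)}(y_{j-1})\ge y_{j-1}$. An optimal partial solution may contain a step that \emph{decreases} the running value (and later recovers), and at such a step $\overline f$ simply refuses to follow $f$. Concretely, take $n=2$, $c=0$, $f_1(x)=-1$, $f_2(x)=-1000x$. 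Then the partial optimum is $f_2\circ f_1(0)=f_2(-1)=1000$, while $\overline f_1(0)=\max\{-1,0\}=0$ and $\overline f_2(0)=\max\{0,0\}=0$, so both total orderings of $(\overline f_1,\overline f_2)$ give $0$. Thus the two optima differ, statement~(b) fails, and the lemma is false for arbitrary real $f_i$; no amount of bookkeeping will save the direction $\mathrm{OPT}_{\text{partial}}(f)\le \mathrm{OPT}_{\text{total}}(\overline f)$ in that generality.

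The resolution is simply that the paper lives in the monotone nondecreasing world (every later use of Lemma~\ref{lemma:partialtotal} is for such $f_i$), and under that hypothesis each $\overline f_i=\max\{f_i,\mathrm{id}\}$ is itself monotone nondecreasing. Then the very argument you first wrote down and then abandoned is correct: by induction, $\overline f_{\sigma(j)}\!\circ\cdots\circ\overline f_{\sigma(1)}(c)\ge \overline f_{\sigma(j)}\bigl(f_{\sigma(j-1)}\!\circ\cdots\circ f_{\sigma(1)}(c)\bigr)\ge f_{\sigma(j)}\!\circ\cdots\circ f_{\sigma(1)}(c)$, using monotonicity of $\overline f_{\sigma(j)}$ for the first step and $\overline f_{\sigma(j)}\ge f_{\sigma(j)}$ for the second. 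That, together with $\overline f_i\ge\mathrm{id}$, is exactly the paper's one-line chain~(\ref{eq-qe1}), and statements~(a) and~(b) then follow from~(\ref{eq-qe1}) and~(\ref{eq-qe2}) as you outline. So: add the monotonicity hypothesis to the lemma and use the short domination argument; do not try to route around it.
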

\begin{proof}
Let \(\sigma:[n]\to[n]\) be a permutation and \(k\) be a nonnegative integer. 
Then we have 
\begin{align}
\label{eq-qe1}
f_{\sigma(k)}\circ \dots \circ f_{\sigma(1)}(c)
\le \overline{f}_{\sigma(k)}\circ \dots \circ \overline{f}_{\sigma(1)}(c)
\le \overline{f}_{\sigma(n)}\circ \dots \circ \overline{f}_{\sigma(1)}(c)
\end{align}
by \(\overline{f}(x)\ge f(x)\) and \(\overline{f}(x)\ge x\). 
This implies that  the objective value of the maximum partial composition ordering problem $((f_i)_{i\in[n]},c)$ 
 is at most the one of the maximum total composition ordering problem $((\overline{f}_i)_{i\in[n]},c)$.  

On the other hand, for a  permutation \(\sigma:[n]\to[n]\), let  \(\tau\) and \(k\) be defined as the statement in  the lemma. 
Then we have 
\begin{align}
\label{eq-qe2}
f_{\tau(k)}\circ \dots \circ f_{\tau(1)}(c)
= \overline{f}_{\tau(k)}\circ \dots \circ \overline{f}_{\tau(1)}(c)
= \overline{f}_{\sigma(n)}\circ \dots \circ \overline{f}_{\sigma(1)}(c)
\end{align}
by the definition of \(\tau\), which implies that 
the objective value of the maximum partial composition ordering problem $((f_i)_{i\in[n]},c)$ 
  is at least the one of the maximum total composition ordering problem $((\overline{f}_i)_{i\in[n]},c)$.  
Therefore, the objective values of the two problems are same. 

Moreover, this together with (\ref{eq-qe1}) and (\ref{eq-qe2}) implies $(a)$ and $(b)$ in the lemma. 
\end{proof}

From Lemmas \ref{lemma:minmax} and \ref{lemma:partialtotal}, 
it is enough to consider the maximum total composition ordering problem. 
However, the properties of  the functions $f_i$ are not always inherited. 
For example, 
the partial composition ordering problem for the linear functions does not correspond to the total one for the linear functions.

\section{Maximum Partial Composition Ordering Problem}\label{sec:maxpcop}
In this section, we discuss tractable results for  
the maximum partial composition ordering problem for monotone and almost-linear functions. 
By  Lemma~\ref{lemma:partialtotal},  
we deal with the problem as the maximum total composition ordering problem 
for functions $\overline{f}_i$ $(i \in [n])$, 
where \(\overline{f}_i(x)=\max\{f_i(x),x\}\). 
%It is easy to see that
Recall that the objective value of the maximum partial composition ordering problem $((f_i)_{i\in[n]},c)$
is equal to the one of the maximum total composition ordering problem $((\overline{f}_i)_{i\in[n]},c)$.  
Let us start with the maximum partial composition ordering problem for monotone linear functions \(f_i(x)=a_ix+b_i \, (a_i\ge 0)\), i.e., the total composition ordering problem for $\overline{f}_i(x)=\max \{a_ix+b_i,x\}$ $ (a_i\ge 0)$.

The following binary relation $\preceq$ plays an important role for the problem. 
\begin{definition}\label{def:preceq}
For two functions \(f,g:\mathbb{R}\to\mathbb{R}\), 
we write \(f\preceq g\) $($or  \(g\succeq f\)$)$ if \(f\circ g(x)\le g\circ f(x)\)  for any \(x \in \mathbb{R}\), 
 \(f\simeq g\) if  \(f\preceq g\) and  \(f\succeq g\)  $($i.e., \(f\circ g(x)=g\circ f(x)\) for any  \(x \in \mathbb{R}\)$)$, and 
 \(f\prec g\) $($or \(g\succ f\)$)$ if \(f\preceq g\) and \(f\not\simeq g\).
\end{definition}

Note that the relation \(\preceq\) is not total relation in general, here a relation $\preceq$ is called total if $f\preceq g$ or $g\preceq f$ for any $f,g$.
For example, let \(f_1(x)=\max\{2x,3x\}\) and \(f_2(x)=\max\{2x-1,3x+1\}\).
Then $f_1\circ f_2(0) \,(=3)$ is greater than $f_2\circ f_1(0)\,(=1)$, but $f_1\circ f_2(-2)\,(=-10)$ is less than $f_2\circ f_1(-2)\,(=-9)$.

However, if two consecutive functions are total, then we have the following easy but useful lemma.
\begin{lemma}\label{lemma:localchange}
Let \(f_1,\dots,f_n\) be monotone nondecreasing functions.
If \(f_{i}\preceq f_{i+1}\), then it holds that
\(f_n\circ \dots\circ f_{i+2}\circ f_{i+1}\circ f_i\circ f_{i-1}\circ \dots \circ f_1(x)
\ge f_n\circ \dots\circ f_{i+2}\circ f_{i}\circ f_{i+1}\circ f_{i-1}\circ \dots \circ f_1(x)\)
for any \(x \in \mathbb{R}\).
\end{lemma}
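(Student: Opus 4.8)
The plan is to reduce the global inequality to a single local swap and then invoke monotonicity to propagate it through the remaining compositions. Write $y := f_{i-1}\circ\dots\circ f_1(x)$, so that both sides agree on the innermost block $f_{i-1}\circ\dots\circ f_1$ and differ only in the order of $f_i$ and $f_{i+1}$ applied to $y$. The hypothesis $f_i\preceq f_{i+1}$ means, by Definition~\ref{def:preceq}, that $f_i\circ f_{i+1}(z)\le f_{i+1}\circ f_i(z)$ for every $z\in\mathbb{R}$; applying this at $z=y$ gives
\begin{align*}
f_{i}\circ f_{i+1}\bigl(y\bigr)\le f_{i+1}\circ f_{i}\bigl(y\bigr).
\end{align*}

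Next I would feed both sides into the common tail $g := f_n\circ\dots\circ f_{i+2}$. Since each of $f_{i+2},\dots,f_n$ is monotone nondecreasing, their composition $g$ is monotone nondecreasing, so applying $g$ preserves the inequality:
\begin{align*}
g\bigl(f_{i+1}\circ f_{i}(y)\bigr)\ \ge\ g\bigl(f_{i}\circ f_{i+1}(y)\bigr).
\end{align*}
Unwinding the abbreviations $y$ and $g$, the left-hand side is exactly $f_n\circ\dots\circ f_{i+2}\circ f_{i+1}\circ f_i\circ f_{i-1}\circ\dots\circ f_1(x)$ and the right-hand side is $f_n\circ\dots\circ f_{i+2}\circ f_{i}\circ f_{i+1}\circ f_{i-1}\circ\dots\circ f_1(x)$, which is the claimed inequality.

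There is essentially no obstacle here: the only two ingredients are the definition of $\preceq$ (used once, at the point $y$) and the fact that a composition of monotone nondecreasing functions is monotone nondecreasing. The one point worth stating carefully is that $\preceq$ is required to hold pointwise for \emph{all} reals, which is what lets us instantiate it at the particular value $y$ that depends on $x$; and the tail block $g$ must be handled as a whole rather than one function at a time, though either formulation works since monotonicity composes. If $i+1=n$ the tail $g$ is the identity and if $i=1$ the inner block is the identity; both degenerate cases are covered without change.
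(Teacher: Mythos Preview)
Your proof is correct. The paper does not actually supply a proof of this lemma, describing it as ``easy but useful'' and stating it without argument; your write-up is precisely the natural justification the authors have in mind, using the pointwise definition of $\preceq$ at the inner value $y$ and then the monotonicity of the outer tail $g=f_n\circ\dots\circ f_{i+2}$ to propagate the inequality.
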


It follows from the lemma that, for monotone functions $f_i$, there exists a maximum total composition $f_n \circ f_{n-1} \circ \dots \circ f_1$ that satisfies
 \(f_{1}\preceq f_{2}\preceq \dots \preceq f_{n}\), if the relation is total. 
Moreover, if the relation $\preceq$ is in addition transitive (i.e., $f \preceq g$ and  $g \preceq h$ imply $f \preceq h$),
then it is not difficult to see that 
 \(f_{1}\preceq f_{2}\preceq \dots \preceq f_{n}\) becomes a sufficient condition that $f_n \circ f_{n-1} \circ \dots \circ f_1$ is  a maximum total composition ordering, 
where the proof is given as the more general form in Lemma \ref{lemma:extotalorder}. 

The relation is total if all functions are linear or  of  the form \(\max\{ax+b,x\}\) with $a \geq 0$.
\begin{lemma}
The relation \(\preceq\) is total for linear functions. 
\end{lemma}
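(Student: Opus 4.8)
The plan is to compute $f \circ g(x) - g \circ f(x)$ explicitly for two linear functions $f(x) = ax+b$ and $g(x) = px+q$ and show that its sign does not depend on $x$. Writing $f\circ g(x) = a(px+q)+b = apx + aq + b$ and $g\circ f(x) = p(ax+b)+q = apx + pb + q$, the leading terms $apx$ cancel, and we are left with the constant
\begin{equation}
\label{eq-lincomm}
f\circ g(x) - g\circ f(x) = (aq+b) - (pb+q) = a q + b - p b - q = q(a-1) - b(p-1).
\end{equation}
Since this expression is independent of $x$, exactly one of the following holds: $f\circ g(x) \le g\circ f(x)$ for all $x$, or $f\circ g(x) \ge g\circ f(x)$ for all $x$ (or both, when the constant is $0$). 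In the first case $f \preceq g$, in the second $g \preceq f$, so the relation $\preceq$ is total on linear functions.

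It is worth recording, for use in the subsequent algorithmic sections, the explicit criterion that falls out of \eqref{eq-lincomm}: when $a > 1$ and $p > 1$ (so $a-1, p-1 > 0$), $f \preceq g$ holds iff $q(a-1) - b(p-1) \le 0$, i.e. iff $\frac{-b}{a-1} \le \frac{-q}{p-1}$ — ordering by the ratio $b_i/(a_i-1)$, which matches the known $b_i/a_i$-type rule for the deterioration model up to the shift. One can similarly read off the mixed cases ($a > 1$, $p < 1$, etc.) from the signs of $a-1$ and $p-1$ in \eqref{eq-lincomm}; these sign analyses are routine and I would not spell them all out here.

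There is essentially no obstacle in this lemma: the entire content is the one-line cancellation in \eqref{eq-lincomm}, and the only thing to be careful about is the trivial observation that a quantity not depending on $x$ has a constant sign. (If one also wanted totality for functions of the form $\overline{f}(x) = \max\{ax+b, x\}$ with $a \ge 0$, as the surrounding text hints, that would be a genuinely more involved case analysis on which of the two linear pieces of each function is active — but that is a separate statement, handled in the following lemma of the paper, not this one.)
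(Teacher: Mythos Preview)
Your proof is correct and is essentially identical to the paper's: both compute the difference $f\circ g(x)-g\circ f(x)$, observe that the $apx$ terms cancel so the difference is the constant $q(a-1)-b(p-1)$ (equivalently $b_i(1-a_j)-b_j(1-a_i)$ in the paper's notation), and conclude totality from the fact that a constant has a fixed sign. Your additional remarks on reading off the ordering criterion from the signs of $a-1$ and $p-1$ anticipate exactly what the paper does next in Lemma~\ref{lemma:gammaprec}.
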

\begin{proof}
Let \(f_i(x)=a_ix+b_i\) and \(f_j(x)=a_jx+b_j\). 
Then we have
\begin{align}
f_i\preceq f_j 
&\iff f_i\circ f_j(x)\le f_j\circ f_i(x)  \mbox{ for any }   x \in \mathbb{R} \notag\\
&\iff a_i(a_jx+b_j)+b_i\le a_j(a_ix+b_i)+b_j  \mbox{ for any }   x \in \mathbb{R} \notag\\
%&\iff a_ib_j+b_i\ge a_jb_i+b_j\notag\\
&\iff b_i(1-a_j)\le b_j(1-a_i). \label{eq:lprec}
\end{align}
Since the last inequality consists of the constants only, we have $f_i\preceq f_j$ or $f_i\succeq f_j$.
\end{proof}

\noindent
The totality of the relation is proven in Lemma \ref{lemma:gammaprec-check}, when all functions are  of  the form \(\max\{ax+b,x\}\) with $a \geq 0$.

We further note that the relation \(\preceq\) is transitive for linear functions $f(x)=ax+b$ with $a>1$, 
since (\ref{eq:lprec}) is equivalent to $b_i/(1-a_i)\le b_j/(1-a_j)$, and hence  the ordering $b_1/(1-a_1)\le b_2/(1-a_2) \le \dots \le  b_n/(1-a_n)$ gives an optimal solution for the maximum total composition ordering problem. 
Therefore, it can be solved efficiently by sorting the elements by $b_i/(1-a_i)$. 
The same statement holds when all linear functions have slope less than $1$.
This idea is used for the linear  deterioration  and   linear shortening models for time-dependent scheduling problems. 
However,  in general, this is not the case, i.e.,   the relation \(\preceq\) does not satisfy transitivity. 
Let \(f_1(x)=2x+1\), \(f_2(x)=2x-1\), and \(f_3(x)=x/2\). 
Then we have \(f_1\prec f_2\), \(f_2\prec f_3\), and \(f_3\prec f_1\), 
which implies that the transitivity is not satisfied for linear functions, 
and \(\overline{f}_1\prec \overline{f}_2\), \(\overline{f}_2\prec \overline{f}_3\), and \(\overline{f}_3\prec \overline{f}_1\) hold, implying that 
the transitivity is not satisfied for the functions of the form \(\max\{ax+b,x\}\) with $a \geq 0$. 
% where we recall that $\overline{f}=\max\{f,x\}$. 
These show that the maximum total and partial composition ordering problems are not trivial, even when all functions are monotone and linear.
%In this section, we show that the idea can be generalized to the maximum  partial composition ordering problem for  monotone linear functions.

% we skip the antisymmetry .
%For example, let \(f_1(x)=4x\) and \(f_2(x)=x/2\) (resp. \(\overline{f}_1(x)=\max\{4x,x\}\) and \(\overline{f}_2=\max\{x/2,x\}\)).
%Then we have \(f_1\circ f_2(x)=f_2\circ f_1(x)=2x\) (resp. \(\overline{f}_1\circ \overline{f}_2(x)=\overline{f}_2\circ \overline{f}_1(x)=\max\{x/2,4x\}\)).

%%%%%%%%%%%%%%%%%%%%%%%%%%%%%%%%%%%%%%%%%%%%%%%%%%%%%%%%%%%%%%%%%%%%%%%%%%%%%%%%%%%%%%%%%%%%%%

We first show the following key lemma which can be used even for {\em non-transitive} relations. 
\begin{lemma}\label{lemma:extotalorder}
For monotone nondecreasing  functions $f_i:\mathbb{R} \to \mathbb{R}$ \((i \in [n])\), 
%let \(\preceq\) be the binary relation on \(\{f_1,\dots,f_n\}\)
%defined in Definition \ref{def:preceq}.
if a permutation \(\sigma: [n] \to [n] \) satisfies that $i \le j $  implies $f_{\sigma(i)}\preceq f_{\sigma(j)}$ for any \(i,j\in[n]\),
then \(\sigma\) is an optimal solution for the maximum total composition ordering problem
for \(((f_i)_{i\in[n]},c)\).
%% If there exists a total order\footnote{Total order is a binary relation which is transitive, antisymmetric, and total.} \(\sqsubseteq\) on \(\{f_1,\dots,f_n\}\)
%% such that \(f_i\sqsubseteq f_j \To f_i\preceq f_j\),
%% then the ascending order \(\sigma^*\) accodring to \(\sqsbseteq\)
%% is the optimal solution for the maximum total composition ordering problem
%% for \(f_i\) (\(i\in[n]\)) and \(c\).
\end{lemma}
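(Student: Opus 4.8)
The plan is to show that any permutation $\sigma$ with $f_{\sigma(1)} \preceq f_{\sigma(2)} \preceq \dots \preceq f_{\sigma(n)}$ achieves at least the value of an arbitrary competing permutation $\pi$, by transforming $\pi$ into $\sigma$ one adjacent transposition at a time, never decreasing the objective. Without loss of generality I would relabel the functions so that $\sigma$ is the identity, i.e.\ the hypothesis becomes $i \le j \Rightarrow f_i \preceq f_j$, and the claim is that $f_n \circ \dots \circ f_1(c)$ is maximal among all orderings. The key observation, which must be proved carefully, is that if $\pi$ is not already the identity then there exists an index $t$ with $\pi(t) > \pi(t+1)$; such an \emph{inversion} of $\pi$ corresponds, via the relabelling, to two adjacent functions $f_{\pi(t)}$ and $f_{\pi(t+1)}$ in the composition $f_{\pi(n)} \circ \dots \circ f_{\pi(1)}$ with $f_{\pi(t+1)} \preceq f_{\pi(t)}$.

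The main engine is Lemma~\ref{lemma:localchange}: swapping those two adjacent functions to put the $\preceq$-smaller one first does not decrease the value of the whole composition evaluated at $c$, because all the $f_i$ are monotone nondecreasing and so the outer functions $f_{\pi(n)} \circ \dots \circ f_{\pi(t+2)}$ preserve the inequality $f_{\pi(t+1)} \circ f_{\pi(t)}(y) \le f_{\pi(t)} \circ f_{\pi(t+1)}(y)$ at the point $y = f_{\pi(t-1)} \circ \dots \circ f_{\pi(1)}(c)$. Each such swap strictly decreases the number of inversions of $\pi$ (as a permutation of $[n]$), so after finitely many swaps we reach the identity permutation, i.e.\ the ordering $\sigma$, with the objective value having only gone up along the way. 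Hence $f_n \circ \dots \circ f_1(c) \ge f_{\pi(n)} \circ \dots \circ f_{\pi(1)}(c)$ for every $\pi$, which is exactly optimality of $\sigma$. Note this argument uses only that $\preceq$ is defined by $f \circ g \le g \circ f$ pointwise; it does \emph{not} require transitivity or totality of $\preceq$, since we only ever compare the two functions sitting at an adjacent inversion.

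The one point that needs genuine care is the bookkeeping: I must be sure that the adjacent transposition I apply to the \emph{composition} corresponds to an adjacent transposition of the \emph{permutation} that genuinely reduces the inversion count, and that the hypothesis $f_{\sigma(i)} \preceq f_{\sigma(j)}$ for $i \le j$ translates correctly after relabelling into ``$f_a \preceq f_b$ whenever $a \le b$.'' Concretely, if $\pi$ has an inversion at position $t$, meaning $\pi(t) > \pi(t+1)$, then in the composition the function applied $(t)$-th from the inside is $f_{\pi(t)}$ and the $(t+1)$-th is $f_{\pi(t+1)}$, and since $\pi(t+1) < \pi(t)$ the hypothesis gives $f_{\pi(t+1)} \preceq f_{\pi(t)}$, so Lemma~\ref{lemma:localchange} (with the roles of $f_i, f_{i+1}$ played by $f_{\pi(t+1)}, f_{\pi(t)}$) applies and the swap yields the permutation $\pi' = \pi \circ (t\ t{+}1)$ with one fewer inversion and value at least as large. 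Since a permutation with zero inversions is the identity, induction on the inversion count closes the argument; the base case is vacuous. I would also remark that this establishes the sufficiency direction mentioned in the text, generalizing the transitive case, and that it is consistent with Lemma~\ref{lemma:partialtotal} so the same statement transfers to the partial setting via $\overline{f}_i$.
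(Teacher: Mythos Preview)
Your proposal is correct and follows essentially the same route as the paper: relabel so that $\sigma$ is the identity, then use adjacent swaps at inversions together with Lemma~\ref{lemma:localchange} to reduce the inversion count without decreasing the objective. The only cosmetic difference is that the paper phrases this as an extremal argument (take an optimal permutation with minimum inversion number and derive a contradiction if it is not the identity), whereas you bubble-sort an arbitrary $\pi$ up to the identity directly; the underlying mechanism is identical.
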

\begin{proof}
Without loss of generality,
we may assume that \(\sigma\) is the identity permutation.
Let \(\sigma'\) be an optimal solution for the maximum total composition ordering problem such that it has the minimum inversion number.
Here, the inversion number denotes the number of pairs $(i,j)$ with $i<j$ and $\sigma'(i)>\sigma'(j)$. 
Then we show that \(\sigma'\) is the identity permutation by contradiction.
Assume that \(\sigma'(l)>\sigma'(l+1)\) for some $l$. 
Then consider the following permutation:
\begin{align*}
\tau(i)=\begin{cases}
\sigma'(i) &(i\ne l,~l+1),\\
\sigma'(l+1) & (i=l),\\
\sigma'(l)   & (i=l+1). 
\end{cases}
\end{align*}
Since \(\sigma'(l+1)< \sigma'(l)\) implies \(f_{\sigma'(l+1)}\preceq f_{\sigma'(l)}\) by the condition of the identity $\sigma$, 
Lemma \ref{lemma:localchange} implies that $\tau$
is also optimal for the problem. 
Since \(\tau\) has an inversion number smaller than the one for $\sigma'$,  
we derive a contradiction. 
Therefore, \(\sigma'\) is the identity. 
\end{proof}
%Note that the condition of $\sigma$ in the lemma implies the totality of $\preceq$. 
As mentioned above, if the relation \(\preceq\) is in addition transitive (i.e., \(\preceq\) is a total preorder), then 
such a $\sigma$ always exists. 

To efficiently solve the maximum partial composition ordering problem for the linear functions, we show that 
for $\overline{f}_i(x)=\max \{a_ix+b_i,x\}$ $ (a_i\ge 0)$, 
(i) there exists a permutation \(\sigma\) which satisfies the condition in Lemma \ref{lemma:extotalorder} and
(ii) the permutation $\sigma$ can be computed efficiently. 
We analyze the relation $\preceq$ in terms of the following $\gamma$ and $\delta$,
and provide an efficient algorithm. 
\begin{definition}
For a linear function \(f(x)=ax+b\), we define 
\begin{align*}
\gamma(f)=
\begin{cases}
\frac{b}{1-a}&(a\ne 1),\\
+\infty&(a=1~\text{and}~b<0),\\
-\infty&(a=1~\text{and}~b\ge 0)
\end{cases}
\quad\text{and}\quad
\delta(f)=
\begin{cases}
+1&(a\ge 1),\\
-1&(a<1).
\end{cases}
\end{align*}
\end{definition}
\noindent Note that \(\gamma(f)\) is the solution of the equation \(f(x)=x\) if  $\gamma(f)\not= -\infty, +\infty$. 
In the rest of the paper, we assume without loss of generality that no $f_i$ is identity (i.e., \(f_i(x)=x\)), since we can ignore identity function for both the total and partial composition problems.

Let $\sigma:[n]\to [n]$ denote a permutation that is {\em compatible} with the lexicographic ordering with respect to $(\delta(f_i), \gamma(f_i))$, i.e., 
 $(\delta(f_{\sigma(i)}), \gamma(f_{\sigma(i)}))$ is lexicographically smaller than or equal to $(\delta(f_{\sigma(j)}), \gamma(f_{\sigma(j)}))$ if $i<j$.
Namely, there exists an integer $k$ such that $0\leq k \leq n$,  
\(\delta(f_{\sigma(1)})=\dots=\delta(f_{\sigma(k)})=-1\),
\(\delta(f_{\sigma(k+1)})=\dots=\delta(f_{\sigma(n)})=+1\),
 \(\gamma(f_{\sigma(1)})\le\dots\le\gamma(f_{\sigma(k)})\),  and 
\(\gamma(f_{\sigma(k+1)})\le\dots\le\gamma(f_{\sigma(n)})\).

We prove that the lexicographic order satisfies the condition in Lemma \ref{lemma:extotalorder} and
thus, the order is the optimal solution.
\begin{lemma}
\label{lemma-ee}
For monotone nondecreasing linear functions $f_i$ $(i \in [n])$, 
let $\sigma$ denote a permutation compatible with the lexicographic order with respect to $(\delta(f_i), \gamma(f_i))$. 
Then $i \le j$ implies  $\overline{f}_{\sigma(i)}\preceq \overline{f}_{\sigma(j)}$ for any \(i,j\in[n]\).
\end{lemma}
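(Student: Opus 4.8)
The plan is to reduce the claim to a direct verification of the defining inequality $\overline{f}_{\sigma(i)}\circ\overline{f}_{\sigma(j)}(x)\le\overline{f}_{\sigma(j)}\circ\overline{f}_{\sigma(i)}(x)$ for every $x\in\mathbb{R}$, handling the various cases according to which linear piece is active. First I would reduce to pairs: since the relation in question is between $\overline{f}_{\sigma(i)}$ and $\overline{f}_{\sigma(j)}$ only, it suffices to show that for any two (non-identity) monotone nondecreasing linear functions $f$ and $g$ with $(\delta(f),\gamma(f))$ lexicographically at most $(\delta(g),\gamma(g))$, we have $\overline{f}\preceq\overline{g}$. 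I would split into the three regimes dictated by the lexicographic order: (1) $\delta(f)=\delta(g)=-1$, i.e. both slopes $<1$, with $\gamma(f)\le\gamma(g)$; (2) $\delta(f)=-1$, $\delta(g)=+1$, i.e. slope of $f$ is $<1$ and slope of $g$ is $\ge1$, with no constraint on the $\gamma$'s; and (3) $\delta(f)=\delta(g)=+1$, i.e. both slopes $\ge1$, with $\gamma(f)\le\gamma(g)$.

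The geometric picture I would exploit is that $\overline{f}(x)=\max\{f(x),x\}$ is the piecewise linear function that agrees with $x$ on one side of the fixed point $\gamma(f)$ and with $f(x)$ on the other: if the slope of $f$ is $\ge1$ (so $\delta(f)=+1$), then $\overline{f}(x)=x$ for $x\le\gamma(f)$ and $\overline{f}(x)=f(x)\ge x$ for $x\ge\gamma(f)$; if the slope is $<1$ (so $\delta(f)=-1$), then $\overline{f}(x)=f(x)$ for $x\le\gamma(f)$ and $\overline{f}(x)=x$ for $x\ge\gamma(f)$ — here one must also note that with slope $<1$ and $f$ not identity we indeed have $f(x)\ge x$ exactly to the left of $\gamma(f)$, and that the degenerate slope-$1$ case is absorbed correctly by the $\pm\infty$ convention for $\gamma$. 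With this description in hand, in case (2) the key observation is that $\overline{f}$ is "expanding toward $\gamma(f)$ from the left" while $\overline{g}$ is "expanding away from $\gamma(g)$ to the right", and one checks that $\overline{g}\circ\overline{f}(x)\ge\overline{f}\circ\overline{g}(x)$ by tracking whether each intermediate value lies left or right of the relevant fixed point; intuitively, $\overline{f}$ only moves points up toward its plateau and $\overline{g}$ only moves points up above its plateau, so applying $\overline{g}$ last can only help. Cases (1) and (3) are the "same $\delta$" cases and there I would expect the computation to reduce, after using the piecewise description, to the scalar inequality that already appeared for plain linear functions, namely a comparison of the fixed points $\gamma(f)\le\gamma(g)$ together with the common sign of $1-a$; concretely, on the region where both functions act linearly the inequality $\overline{f}\circ\overline{g}(x)\le\overline{g}\circ\overline{f}(x)$ becomes equivalent to $b_f(1-a_g)\le b_g(1-a_f)$, which rearranges (dividing by the same-signed quantity $(1-a_f)(1-a_g)$) to $\gamma(f)\le\gamma(g)$ when $\delta(f)=\delta(g)$, and on the regions where at least one function is acting as the identity the inequality is immediate.

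I expect the main obstacle to be the bookkeeping in case (2) and, more generally, the case analysis on where the intermediate values $\overline{f}(x)$ and $\overline{g}(x)$ fall relative to $\gamma(f)$ and $\gamma(g)$: each of $\overline{f}\circ\overline{g}$ and $\overline{g}\circ\overline{f}$ is piecewise linear with up to three pieces, so a naive expansion produces many subcases. The way I would keep this manageable is to use monotonicity of both $\overline{f}$ and $\overline{g}$ together with the two structural facts $\overline{f}(x)\ge x$ and (for the side-by-side comparison) the fact that a point already at or past a plateau stays fixed; this lets me argue "without loss of generality $x$ lies in such-and-such region" and collapse most subcases. A clean uniform route, if it works, is to verify the inequality separately at the breakpoints and on each unbounded linear piece and invoke piecewise-linearity and continuity — since both composites are continuous piecewise linear, $\overline{f}\circ\overline{g}\le\overline{g}\circ\overline{f}$ everywhere once it holds on each maximal linear piece, and on each such piece it is a single affine inequality reducible to the scalar conditions above. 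Finally, having established $\overline{f}\preceq\overline{g}$ for every such pair, the lemma follows immediately because for $i\le j$ the pair $(f_{\sigma(i)},f_{\sigma(j)})$ satisfies the lexicographic hypothesis by construction of $\sigma$.
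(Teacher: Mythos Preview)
Your proposal is correct and follows essentially the same route as the paper: the paper proves the lemma via an auxiliary result (Lemma~\ref{lemma:gammaprec-check}) that establishes $\overline{f}_i\preceq\overline{f}_j$, $\overline{f}_i\simeq\overline{f}_j$, or $\overline{f}_i\succeq\overline{f}_j$ in four cases depending on the signs of $a_i-1,\,a_j-1$ and the inequality $\gamma(f_i)\le\gamma(f_j)$, and each case is handled exactly as you describe---by the piecewise description of $\overline{f}$ and a sub-case analysis on where $x$ (and the intermediate value) lies relative to the fixed points, reducing on the fully-linear region to the scalar inequality from Lemma~\ref{lemma:gammaprec}. The only organizational difference is that the paper splits your mixed case~(2) into two sub-cases according to whether $\gamma(f)\le\gamma(g)$ or $\gamma(f)>\gamma(g)$ (its cases~(c) and~(d)), which makes the ``applying $\overline{g}$ last can only help'' intuition precise; your ``immediate'' claim for the identity-piece regions in cases~(1) and~(3) does require the one-line check $\overline{g}(y)\ge y$, which the paper makes explicit.
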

Before proving Lemma \ref{lemma-ee}, we discuss algorithms for the maximum partial composition ordering problem.

\subsection{Algorithms}
By Lemma \ref{lemma-ee},
%the maximum partial composition ordering problem for the monotone nondecreasing linear functions $f_i$, equivalently, 
the maximum total composition ordering problem $((\overline{f}_i)_{i\in[n],c})$ such that $f_i$'s are monotone nondecreasing linear functions can be solved by computing the lexicographic order with respect to $(\delta(f_i), \gamma(f_i))$. 
Therefore, it can be solved in \(\O(n\log n)\) time.
This is our algorithm for the partial composition part of Theorem \ref{theorem:partial-total}.
We remark that the time complexity \(\O(n\log n)\) of the problem is the best possible  in the comparison model. 
We also remark that the optimal value for the maximum partial composition ordering problem for \(f_i(x)=a_ix+b_i\) \((a_i\ge 0)\) forms a piecewise linear function (in $c$) with at most \((n+1)\) pieces.

\medskip
Next, for $i \in [n]$,
let \(f_i(x)=a_ix+b_i\) be a monotone nondecreasing linear function
and let $h_i(x)=\max\{f_i(x),c_i\}$ for some constant $c_i$. 
We give an efficient algorithm for
the maximum partial composition ordering problem $((h_i)_{i\in[n]},c)$,
which is the tractability result of Theorem \ref{theorem:pconst}. 
As mentioned in the introduction, 
the problem includes 
the two-valued free-order secretary problem, 
and it is a generalization of the maximum partial composition ordering problem for monotone linear functions. 

By Lemma \ref{lemma:partialtotal}, 
we instead consider the maximum total composition ordering problem for the functions 
\begin{equation}
\label{eq-111a}
\overline{h}_i(x) =\max\{a_ix+b_i, c_i, x\} \mbox{ for } a_i\in\mathbb{R}_+,~ b_i,c_i \in\mathbb{R},
\end{equation}
where \(\mathbb{R}_+\) is the set of nonnegative real numbers.

\begin{lemma}
\label{lemma-1ab}
Let $c\in\mathbb{R}$, and let $\overline{h}_i$ $(i \in [n])$ be a function defined as $(\ref{eq-111a})$.
Then there exists an optimal solution \(\sigma\)
for the maximum total composition ordering problem $((\overline{h}_i)_{i\in[n]},c)$ 
such that no $i\,(>1)$ satisfies \(\overline{f}_{\sigma(i)}\circ\overline{h}_{\sigma(i-1)}\circ\dots\circ \overline{h}_{\sigma(1)}(c)< c_{\sigma(i)}\),
where $\overline{f}_i(x)=\max\{a_ix+b_i,x\}$. 
\end{lemma}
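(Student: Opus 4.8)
The plan is an exchange argument. Recall that $\overline{h}_i(x)=\max\{\overline{f}_i(x),c_i\}$, that $\overline{f}_i$ and $\overline{h}_i$ are both monotone nondecreasing, and that $\overline{f}_i(x)\ge x$, $\overline{h}_i(x)\ge x$, and $\overline{h}_i(x)\ge c_i$ hold for every $x$. Among all optimal orderings for $((\overline{h}_i)_{i\in[n]},c)$, pick one, say $\sigma$, that minimizes the number of \emph{bad} positions, where position $i$ (with $1<i\le n$) is bad when $\overline{f}_{\sigma(i)}(\overline{h}_{\sigma(i-1)}\circ\dots\circ\overline{h}_{\sigma(1)}(c))<c_{\sigma(i)}$, that is, the constant $c_{\sigma(i)}$ strictly wins at step $i$. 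Writing $y_0=c$ and $y_j=\overline{h}_{\sigma(j)}(y_{j-1})$, the facts above give $y_0\le y_1\le\dots\le y_n$ and $y_j\ge c_{\sigma(j)}$ for all $j$. I would then assume for contradiction that $\sigma$ has a bad position and take $i$ to be the smallest one.

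Since position $i$ is bad we have $y_i=c_{\sigma(i)}$, and because $\overline{f}_{\sigma(i)}(y_{i-1})\ge y_{i-1}$ this yields the key chain $c_{\sigma(j)}\le y_j\le y_{i-1}<c_{\sigma(i)}$ for every $j<i$; in words, $c_{\sigma(i)}$ strictly exceeds every constant and every intermediate value produced before step $i$. Now I would form $\sigma'$ by moving $\sigma(i)$ to the front while keeping the relative order of the remaining functions. Then $y'_1=\overline{h}_{\sigma(i)}(c)\ge c_{\sigma(i)}=y_i$, and since each $\overline{h}$ dominates the identity, every intermediate value of $\sigma'$ at positions $1,\dots,i$ is at least $c_{\sigma(i)}$; in particular the value after step $i$ is at least $y_i$, so monotonicity of the remaining functions gives a final value at least $y_n$, and $\sigma'$ is again optimal. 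Finally I would count bad positions: at positions $2,\dots,i$ of $\sigma'$, which carry $\sigma(1),\dots,\sigma(i-1)$, the incoming value is at least $c_{\sigma(i)}$, strictly larger than the corresponding constant $c_{\sigma(j-1)}$, so $\overline{f}$ already beats the constant and none of these positions is bad; at positions beyond $i$ the incoming values only increased relative to $\sigma$, so every bad position of $\sigma'$ there was already bad in $\sigma$. Thus the bad positions of $\sigma'$ form a proper subset of those of $\sigma$ (position $i$ is lost), contradicting the minimality of $\sigma$, and hence $\sigma$ has no bad position.

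The step I expect to be the main obstacle is the behaviour of positions $2,\dots,i$ in $\sigma'$, especially position $2$: a careless ``push the offending function to position $1$'' move could simply re-create a bad position (for instance $\sigma(1)$, now at position $2$, with its own constant $c_{\sigma(1)}$ suddenly winning), which would relocate rather than remove badness and stall the argument. What makes the move work is exactly the strict inequality $c_{\sigma(i)}>c_{\sigma(j)}$ for all $j<i$ extracted from the key chain: every value entering positions $2,\dots,i$ of $\sigma'$ is at least $c_{\sigma(i)}$ and hence strictly dominates each such constant, so $\overline{f}$ wins and those positions are genuinely clean. Establishing this inequality and checking that it propagates across positions $2,\dots,i$ is the real content; the rest is routine bookkeeping with monotonicity and the above-the-identity properties of $\overline{f}_i$ and $\overline{h}_i$.
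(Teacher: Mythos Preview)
Your argument is correct. The core observation you isolate --- that at a bad position $i$ the running value jumps to $c_{\sigma(i)}$ and this strictly dominates every earlier constant $c_{\sigma(j)}$ --- is exactly the engine of the paper's proof as well, and your verification that positions $2,\dots,i$ of $\sigma'$ are clean is sound.

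The packaging, however, differs. The paper takes the \emph{largest} bad index $i^*$ and performs a single block rotation $(\sigma(i^*),\dots,\sigma(n),\sigma(1),\dots,\sigma(i^*-1))$: because $\overline{h}_{\sigma(i^*)}(c)=c_{\sigma(i^*)}=y_{i^*}$, the tail $\sigma(i^*+1),\dots,\sigma(n)$ sees the same inputs as before (and these positions were already not bad by maximality of $i^*$), while the relocated prefix $\sigma(1),\dots,\sigma(i^*-1)$ now sees inputs $\ge c_{\sigma(i^*)}$, so the new permutation has \emph{no} bad positions in one shot. You instead choose $\sigma$ to minimize the number of bad positions, take the smallest bad index, move only that single function to the front, and argue a strict decrease. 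Your route is a clean descent argument and avoids having to track what happens to the tail after $i^*$; the paper's route is a direct construction that produces the desired permutation immediately. Both rest on the same inequality $c_{\sigma(j)}\le y_{j}\le y_{i-1}<c_{\sigma(i)}$ for $j<i$, so the mathematical content is the same.
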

\begin{proof}
Let $\sigma$ denote an optimal solution for the problem. 
Assume that there exists an index \(i\) that satisfies the condition in the lemma. 
Let $i^*$ denote the largest such $i$. 
Then by the definition of $i^*$, we have $\overline{h}_{\sigma(i^*)}\circ\dots\circ \overline{h}_{\sigma(1)}(c)=\overline{h}_{\sigma(i^*)}(c)=c_{\sigma(i^*)}$. 
It holds that 
 \(c_{\sigma(i)}< c_{\sigma(i^*)}\) for any $i$ with \(0\leq i< i^*\), 
since 
\(c_{\sigma(i)}\le \overline{h}_{\sigma(i)}\circ\dots\circ \overline{h}_{\sigma(1)}(c)\le \overline{h}_{\sigma(i^*-1)}\circ\dots\circ \overline{h}_{\sigma(1)}(c)< c_{\sigma(i^*)}\), 
where $c_{\sigma(0)}=c$ is assumed. 
Thus, we have
\(
\overline{h}_{\sigma(n)}\circ\dots\circ \overline{h}_{\sigma(1)}(c)
= \overline{h}_{\sigma(n)}\circ\dots\circ \overline{h}_{\sigma(i^*)}(c)
\le \overline{h}_{\sigma(i^*-1)}\circ\dots\circ \overline{h}_{\sigma(1)}\circ \overline{h}_{\sigma(n)}\circ\dots\circ \overline{h}_{\sigma(i^*)}(c).
\)
This implies that $(\sigma(i^*), \dots \sigma(n),\sigma(1). \dots , \sigma(i^*-1))$ is also an optimal permutation for the problem. 
Moreover, in the composition according to this permutation,
the constant part of $\bar{h}_i$ $(i\not=i^*)$ is not explicitly used by the definition of $i^*$
 and \(c_{\sigma(i)}< c_{\sigma(i^*)}\) for any $i \,(< i^*)$, which completes the proof. 
\end{proof}

%Proof of Theorem \ref{theorem:pconst}
It follows from Lemma \ref{lemma-1ab} that 
 an optimal solution for the problem can be obtained by solving the following $n+1$ instances of the maximum partial composition ordering problem for monotone nondecreasing linear functions $((f_i)_{i\in[n]},c)$ and $((f_i)_{i\in[n]\setminus\{k\}},c_k)$  for all $k\in[n]$. 

Therefore, we have an \(\O(n^2 \log n)\)-time algorithm by directly applying Theorem \ref{theorem:partial-total} to the problems.
Moreover, we note that
the maximum partial composition ordering problem for monotone nondecreasing linear functions
can be solved in linear time if we know the lexicographic order.
This implies that the problem can be solved in  \(\O(n^2)\) time by first computing the lexicographic order with respect to  $(\delta(f_i), \gamma(f_i))$.  
This is our algorithm for Theorem \ref{theorem:pconst}.

\subsection{The proof of Lemma \ref{lemma-ee}}
In this subsection, we prove Lemma \ref{lemma-ee}.
We first consider the relationship between two linear functions.
The proof can be found in Appendix.
\begin{lemma}\label{lemma:gammaprec}
Let \(f_i(x)=a_ix+b_i\) and \(f_j(x)=a_jx+b_j\) be $($non-identity$)$ monotone nondecreasing functions $($i.e., \((a_i,b_i),(a_j,b_j)\ne (1,0)\), \(a_i,a_j\ge 0\)$)$. 
Then we have the following statements;
\begin{enumerate}
\item if \(a_i,a_j=1\), then \(f_i\simeq f_j\),  \label{lemma:gammaprec11}
\item if \(a_i,a_j\ge 1\) and \(a_i\cdot a_j>1\), then \(f_i\preceq f_j\siff\gamma(f_i)\le \gamma(f_j)\),  \label{lemma:gammaprec++}
\item if \(a_i,a_j<1\), then \(f_i\preceq f_j\siff\gamma(f_i)\le \gamma(f_j)\), \label{lemma:gammaprec--}
\item if \(a_i\ge 1\), \(a_j<1\), then \(f_i\preceq f_j\siff\gamma(f_i)\ge \gamma(f_j)\) and \(f_i\succeq f_j\siff\gamma(f_i)\le \gamma(f_j)\).\label{lemma:gammaprec+-}
%\item if \(a_i< 1\), \(a_j\ge 1\), then \(f_i\preceq f_j\siff\gamma(f_i)\ge \gamma(f_j)\) and \(f_i\simeq f_j\siff\gamma(f_i)=\gamma(f_j)\).\label{lemma:gammaprec-+}
\end{enumerate}
\end{lemma}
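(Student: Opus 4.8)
The starting point is the equivalence (\ref{eq:lprec}): for linear functions $f_i(x)=a_ix+b_i$ and $f_j(x)=a_jx+b_j$ one has $f_i\preceq f_j$ if and only if $b_i(1-a_j)\le b_j(1-a_i)$. My plan is to derive the whole lemma from this one inequality by dividing through by $(1-a_i)(1-a_j)$ and keeping track of the sign of that quantity; the case distinction (\ref{lemma:gammaprec11})--(\ref{lemma:gammaprec+-}) is precisely the distinction according to the signs of $1-a_i$ and $1-a_j$. Throughout, the hypotheses $(a_i,b_i),(a_j,b_j)\ne(1,0)$ and $a_i,a_j\ge0$ are in force.

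For (\ref{lemma:gammaprec11}), $a_i=a_j=1$ makes both sides of $b_i(1-a_j)\le b_j(1-a_i)$ vanish, so both it and its reverse hold and $f_i\simeq f_j$. For the remaining cases I would first dispose of the subcases in which both slopes differ from $1$: then $(1-a_i)(1-a_j)\ne0$ and $\gamma(f_i)=b_i/(1-a_i)$, $\gamma(f_j)=b_j/(1-a_j)$ are finite, and dividing $b_i(1-a_j)\le b_j(1-a_i)$ by $(1-a_i)(1-a_j)$ gives $\gamma(f_i)\le\gamma(f_j)$ when this product is positive (i.e. $a_i,a_j>1$, inside (\ref{lemma:gammaprec++}), or $a_i,a_j<1$, case (\ref{lemma:gammaprec--})) and reverses to $\gamma(f_i)\ge\gamma(f_j)$ when it is negative (i.e. $a_i>1,a_j<1$, inside (\ref{lemma:gammaprec+-})); running the same computation for $f_j\preceq f_i$ yields the second equivalence of (\ref{lemma:gammaprec+-}).

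The remaining work — and the part that genuinely needs care, though it is not deep — is the handful of boundary subcases in which exactly one slope equals $1$, where the corresponding $\gamma$ is $\pm\infty$ by convention and one cannot simply divide. Here I would invoke the non-identity hypothesis ($a_i=1$ forces $b_i\ne0$) and verify each convention by hand. For instance, if $a_i=1$ and $a_j>1$ (still within (\ref{lemma:gammaprec++})), then $b_i(1-a_j)\le b_j(1-a_i)=0$ with $1-a_j<0$ is equivalent to $b_i>0$, which is exactly the case where $\gamma(f_i)=-\infty\le\gamma(f_j)$; and $b_i<0$ gives $\gamma(f_i)=+\infty>\gamma(f_j)$ together with $f_i\not\preceq f_j$, so $f_i\preceq f_j\Leftrightarrow\gamma(f_i)\le\gamma(f_j)$ holds. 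The symmetric subcases ($a_i>1,a_j=1$; and $a_i=1,a_j<1$ together with its $f_j\preceq f_i$ counterpart, inside (\ref{lemma:gammaprec+-})) go the same way: each time I would read off from $b_i(1-a_j)\le b_j(1-a_i)$ the sign of the relevant intercept and match it against the rule that $\gamma=-\infty$ when $b\ge0$ and $\gamma=+\infty$ when $b<0$ at slope $1$. Indeed, that sign rule in the definition of $\gamma$ is chosen precisely so these boundary cases fall in line, which is the one subtlety worth flagging. Collecting all the cases completes the proof.
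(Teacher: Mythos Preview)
Your proposal is correct and follows essentially the same route as the paper's own proof: both start from the equivalence~(\ref{eq:lprec}), divide through by $(1-a_i)(1-a_j)$ when both factors are nonzero while tracking the sign, and then handle the boundary subcases with one slope equal to $1$ by reading off the sign of the nonzero intercept and matching it to the $\pm\infty$ convention for $\gamma$. The paper's appendix proof just writes out each of those boundary chains of equivalences explicitly rather than giving one worked example and appealing to symmetry, but the content is identical.
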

By this lemma, the relation \(\preceq\) is total preorder for the both cases \(a_1,a_2,\dots,a_n\ge 1\) and \(a_1,a_2,\dots,a_n<1\).
Moreover, the permutation \(\sigma:[n]\to [n]\) such that \(\gamma(f_{\sigma(1)})\le\dots\le\gamma(f_{\sigma(n)})\) is optimal for the cases.
This result matches the results 
in the time-dependent scheduling problem
of the linear deterioration model (when \(a_1,a_2,\dots,a_n\ge 1\))
and the linear shortening model (when \(a_1,a_2,\dots,a_n< 1\)).

Next we characterize the relationship between two functions of the form \(\max\{a_ix+b_i,x\}\),
the proof can be found in Appendix.
\begin{lemma}\label{lemma:gammaprec-check}
For $($non-identity$)$ monotone nondecreasing linear functions \(f_i(x)=a_ix+b_i\) and \(f_j(x)=a_jx+b_j\),  
we have the following statements;
\begin{enumerate}
\item if \(a_i, a_j \geq 1\) and \(\gamma(f_i)\le \gamma(f_j)\), then \(\overline{f}_i\preceq \overline{f}_j\),\label{lemma:gammaprec-check++}
\item if \(a_i, a_j < 1\)  and \(\gamma(f_i)\le \gamma(f_j)\), then \(\overline{f}_i\preceq \overline{f}_j\), \label{lemma:gammaprec-check--}
\item if \(a_i <1 \), \(a_j \geq 1\), and \(\gamma(f_i)\le \gamma(f_j)\), then \(\overline{f}_i\simeq \overline{f}_j\),\label{lemma:gammaprec-check-+}
\item if \(a_i \geq 1\), \(a_j<1\), and \(\gamma(f_i)\le \gamma(f_j)\), then \(\overline{f}_i\succeq \overline{f}_j\).\label{lemma:gammaprec-check+-}
\end{enumerate}
\end{lemma}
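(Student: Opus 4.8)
The plan is to reduce everything to the single identity
\[
\overline{f}_i\circ\overline{f}_j(x)=\max\{f_i(f_j(x)),\,f_i(x),\,f_j(x),\,x\}\qquad(x\in\mathbb{R}),
\]
which holds because $f_i$ is nondecreasing, so $f_i(\max\{f_j(x),x\})=\max\{f_i(f_j(x)),f_i(x)\}$, and $\overline{f}_j(x)=\max\{f_j(x),x\}$. The payoff of this identity is that $\overline{f}_i\circ\overline{f}_j(x)$ and $\overline{f}_j\circ\overline{f}_i(x)$ differ only in the single argument $f_i(f_j(x))$ versus $f_j(f_i(x))$ of the maximum. Since $\max\{A,R\}\le\max\{B,R\}$ is equivalent to $A\le\max\{B,R\}$, this yields: $\overline{f}_i\preceq\overline{f}_j$ holds iff $f_i(f_j(x))\le\max\{f_j(f_i(x)),f_i(x),f_j(x),x\}$ for all $x$; in particular, for monotone nondecreasing $f_i,f_j$ we get the clean implication $f_i\preceq f_j\Rightarrow\overline{f}_i\preceq\overline{f}_j$ (the converse fails, which is exactly why the mixed cases below need a separate argument).

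Given this, the ``same-side'' statements \ref{lemma:gammaprec-check++} and \ref{lemma:gammaprec-check--} are immediate: when $a_i,a_j\ge 1$ and $\gamma(f_i)\le\gamma(f_j)$, Lemma~\ref{lemma:gammaprec}\ref{lemma:gammaprec11}--\ref{lemma:gammaprec++} gives $f_i\preceq f_j$ (or $f_i\simeq f_j$), and when $a_i,a_j<1$ and $\gamma(f_i)\le\gamma(f_j)$, Lemma~\ref{lemma:gammaprec}\ref{lemma:gammaprec--} again gives $f_i\preceq f_j$; the implication above finishes both.

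For the mixed cases \ref{lemma:gammaprec-check-+} and \ref{lemma:gammaprec-check+-}, the direct relation between $f_i$ and $f_j$ goes the ``wrong way'' by Lemma~\ref{lemma:gammaprec}\ref{lemma:gammaprec+-}, so I would argue geometrically from the shape of $\overline{f}$. Each $\overline{f}_i$ equals $f_i$ on one side of its fixed point $\gamma(f_i)$ and equals the identity on the other side, the side being determined by whether $a_i\ge 1$ or $a_i<1$; likewise $\overline{f}_j$. Using $\gamma(f_i)\le\gamma(f_j)$, split $\mathbb{R}$ into the at most three intervals cut out by $\gamma(f_i)$ and $\gamma(f_j)$. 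On each interval one evaluates which branch of the outer function is hit (this may force a further split according to whether the intermediate value $\overline{f}_j(x)$, resp.\ $\overline{f}_i(x)$, lies below or above the other fixed point), after which both $\overline{f}_i\circ\overline{f}_j(x)$ and $\overline{f}_j\circ\overline{f}_i(x)$ collapse to one of $x$, $f_i(x)$, $f_j(x)$, $f_i(f_j(x))$, $f_j(f_i(x))$. Matching these shows equality on the two outer intervals, and on the middle one the required comparison reduces either to $f_i(f_j(x))$ versus $f_j(f_i(x))$ — supplied exactly by Lemma~\ref{lemma:gammaprec}\ref{lemma:gammaprec+-} — or to trivialities such as ``$a\ge 1$, $y\ge\gamma(f)$ imply $f(y)\ge y$'' and ``$a<1$, $y\le\gamma(f)$ imply $f(y)\ge y$'' together with monotonicity. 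In case \ref{lemma:gammaprec-check-+} all these comparisons are in fact equalities, giving $\overline{f}_i\simeq\overline{f}_j$; in case \ref{lemma:gammaprec-check+-} they give $\overline{f}_j\circ\overline{f}_i(x)\le\overline{f}_i\circ\overline{f}_j(x)$, i.e.\ $\overline{f}_i\succeq\overline{f}_j$.

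The main obstacle is bookkeeping rather than any real difficulty: one must carefully track which branch of $\overline{f}_i$ and of $\overline{f}_j$ is active on each piece, justify that the intermediate value falls where claimed, and separately dispatch the degenerate slopes $a_i=1$ or $a_j=1$ (where $\gamma=\pm\infty$ and $\overline{f}$ is a pure shift or the identity, making most subcases vacuous) as well as $a_i=0$ or $a_j=0$. The one genuinely error-prone point is the direction of the $\gamma$-comparison in Lemma~\ref{lemma:gammaprec}\ref{lemma:gammaprec+-}, so I would write out that reduction explicitly — ``$a_i\ge 1$, $a_j<1$, $\gamma(f_i)\le\gamma(f_j)$ give $f_i\succeq f_j$'' — before inserting it into the middle-interval estimate.
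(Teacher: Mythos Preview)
Your proposal is correct, and for parts \ref{lemma:gammaprec-check++} and \ref{lemma:gammaprec-check--} it is cleaner than the paper's argument. The paper does not use your identity $\overline{f}_i\circ\overline{f}_j(x)=\max\{f_i(f_j(x)),f_i(x),f_j(x),x\}$; instead it runs the interval-splitting case analysis for all four parts (three subcases in \ref{lemma:gammaprec-check++}, four in \ref{lemma:gammaprec-check--}, three in \ref{lemma:gammaprec-check-+}, four in \ref{lemma:gammaprec-check+-}). Your observation that $f_i\preceq f_j\Rightarrow\overline{f}_i\preceq\overline{f}_j$ collapses \ref{lemma:gammaprec-check++} and \ref{lemma:gammaprec-check--} to a one-line appeal to Lemma~\ref{lemma:gammaprec}.

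You undersell your own tool, though: in case \ref{lemma:gammaprec-check+-} the linear relation does \emph{not} go the wrong way. With $a_i\ge 1$, $a_j<1$, $\gamma(f_i)\le\gamma(f_j)$, Lemma~\ref{lemma:gammaprec}\ref{lemma:gammaprec+-} gives $f_i\succeq f_j$, i.e.\ $f_j\preceq f_i$, and your implication immediately yields $\overline{f}_j\preceq\overline{f}_i$, which is exactly the conclusion of \ref{lemma:gammaprec-check+-}. So no interval-splitting is needed there either. Only case \ref{lemma:gammaprec-check-+} genuinely requires the geometric argument, and even then only for the direction $\overline{f}_i\preceq\overline{f}_j$; the direction $\overline{f}_i\succeq\overline{f}_j$ again comes for free from your implication (Lemma~\ref{lemma:gammaprec}\ref{lemma:gammaprec+-} with the roles swapped gives $f_j\preceq f_i$). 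For that one remaining inequality your sketched interval analysis matches the paper's three-case treatment of \ref{lemma:gammaprec-check-+} and is correct.
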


Note that Lemma \ref{lemma:gammaprec-check} implies that the relation $\preceq$ is total for the functions of the form \(\max \{ax+b,x\}\) with $a\geq 0$. 
Moreover, it is easy to check that the lexicographic order with respect to  $(\delta(f_i), \gamma(f_i))$
satisfies the condition in Lemma \ref{lemma:extotalorder}, i.e.,
$i<j$ implies \(\overline{f}_{\sigma(i)}\preceq \overline{f}_{\sigma(j)}\)
for the permutation $\sigma$ that is compatible with the ordering.
Therefore, we have Lemma \ref{lemma-ee}.

\section{Maximum Total Composition and Exact $k$-composition Ordering Problems}\label{sec:maxtcop}
In this section we prove the total composition part of Theorem \ref{theorem:partial-total} and Theorem \ref{theorem:exact}.
Different from the case when each function is of form $\max\{a_ix+b_i,x\}$,
the binary relation $\preceq$ for linear functions does not satisfy the condition in Lemma \ref{lemma:extotalorder}.
In fact, we do not know if there exists such a simple criterion on the maximum total composition ordering.  
We instead present an efficient algorithm that chooses the best ordering among linearly many candidates.
Our main result is the following lemma.
\begin{lemma}
\label{lemma:tc}
For monotone nondecreasing linear functions $f_i$ $(i \in [n])$, 
let $\sigma$ denote a permutation compatible with the lexicographic order with respect to $(\delta(f_i), \gamma(f_i))$. 
Then an optimal solution for the maximum total composition ordering problem \(((f_i)_{i\in[n]},c)\) is
\begin{align*}
  (\sigma(t),\sigma(t+1),\dots,\sigma(n),\sigma(1),\sigma(2),\dots,\sigma(t-1))
\end{align*}
for some $t$.
\end{lemma}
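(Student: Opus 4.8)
The plan is to start from an arbitrary optimal permutation $\pi$ and show it can be transformed, without decreasing the objective, into a cyclic shift of $\sigma$. The key observation is that the relation $\preceq$ restricted to linear functions, although not transitive globally, behaves well on the two halves determined by $\delta$: by Lemma \ref{lemma:gammaprec}, within the block $\{f_i : a_i < 1\}$ the relation is a total preorder ordered by $\gamma$, and likewise within the block $\{f_i : a_i \geq 1\}$; the only obstruction to a single chain is the interaction between the two blocks (case \ref{lemma:gammaprec+-}, where the direction of $\preceq$ flips relative to $\gamma$). So I would first argue, using Lemma \ref{lemma:localchange} as the bubble-sort step, that in any optimal permutation the functions with $a_i < 1$ that appear \emph{before} all functions with $a_j \geq 1$ may be assumed sorted by nondecreasing $\gamma$, the functions with $a_i \geq 1$ may be assumed sorted by nondecreasing $\gamma$ among themselves, and the functions with $a_i < 1$ appearing \emph{after} the last $a_j \geq 1$ function may be assumed sorted by nondecreasing $\gamma$. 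This reduces the search space: an optimal permutation looks like a prefix of ``small-slope'' functions in $\gamma$-order, then all ``large-slope'' functions in $\gamma$-order, then a suffix of the remaining ``small-slope'' functions in $\gamma$-order — and crucially the small-slope functions are split into a prefix/suffix pair that respects the $\gamma$-order cyclically. That is exactly the shape of a cyclic shift of $\sigma$: reading $\sigma$'s small-slope block cyclically starting at position $t$ gives precisely ``suffix of the block, then prefix of the block,'' both in $\gamma$-order.

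More concretely, the second step is to show that the large-slope functions must appear as a contiguous run, and in $\gamma$-order, in any optimal solution, using Lemma \ref{lemma:gammaprec-check}\ref{lemma:gammaprec-check-+}: a small-slope $f_i$ with $\gamma(f_i) \le \gamma(f_j)$ commutes with a large-slope $f_j$ under the $\overline{f}$ operation, but for bare linear functions the relevant statement from Lemma \ref{lemma:gammaprec} case \ref{lemma:gammaprec+-} says $f_i \succeq f_j$ when $\gamma(f_i) \le \gamma(f_j)$, so a small-slope function with small $\gamma$ should come \emph{after} a large-slope function. Iterating these exchanges pushes every small-slope function with $\gamma$ below the large-slope block to the right of it, and every small-slope function with $\gamma$ above it to the left, while keeping each of the three resulting groups internally $\gamma$-sorted. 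Writing $m$ for the number of small-slope functions and indexing them $1,\dots,m$ in $\gamma$-order (these are $\sigma(1),\dots,\sigma(m)$ in the lemma's notation, with the large-slope block being $\sigma(m+1),\dots,\sigma(n)$), the resulting optimal permutation is $(\sigma(s+1),\dots,\sigma(m), \sigma(m+1),\dots,\sigma(n), \sigma(1),\dots,\sigma(s))$ for some $0 \le s \le m$ — i.e., the cyclic shift with $t = s+1$, noting that a shift by $t$ with $m < t \le n$ collapses to the $t=1$ case since the large-slope block is already a contiguous $\gamma$-sorted prefix-or-suffix.

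The main obstacle I anticipate is the bookkeeping needed to justify that the exchange arguments can be applied in the right order without cycling — since $\preceq$ is not transitive, one cannot simply invoke ``sort by a total order,'' and one must be careful that moving a small-slope function past the large-slope block (an operation justified only when its $\gamma$ is small) does not later need to be undone. The clean way around this is to fix the partition of the small-slope functions into the ``goes left'' set $L = \{i : a_i<1,\ \gamma(f_i) \text{ large}\}$ and ``goes right'' set $R = \{i : a_i<1,\ \gamma(f_i)\text{ small}\}$ relative to each large-slope function's $\gamma$ — but this threshold may differ across large-slope functions, which is the real subtlety. I expect the resolution is to first sort the large-slope block internally (pure total-preorder sort via Lemma \ref{lemma:gammaprec}\ref{lemma:gammaprec++}), then observe that once that block is contiguous and $\gamma$-sorted, a small-slope function need only be compared against the single adjacent large-slope function at the boundary, and its $\gamma$ either exceeds that boundary value (so it stays/moves left) or not (so it moves right past the whole block, because the block is $\gamma$-increasing in the rightward direction). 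A short monotonicity argument then shows no small-slope function is ever ``stuck'' in the middle of the block, giving the three-group structure and hence the cyclic-shift form. The final step is purely combinatorial: enumerating the $n$ cyclic shifts of $\sigma$ and evaluating each composition in $\O(n)$ time yields the $\O(n^2)$ (or, with the $\O(n\log n)$ sort, $\O(n\log n + n^2)$) algorithm, but the lemma itself only asserts existence of an optimal cyclic shift, which is what the exchange argument delivers.
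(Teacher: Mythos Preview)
Your proposal has a genuine gap: adjacent pairwise swaps via Lemma~\ref{lemma:localchange} are \emph{not} sufficient to reach a cyclic shift of $\sigma$. The obstruction is not bookkeeping but structural. A permutation is locally optimal with respect to adjacent swaps precisely when each maximal run of constant $\delta$ is $\gamma$-increasing and the $\gamma$ value \emph{drops} across every sign-change boundary (this follows directly from Lemma~\ref{lemma:gammaprec}\ref{lemma:gammaprec++}--\ref{lemma:gammaprec+-}). Such permutations can have arbitrarily many sign changes: e.g.\ a pattern $(+,-,+,-)$ with block-$\gamma$'s satisfying $\gamma(g_1)\ge\gamma(g_2)\ge\gamma(g_3)\ge\gamma(g_4)$ is stable under every adjacent swap, yet has three sign changes and is not a cyclic shift of $\sigma$. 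Starting from an optimal $\pi$ of this shape, no sequence of non-decreasing adjacent swaps will ever reduce the number of sign changes, so your transformation stalls. Your ``short monotonicity argument'' that no small-slope function gets stuck inside the large-slope block presupposes that block is already contiguous, which is exactly what fails here.

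The paper's proof confronts this head-on with inequalities that rearrange \emph{three or four} functions at once (Lemmas~\ref{lemma:comp+-+-}, \ref{lemma:comp-+-+}, \ref{lemma:comp+-+}, \ref{lemma:comp-+-}), proved via the auxiliary Lemma~\ref{lemma:gammacomp} controlling $\gamma(f_j\circ f_i)$. These give Lemma~\ref{lemma:atmost2} (at most two sign changes in some optimum) and then Lemmas~\ref{lemma:prod>}/\ref{lemma:prod<}, which split on whether $\prod_i a_i\ge 1$ to determine which $\delta$-block ends up contiguous. Your outline also mishandles this last point: you assert the large-slope block is always contiguous and dismiss shifts with $t>m$, but when $\prod_i a_i<1$ it is the \emph{large}-slope block that gets split (Lemma~\ref{lemma:prod<}), and those shifts are essential.
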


Before proving Lemma \ref{lemma:tc}, we discuss algorithms for the maximum total composition and the exact $k$-composition ordering problems.

\subsection{Algorithm for Total Composition}
In this subsection, we prove the total composition part of Theorem \ref{theorem:partial-total}, i.e.,
we provide an efficient algorithm for the maximum total composition ordering problem \(((f_i(x)=a_ix+b_i)_{i\in[n]},c)\), where \(a_i\ge 0\).
Let \(\sigma:[n]\to[n]\) be a permutation compatible with the lexicographic order with respect to $(\delta(f_i), \gamma(f_i))$.
Then there exists an optimal solution of the form 
\((\sigma(t),\sigma(t+1),\dots,\sigma(n),\sigma(1),\sigma(2),\dots,\sigma(t-1))\) for some $t$ by Lemma \ref{lemma:tc}.
Therefore, the problem can be computed in polynomial time by checking $n$ permutations above. 
To reduce the time complexity, let $d_k=f_{\sigma(k-1)}\circ\dots\circ f_{\sigma(1)}\circ f_{\sigma(n)}\circ\dots\circ f_{\sigma(k)}(c)$ for $k=1,\dots, n$ and let \(a=\prod_{i=1}^n a_i\).
Then it is not difficult to see that 
\(d_{k+1}=a_{\sigma(k)}\cdot (d_k-a\cdot c)-b_{\sigma(k)}\cdot(a-1)+a \cdot c, \)
and hence the problem is solvable in $\O(n\log n)$ time.

%% Algorithm \ref{alg:tc} solves the problem in $\O(n \log n)$ time. 
%% The correctness of the algorithm is shown in the next subsection.
%% \begin{algorithm}
%% \caption{Maximum Total Composition}\label{alg:tc}
%% \begin{algorithmic}[1]
%% \STATE compute a permutation $\sigma$ which is compatible with the lexicographic order with respect to $(\delta(f_i), \gamma(f_i))$.
%% \STATE \(d_1\ot f_{\sigma(n)}\circ\cdots\circ f_{\sigma(1)}(c)\)
%% \STATE \(a\ot a_1\cdot a_2\cdot\dots\cdot a_n\)
%% \STATE \(s\ot d_1\), \(t\ot 1\)
%% \FOR{$i=1$ to $n-1$}
%% \STATE \(d_{i+1}\ot a_{\sigma(i)}\cdot (d_i-a\cdot c)-b_{\sigma(i)}\cdot (a-1)+a\cdot c\)
%% \ENDFOR
%% \STATE return \((\sigma(t),\sigma(t+1),\dots,\sigma(n),\sigma(1),\sigma(2),\dots,\sigma(t-1))\) with \(t\in\argmax\{d_i\mid i\in[n]\}\)
%% \end{algorithmic}
%% \end{algorithm}

\subsection{exact $k$-composition}
In this subsection, we prove Theorem \ref{theorem:exact}, i.e.,
we provide an efficient algorithm for the maximum exact $k$-composition ordering problem \(((f_i(x)=a_ix+b_i)_{i\in[n]},c)\), where \(a_i\ge 0\).
We use a dynamic programming to find the optimal value.

For simplicity, we relabel the indices of functions
so that the lexicographic order of $(\delta(f_i), \gamma(f_i))$ is monotone increasing.
%% i.e., \(\delta(f_{1})=\dots=\delta(f_{r})=-1\),
%% \(\delta(f_{r+1})=\dots=\delta(f_{n})=1\),
%% \(\gamma(f_{1})\le\dots\le\gamma(f_{r})\),
%% and \(\gamma(f_{r+1})\le\dots\le\gamma(f_{n})\).
We use dynamic programming to solve the problem.
Let $m(i,j,l)$ be the maximum value of $f_{\sigma(l)}\circ\dots\circ f_{\sigma(1)}(c)$
for a permutation $\sigma$
such that $i\le \sigma(1)<\sigma(2)<\dots<\sigma(l)\le i+j-1$ if $i+j-1\le n$,
and $i\le \sigma(1)<\dots<\sigma(p)\le n,~1\le \sigma(p+1)<\dots<\sigma(l)\le i+j-1-n$ for some $p~(0\le p\le l)$ if $i+j-1>n$.
We claim that the optimal value for the problem is $\max_{i=1}^n m(i,n,k)$.

Let $\sigma^*: [n]\to [n]$ be an optimal permutation for the problem.
By Lemma \ref{lemma:tc},
we can assume that
$i^*\le \sigma^*(1)<\dots<\sigma^*(p)\le n,~1\le \sigma^*(p+1)<\dots<\sigma^*(k)\le i^*-1$
for some $i^*$ and $p$.
Therefore, we have $f_{\sigma^*(k)}\circ\dots\circ f_{\sigma^*(1)}(c)\le m(i^*,n,k)\le \max_{i=1}^n m(i,n,k)$ and, thus, $\max_{i=1}^n m(i,n,k)$ is the optimal value for the problem.

For each $i,j,l$, the value $m(i,j,l)$ satisfies the following relation:
\begin{align*}
  m(i,j,l)=
  \begin{cases}
    c & (l=0),\\
    f_{j}(m(i,j-1,l-1)) &(l\ge 1, j=l),\\
    \max\{m(i,j-1,l),f_{j}(m(i,j-1,l-1))\} &(l\ge 1, j>l).\\
  \end{cases}
\end{align*}
To evaluate $\max_{i=1}^n m(i,n,k)$,
our algorithm calculate the values of $m(i,j,l)$ for $0\le i,j\le n$ and $0\le l\le k$.
Therefore, we can obtain the optimal value for the problem in $\O(k\cdot n^2)$ time.
The detailed algorithm for the maximum exact $k$-composition problem is shown in Algorithm \ref{alg:exact}. % where the indices are taken modulo $n$.

\begin{algorithm}
\caption{Maximum Exact $k$-Composition}\label{alg:exact}
\begin{algorithmic}[1]
\STATE sort the input functions according to the lexicographic order of $(\delta(f_i), \gamma(f_i))$
\FOR{\(l=0\) to \(k\)}
\FOR{\(j=l\) to \(n\)}
\FOR{\(i=1\) to \(n\)}
\STATE \textbf{if} $l=0$ \textbf{then} $m(i,j,l)\ot c$
\STATE \textbf{else if} $j=l$ \textbf{then} $m(i,j,l)\ot f_{(i+j \mod{n})}(m(i,j-1,l-1))$
\STATE \textbf{else} $m(i,j,l)\ot \max\{m(i,j-1,l),f_{(i+j\mod{n})}(m(i,j-1,l-1))\}$ 
\ENDFOR
\ENDFOR
\ENDFOR
\STATE return $\max_{i=1}^n m(i,n,k)$
\end{algorithmic}
\end{algorithm}

\subsection{The proof of Lemma \ref{lemma:tc}}
In this subsection, we prove Lemma \ref{lemma:tc}.
To overcome the difficulty that 
the binary relation $\preceq$ for linear functions does not satisfy the condition in Lemma \ref{lemma:extotalorder}, 
we discuss relationships among three or four functions.

The following lemma shows the relationships between
\(\gamma(f_i),\,\gamma(f_j)\), \(\gamma(f_j\circ f_i)\) and \(\gamma(f_i\circ f_j)\) 
for monotone linear functions.
The proof can be found in Appendix.
\begin{lemma}\label{lemma:gammacomp}
For monotone nondecreasing linear functions \(f_i(x)=a_ix+b_i\) and \(f_j(x)=a_j x+b_j\) \((a_i,a_j\ge 0)\),
we have the following statements. 
\begin{enumerate}
\item If \(\gamma(f_i)=\gamma(f_j)\), then \(\gamma(f_i)=\gamma(f_j)= \gamma(f_j\circ f_i)\),\label{lemma:gammacompeq}
\item If \(\gamma(f_i)<\gamma(f_j)\) and \(a_i,a_j\ge 1\), then  \(\gamma(f_i)\le \gamma(f_j\circ f_i)\le\gamma(f_j)\),\label{lemma:gammacomp++}
\item If \(\gamma(f_i)<\gamma(f_j)\) and \(a_i,a_j<1\), then  \(\gamma(f_i)\le \gamma(f_j\circ f_i)\le\gamma(f_j)\),\label{lemma:gammacomp--}

\item If \(\gamma(f_i)<\gamma(f_j)\), \(a_i<1\), \(a_j\ge 1\), and \(a_i\cdot a_j\ge 1\),
then \(\gamma(f_j\circ f_i)\ge\gamma(f_j)~(>\gamma(f_i))\),\label{lemma:gammacomp-+>}

\item If \(\gamma(f_i)<\gamma(f_j)\), \(a_i<1\), \(a_j\ge 1\), and \(a_i\cdot a_j< 1\), 
then  \(\gamma(f_j\circ f_i)\le\gamma(f_i)~(<\gamma(f_j))\),\label{lemma:gammacomp-+<}

\item If \(\gamma(f_i)<\gamma(f_j)\), \(a_i\ge 1\), \(a_j<1\), and \(a_i\cdot a_j\ge 1\),
then  \(\gamma(f_j\circ f_i)\le\gamma(f_i)~(<\gamma(f_j))\),\label{lemma:gammacomp+->}
\item If \(\gamma(f_i)<\gamma(f_j)\), \(a_i\ge 1\), \(a_j<1\), and \(a_i\cdot a_j< 1\),
then \(\gamma(f_j\circ f_i)\ge\gamma(f_j)~(>\gamma(f_i))\).\label{lemma:gammacomp+-<}
\end{enumerate}
% where we allow to write \(+\infty\le +\infty\), \(-\infty\le +\infty\), and 
% \(-\infty\le -\infty\).
\end{lemma}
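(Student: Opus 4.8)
The plan is to reduce every statement to an explicit formula for $\gamma(f_j \circ f_i)$ in terms of $a_i, a_j, b_i, b_j$, and then analyze signs. Writing $f_i(x) = a_i x + b_i$ and $f_j(x) = a_j x + b_j$, the composition is $f_j \circ f_i(x) = a_i a_j x + (a_j b_i + b_j)$, so when $a_i a_j \ne 1$ we have
\begin{align*}
\gamma(f_j \circ f_i) = \frac{a_j b_i + b_j}{1 - a_i a_j}.
\end{align*}
The key algebraic identity I would establish first is that $f_j \circ f_i$ has the same fixed point structure as a convex-combination-like average of $\gamma(f_i)$ and $\gamma(f_j)$: concretely, using $b_i = \gamma(f_i)(1 - a_i)$ and $b_j = \gamma(f_j)(1 - a_j)$ (valid when $a_i, a_j \ne 1$), substitute into the numerator to get $a_j b_i + b_j = a_j(1-a_i)\gamma(f_i) + (1-a_j)\gamma(f_j)$, and note $1 - a_i a_j = a_j(1-a_i) + (1-a_j)$. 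Hence
\begin{align*}
\gamma(f_j \circ f_i) = \lambda\, \gamma(f_i) + (1-\lambda)\, \gamma(f_j), \qquad \lambda = \frac{a_j(1-a_i)}{a_j(1-a_i)+(1-a_j)}.
\end{align*}
This single formula drives everything. The edge cases where $a_i = 1$ or $a_j = 1$ (so $\gamma = \pm\infty$) I would handle separately by a direct check; they are degenerate and short.

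From the weight formula, each claim is just a sign analysis of $\lambda$ and $1 - \lambda$ (equivalently, of the denominator $D = a_j(1-a_i) + (1-a_j) = 1 - a_i a_j$ and of $a_j(1-a_i)$). For statement (\ref{lemma:gammacompeq}), $\gamma(f_i) = \gamma(f_j)$ makes the convex combination collapse to the common value regardless of $\lambda$, so it holds even when $D < 0$, provided we argue the $D=0$ case (i.e. $a_i a_j = 1$) separately — there $f_j \circ f_i$ has slope $1$ and one checks its $b$-coefficient to read off $\gamma$. For (\ref{lemma:gammacomp++}) and (\ref{lemma:gammacomp--}): when $a_i, a_j \ge 1$ with $a_i a_j > 1$, both $a_j(1-a_i) \le 0$ and $1 - a_j \le 0$ with $D < 0$, so $\lambda \in [0,1]$, giving a genuine convex combination and hence $\gamma(f_i) \le \gamma(f_j \circ f_i) \le \gamma(f_j)$; symmetrically when $a_i, a_j < 1$ all three quantities $a_j(1-a_i), 1-a_j, D$ are positive, again $\lambda \in [0,1]$. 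For the mixed cases (\ref{lemma:gammacomp-+>})–(\ref{lemma:gammacomp+-<}), exactly one of $a_j(1-a_i)$, $1-a_j$ is negative, so $\lambda \notin [0,1]$ and $\gamma(f_j\circ f_i)$ is an \emph{affine extrapolation} past one of the two endpoints; the side it lands on is determined by the sign of $D = 1 - a_i a_j$, which is precisely the case distinction $a_i a_j \gtrless 1$ made in the statement. One then reads off whether $\gamma(f_j \circ f_i)$ exceeds $\gamma(f_j)$ or falls below $\gamma(f_i)$.

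The main obstacle is bookkeeping rather than any conceptual difficulty: there are genuinely several sign regimes, and one must be careful about (i) the degenerate slopes $a_i = 1$ or $a_j = 1$, where $\gamma$ takes the values $\pm\infty$ and the convex-combination formula does not literally apply, and (ii) the boundary $a_i a_j = 1$, where $\gamma(f_j \circ f_i)$ is itself $\pm\infty$ and must be computed from the constant term. I would dispose of these degenerate subcases up front with one short paragraph each, so that in the main argument I may assume $a_i, a_j \ne 1$ and $a_i a_j \ne 1$ and simply invoke the weight formula. After that, statements (\ref{lemma:gammacompeq})–(\ref{lemma:gammacomp+-<}) each follow from a one-line sign check of $\lambda$ and $1-\lambda$ together with the given hypothesis on $a_i a_j$. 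This is the level of detail I would push to the Appendix as indicated.
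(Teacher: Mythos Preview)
Your approach is correct and genuinely different from the paper's. You compute $\gamma(f_j\circ f_i)$ explicitly as an affine combination $\lambda\,\gamma(f_i)+(1-\lambda)\,\gamma(f_j)$ with $\lambda=a_j(1-a_i)/(1-a_ia_j)$, and then each part of the lemma becomes a sign analysis of $\lambda$ and $1-\lambda$. The paper instead never writes this formula down: it isolates three elementary facts relating $f(c)\gtrless c$ to the position of $c$ relative to $\gamma(f)$ and the sign of $a-1$, and then for each part evaluates $f_j\circ f_i$ at the single point $\gamma(f_i)$ or $\gamma(f_j)$, reads off whether the value is above or below that point, and converts this back into an inequality on $\gamma(f_j\circ f_i)$ via the same facts. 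Your route buys a unified conceptual picture---the fixed point of the composite is a weighted average of the two fixed points, and the case split in the lemma is exactly the split into convex ($\lambda\in[0,1]$) versus extrapolating ($\lambda\notin[0,1]$) weights. The paper's route buys slightly cleaner handling of the degenerate slopes $a_i=1$ or $a_j=1$ and the boundary $a_ia_j=1$, since the evaluation-at-a-point argument and the facts $(i)$--$(iii)$ absorb those without a separate paragraph; in your plan you correctly flag these as needing ad~hoc treatment before the main formula applies. Either proof is short and both belong in an appendix.
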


By Lemmas \ref{lemma:gammaprec} and \ref{lemma:gammacomp},
we have the following inequalities for compositions of four functions.
\begin{lemma}\label{lemma:comp+-+-}
For monotone nondecreasing linear functions \(f_i(x)=a_ix+b_i\)  $(i=1,2,3,4)$, 
if $a_1,a_3 \geq 1$, $a_2,a_4 < 1$ and  \(\gamma(f_1) \ge \gamma(f_2) \ge \gamma(f_3) \ge \gamma(f_4)\), 
then we have
\begin{align*}
f_4\circ f_3\circ f_2\circ f_1(x) \le \max\{f_4\circ f_1\circ f_3\circ f_2(x),~f_3\circ f_2\circ f_4\circ f_1(x)\}\quad(\forall x).
\end{align*}
\end{lemma}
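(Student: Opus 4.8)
The plan is to reduce the four-function inequality to a single comparison under the relation $\preceq$, and then to settle that comparison with the $\gamma$-estimates of Lemmas~\ref{lemma:gammaprec} and~\ref{lemma:gammacomp}. Writing $v:=f_3\circ f_2$, the three compositions in the statement become $f_4\circ v\circ f_1$, $\;f_4\circ f_1\circ v$, and $\;v\circ f_4\circ f_1$. Since $f_4$ is monotone nondecreasing, $v\preceq f_1$ (i.e.\ $v\circ f_1\le f_1\circ v$ pointwise) gives $f_4\circ v\circ f_1\le f_4\circ f_1\circ v$; since $f_1$ is monotone nondecreasing, $f_4\preceq v$ gives $f_4\circ v\circ f_1\le v\circ f_4\circ f_1$. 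So it suffices to prove that at least one of $v\preceq f_1$, $f_4\preceq v$ holds: each already gives a pointwise bound against one term of the $\max$, and since all three compositions share the slope $a_1a_2a_3a_4$, the $\max$-inequality of the lemma is in fact equivalent to the statement that one of these two pointwise bounds holds.

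To decide which of $v\preceq f_1$, $f_4\preceq v$ to aim for, I would locate $v$ by its slope $a_2a_3$ and its $\gamma$-value. A direct computation (equivalently, the counterpart of Lemma~\ref{lemma:gammacomp} for the composition order $f_3\circ f_2$) shows $\gamma(v)=\lambda\,\gamma(f_2)+(1-\lambda)\,\gamma(f_3)$ with $\lambda=\frac{a_3(1-a_2)}{1-a_2a_3}$. Invoking the hypothesis $\gamma(f_2)\ge\gamma(f_3)$: if $a_2a_3>1$ then $\lambda<0$, whence $\gamma(v)\le\gamma(f_3)$; if $a_2a_3<1$ then $\lambda\ge1$, whence $\gamma(v)\ge\gamma(f_2)$.

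Then Lemma~\ref{lemma:gammaprec} closes each case. If $a_2a_3>1$: $v$ has slope $>1$ and $a_1(a_2a_3)>1$, so $v\preceq f_1\iff\gamma(v)\le\gamma(f_1)$, which holds because $\gamma(v)\le\gamma(f_3)\le\gamma(f_1)$ (using $\gamma(f_1)\ge\gamma(f_2)\ge\gamma(f_3)$); hence $f_4\circ f_3\circ f_2\circ f_1\le f_4\circ f_1\circ f_3\circ f_2$. If $a_2a_3<1$: $v$ and $f_4$ both have slope $<1$, so $f_4\preceq v\iff\gamma(f_4)\le\gamma(v)$, which holds because $\gamma(f_4)\le\gamma(f_2)\le\gamma(v)$; hence $f_4\circ f_3\circ f_2\circ f_1\le f_3\circ f_2\circ f_4\circ f_1$. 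The boundary case $a_2a_3=1$ I would dispatch by hand: there $v(x)=x+(a_3-1)\bigl(\gamma(f_2)-\gamma(f_3)\bigr)$ with nonnegative shift, so $v$ is either the identity (and all three compositions coincide) or a slope-$1$ map with $\gamma(v)=-\infty$, which again gives $v\preceq f_1$; moreover the hypothesis $\gamma(f_3)\ge\gamma(f_4)$ forces $a_3\neq1$ (so $\gamma(f_3)$ is finite above), while the only other degenerate slope, $a_1=1$, makes $\gamma(f_1)=+\infty$ and causes no trouble.

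The step I expect to be the real obstacle is the estimate on $\gamma(v)$: in $v=f_3\circ f_2$ the slope-$<1$ map $f_2$ is applied before the slope-$\ge1$ map $f_3$ while $\gamma(f_2)\ge\gamma(f_3)$, which is exactly the composition order not among the cases tabulated in Lemma~\ref{lemma:gammacomp}, so one must either supply the mirrored version of that lemma or run the elementary but somewhat fiddly affine-combination computation, taking care of the degenerate slopes and the $\pm\infty$ conventions for $\gamma$.
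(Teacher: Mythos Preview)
Your argument is correct and is essentially the paper's proof: the paper also sets $g=f_3\circ f_2$, splits on $a_2a_3\ge 1$ versus $a_2a_3<1$, bounds $\gamma(g)$ (there via Lemma~\ref{lemma:gammacomp}\ref{lemma:gammacomp+->},\ref{lemma:gammacomp+-<} rather than your explicit affine-combination formula), and then applies Lemma~\ref{lemma:gammaprec} to get $g\preceq f_1$ or $f_4\preceq g$. Your direct computation of $\gamma(v)$ neatly sidesteps the composition-order bookkeeping you flagged, and your separate treatment of $a_2a_3=1$ is fine; the paper simply folds it into the $\ge 1$ case.
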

\begin{lemma}\label{lemma:comp-+-+}
For monotone nondecreasing linear functions \(f_i(x)=a_ix+b_i\)  $(i=1,2,3, 4)$, 
if $a_1,a_3<1$, $a_2,a_4 \geq 1$ and  \(\gamma(f_1) \ge \gamma(f_2) \ge \gamma(f_3) \ge \gamma(f_4)\), 
then we have
\begin{align*}
f_4\circ f_3\circ f_2\circ f_1(x)\le \max\{f_4\circ f_1\circ f_3\circ f_2(x),~f_3\circ f_2\circ f_4\circ f_1(x)\}\quad(\forall x).
\end{align*}
%The equality holds only when \(\gamma_i=\gamma_j=\gamma_k=\gamma_l\).
\end{lemma}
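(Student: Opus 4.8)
The plan is to group the four functions into two composed pairs and to exploit that the three compositions in the statement are all affine with the \emph{same} slope $a_1a_2a_3a_4$. Put $u:=f_3\circ f_2$ and $v:=f_4\circ f_1$. Then the two right-hand candidates are precisely $v\circ u=f_4\circ f_1\circ f_3\circ f_2$ and $u\circ v=f_3\circ f_2\circ f_4\circ f_1$, while the left-hand side equals $f_4\circ u\circ f_1$. Since two affine functions with a common slope satisfy one pointwise $\le$ the other iff the corresponding constant terms compare that way, the inequality ``$f_4\circ u\circ f_1(x)\le\max\{v\circ u(x),\,u\circ v(x)\}$ for all $x$'' is equivalent to ``$f_4\circ u\circ f_1\le v\circ u$ identically, or $f_4\circ u\circ f_1\le u\circ v$ identically''. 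So it suffices to prove that one of these two clean inequalities holds.

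Next I would convert each target into a $\preceq$-relation. If $u\preceq f_1$, then $u\circ f_1\le f_1\circ u$ everywhere, and applying the monotone function $f_4$ gives $f_4\circ u\circ f_1\le f_4\circ f_1\circ u=v\circ u$. If instead $f_4\preceq u$, then $f_4\circ u\le u\circ f_4$ everywhere, and substituting $f_1$ on the right gives $f_4\circ u\circ f_1\le u\circ f_4\circ f_1=u\circ v$. Hence it is enough to show that at least one of $u\preceq f_1$ and $f_4\preceq u$ holds. I would split on the sign of $a_2a_3-1$, i.e.\ on whether the composed function $u$ has slope below or above $1$.

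If $a_2a_3<1$, then $u$ and $f_1$ both have slope below $1$, so by Lemma~\ref{lemma:gammaprec} we have $u\preceq f_1$ iff $\gamma(u)\le\gamma(f_1)$. Writing $\gamma(u)$ as the affine combination $\lambda\gamma(f_2)+(1-\lambda)\gamma(f_3)$ with $\lambda=a_3(1-a_2)/(1-a_2a_3)$ (a direct computation, a mirror of Lemma~\ref{lemma:gammacomp}), the hypotheses $a_2\ge 1$, $a_3<1$, $a_2a_3<1$ force $\lambda\le 0$, so $\gamma(u)\le\gamma(f_3)\le\gamma(f_2)\le\gamma(f_1)$ and $u\preceq f_1$. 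If $a_2a_3\ge 1$, then $u$ and $f_4$ both have slope at least $1$ (with product slope at least $1$), so by Lemma~\ref{lemma:gammaprec} we have $f_4\preceq u$ iff $\gamma(f_4)\le\gamma(u)$; and the same combination now has $\lambda\ge 1$, so $\gamma(u)\ge\gamma(f_2)\ge\gamma(f_4)$ and $f_4\preceq u$. Either way the lemma follows.

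The main obstacle is the structural first step: recognizing that only the single composed pair $u=f_3\circ f_2$ is decisive, and that the totality of $\preceq$ lets one choose — according to the slope $a_2a_3$ — which of the two rotations to compare the left-hand side against; a priori it is not obvious that the rest of the data collapses this cleanly. What remains is bookkeeping, including the degenerate cases where some $a_i=1$, or $a_2a_3=1$, or one of $f_1,f_4,u,v$ is constant or the identity; in each of these the relevant difference of constant terms either vanishes or the left-hand side coincides outright with one of the two right-hand compositions, so they are handled directly. Finally, Lemma~\ref{lemma:comp+-+-} is proved by the identical argument with the roles of the ``$<1$'' and ``$\ge 1$'' slopes interchanged, so I would organize the proof so that both lemmas share one case analysis.
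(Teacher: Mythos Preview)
Your proposal is correct and follows essentially the same approach as the paper. Both arguments set $g=u=f_3\circ f_2$, split on whether $a_2a_3\ge 1$ or $a_2a_3<1$, and in each case compare $\gamma(u)$ with $\gamma(f_1)$ or $\gamma(f_4)$ to obtain $u\preceq f_1$ or $f_4\preceq u$, from which the desired pointwise inequality follows; the paper invokes Lemma~\ref{lemma:gammacomp} for the bound on $\gamma(u)$, while you compute it via the explicit affine-combination formula, and your ``same slope'' observation at the outset is a pleasant extra but not needed since in each case a single pointwise inequality is established directly.
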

\begin{proof}
We only prove Lemma \ref{lemma:comp+-+-} since Lemma \ref{lemma:comp-+-+} can be proved in a similar way.
Let \(g(x)=f_3\circ f_2(x)\).
If \(a_2 \cdot a_3\ge 1\), then \(\gamma(g)\le \gamma(f_3)\le \gamma(f_1)\) holds
by \ref{lemma:gammacompeq} and \ref{lemma:gammacomp+->} in Lemma \ref{lemma:gammacomp},
and \(g\circ f_1(x)\le f_1\circ g(x)\) holds by \ref{lemma:gammaprec11} and \ref{lemma:gammaprec++} in Lemma \ref{lemma:gammaprec}.
Thus, we have \(f_4\circ f_3\circ f_2\circ f_1(x)\le f_4\circ f_1\circ f_3\circ f_2(x)\).

On the other hand, if \(a_2\cdot a_3< 1\), then \(\gamma(g)\ge \gamma(f_2)\ge \gamma(f_4)\) holds by \ref{lemma:gammacompeq} and \ref{lemma:gammacomp+-<} in Lemma \ref{lemma:gammacomp},
and \(f_4\circ g(x)\le g\circ f_4(x)\) holds by \ref{lemma:gammaprec--} in Lemma \ref{lemma:gammaprec}.
Thus, we have \(f_4\circ f_3\circ f_2\circ f_1(x)\le f_3\circ f_2\circ f_4\circ f_1(x)\).
\end{proof}

By Lemmas \ref{lemma:comp+-+-} and \ref{lemma:comp-+-+}, we obtain the following lemma. %(the formal proof can be found in Appendix).
\begin{lemma}\label{lemma:atmost2}
There exists an optimal permutation \(\sigma\) 
for the maximum total composition ordering problem for monotone nondecreasing functions $f_i$ $(i\in [n])$ 
such that
at most two $i$'s 
satisfy \(\delta(f_{\sigma(i)}) \cdot \delta(f_{\sigma(i+1)})=-1\).
\end{lemma}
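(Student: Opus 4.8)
The plan is to run an exchange argument powered by the four-function inequalities of Lemmas~\ref{lemma:comp+-+-} and~\ref{lemma:comp-+-+}. Call a position $i$ a \emph{break} of a permutation $\sigma$ if $\delta(f_{\sigma(i)})\cdot\delta(f_{\sigma(i+1)})=-1$; since each $\delta(f_i)$ equals $+1$ or $-1$, a permutation with $b$ breaks decomposes, in the order in which its functions are applied, into exactly $b+1$ maximal \emph{blocks} of consecutive functions sharing a common value of $\delta$. I would fix an optimal permutation $\sigma^*$ with the fewest breaks and suppose, for contradiction, that it has at least three breaks, hence at least four blocks $B_1,B_2,\dots,B_m$ with $m\ge 4$. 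Let $h_j$ be the composition of the functions in $B_j$, in the order they occur in $\sigma^*$. Each $h_j$ is linear; since the product of the slopes inside one block lies on the same side of $1$ as the individual slopes, $\delta(h_j)$ equals the common $\delta$-value of $B_j$, so $\delta(h_1),\dots,\delta(h_m)$ alternate; and the objective value is $h_m\circ\cdots\circ h_1(c)$, with any permutation of the blocks corresponding to a legitimate permutation of $[n]$.

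The crucial step is to prove that $\gamma(h_1)>\gamma(h_2)>\cdots>\gamma(h_m)$. Suppose not, say $\gamma(h_j)\le\gamma(h_{j+1})$. Since $B_j$ and $B_{j+1}$ have opposite $\delta$, exactly one of $h_j,h_{j+1}$ has slope $\ge 1$ and the other slope $<1$; then part~\ref{lemma:gammaprec+-} of Lemma~\ref{lemma:gammaprec} together with totality of $\preceq$ on linear functions gives $h_{j+1}\preceq h_j$. Viewing the $h_\ell$ as the (monotone nondecreasing) functions of a composition, Lemma~\ref{lemma:localchange} then shows that swapping the blocks $B_j$ and $B_{j+1}$ does not decrease the objective, so the swapped permutation is still optimal. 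But in the new order $\dots,B_{j-1},B_{j+1},B_j,B_{j+2},\dots$ the blocks $B_{j-1}$ and $B_{j+1}$ are adjacent with equal $\delta$, and so are $B_j$ and $B_{j+2}$; since $m\ge 4$ at least one of these pairs is actually present and merges, strictly reducing the number of breaks and contradicting the minimality of $\sigma^*$. Hence the $\gamma(h_j)$ are strictly decreasing.

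Now $h_1,h_2,h_3,h_4$ have alternating $\delta$-values and $\gamma(h_1)>\gamma(h_2)>\gamma(h_3)>\gamma(h_4)$, so I would invoke Lemma~\ref{lemma:comp+-+-} if $\delta(h_1)=+1$ and Lemma~\ref{lemma:comp-+-+} if $\delta(h_1)=-1$ to get $h_4\circ h_3\circ h_2\circ h_1(x)\le\max\{h_4\circ h_1\circ h_3\circ h_2(x),\,h_3\circ h_2\circ h_4\circ h_1(x)\}$ for every $x$. Composing with the monotone tail $h_m\circ\cdots\circ h_5$, this says that reordering the first four blocks either as $B_2,B_3,B_1,B_4$ or as $B_1,B_4,B_2,B_3$ produces a permutation at least as good as $\sigma^*$, hence optimal; in the first reordering $B_3$ and $B_1$ fall next to each other with equal $\delta$ and merge, in the second $B_4$ and $B_2$ do, so either way the number of breaks drops strictly, contradicting the choice of $\sigma^*$. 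Therefore $\sigma^*$ has at most two breaks, which is the lemma. The step I expect to be the main obstacle is the monotonicity claim $\gamma(h_1)>\cdots>\gamma(h_m)$: the argument works only because block swaps let one normalize any optimal minimal-break permutation into precisely the alternating-sign, $\gamma$-decreasing configuration handled by Lemmas~\ref{lemma:comp+-+-} and~\ref{lemma:comp-+-+}, and care is needed to check that every such move is at once objective-nondecreasing and break-decreasing.
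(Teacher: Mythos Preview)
Your proposal is correct and follows essentially the same route as the paper's proof: take an optimal permutation with the fewest sign changes, group into constant-$\delta$ blocks, use Lemma~\ref{lemma:gammaprec}\ref{lemma:gammaprec+-} to force the block-$\gamma$'s to be (weakly) decreasing, and then apply Lemmas~\ref{lemma:comp+-+-}/\ref{lemma:comp-+-+} to the first four blocks to produce an equally good permutation with strictly fewer breaks. The only cosmetic differences are that you track all $m$ blocks rather than just the first four and that you phrase the $\gamma$-monotonicity as strict; both are harmless refinements of the same argument.
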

\begin{proof}
Let \(\sigma\) be an  optimal solution,
with the minimum number of $i$'s satisfying 
$\delta(f_{\sigma(i)})\cdot \delta(f_{\sigma(i+1)})=-1$. 
Assume that $\sigma$ contains at least three such $i$'s. 
Let $i_1,i_2$ and $i_3$ denote the three smallest such $i$'s with $i_1< i_2< i_3$,  
and \(i_4\) denote the fourth smallest such $i$ if exists; otherwise we define $i_4=n$.  
Let \(g_1(x)=f_{\sigma(i_1)}\circ\dots\circ f_{\sigma(1)}(x)\),
\(g_2(x)=f_{\sigma(i_2)}\circ\dots\circ f_{\sigma(i_1+1)}(x)\),
\(g_3(x)=f_{\sigma(i_3)}\circ\dots\circ f_{\sigma(i_2+1)}(x)\),
and \(g_4(x)=f_{\sigma(i_4)}\circ\dots\circ f_{\sigma(i_3+1)}(x)\).
Then we have \(\delta(g_1)= -\delta(g_2)= \delta(g_3)= -\delta(g_4)\).
We claim that \(\gamma(g_1)\ge \gamma(g_2)\ge \gamma(g_3)\ge \gamma(g_4)\).

Assume that \(\gamma(g_j)<\gamma(g_{j+1})\) for some \(j\in\{1,2,3\}\).
Then it follows from \ref{lemma:gammaprec+-} in Lemma
\ref{lemma:gammaprec}  that 
\(g_{j+1}\circ g_{j}(x) \le g_{j}\circ g_{j+1}(x)\) holds,  
 which contradicts the assumption on $\sigma$.
Therefore we have
\begin{align*}
f_{\sigma(n)}\circ\dots\circ f_{\sigma(1)}(x)
&=f_{\sigma(n)}\circ\dots\circ f_{\sigma(i_4+1)}\circ g_4\circ g_3\circ g_2\circ g_1(x)\\
&\le \max\left\{
\begin{matrix}
f_{\sigma(n)}\circ\dots\circ f_{\sigma(i_4+1)}\circ g_4\circ g_1\circ g_3\circ g_2(x),\\
f_{\sigma(n)}\circ\dots\circ f_{\sigma(i_4+1)}\circ g_3\circ g_2\circ g_4\circ g_1(x)
\end{matrix}
\right\}
\end{align*}
by Lemmas \ref{lemma:comp+-+-} and \ref{lemma:comp-+-+}.
This again contradicts the assumption on $\sigma$. 
\end{proof}

Next, we provide inequalities for compositions of three functions.
\begin{lemma}\label{lemma:comp+-+}
For monotone nondecreasing linear functions \(f_i(x)=a_ix+b_i\)  $(i=1,2,3)$, 
if $a_1,a_3\geq 1$, $a_2<1$, \(a_1\cdot a_2\cdot a_3\ge 1\) and \(\gamma(f_1) \ge \gamma(f_2) \ge \gamma(f_3)\), 
then we have
\begin{align*}
f_3\circ f_2\circ f_1(x)\le \max\{f_2\circ f_1\circ f_3(x),~f_1\circ f_3\circ f_2(x)\}\quad(\forall x).
\end{align*}
%The equality holds only when \(\gamma(f_i)=\gamma(f_j)=\gamma(f_k)\).
\end{lemma}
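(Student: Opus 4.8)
The plan is to reduce the inequality to a sign-free statement about the relation $\preceq$, and then settle it by a short case analysis driven by the slopes $a_1a_2$ and $a_2a_3$. First I would note that $f_3\circ f_2\circ f_1$, $f_1\circ f_3\circ f_2$ and $f_2\circ f_1\circ f_3$ are affine functions with the common slope $a_1a_2a_3\ (\ge 1)$, so any two of them differ only by a constant; hence the claimed inequality holds for all $x$ if and only if $f_3\circ f_2\circ f_1\le f_1\circ f_3\circ f_2$ identically or $f_3\circ f_2\circ f_1\le f_2\circ f_1\circ f_3$ identically. By Definition~\ref{def:preceq} these two alternatives are precisely $f_3\circ f_2\preceq f_1$ and $f_3\preceq f_2\circ f_1$, so it suffices to prove that at least one of them holds. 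Since $f_2\circ f_1$ and $f_3\circ f_2$ are again monotone nondecreasing linear functions, each of the two relations can be tested via Lemma~\ref{lemma:gammaprec}, comparing $\gamma(f_1),\gamma(f_3)$ with $\gamma(f_3\circ f_2),\gamma(f_2\circ f_1)$ (the latter two being elementary rational expressions in the $a_i$ and $\gamma(f_i)$).

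I would then split on the slopes. If $a_2a_3\ge 1$, then $f_3\circ f_2$ has slope $\ge 1$ and $\gamma(f_2)\ge\gamma(f_3)$ gives $\gamma(f_3\circ f_2)\le\gamma(f_3)\le\gamma(f_1)$ (Lemma~\ref{lemma:gammacomp}), so $f_3\circ f_2\preceq f_1$ by Lemma~\ref{lemma:gammaprec}. Symmetrically, if $a_2a_3<1$ (which forces $a_1>1$, since $a_1a_2a_3\ge1$) but $a_1a_2\ge 1$, then $f_2\circ f_1$ has slope $\ge 1$ and $\gamma(f_1)\ge\gamma(f_2)$ gives $\gamma(f_2\circ f_1)\ge\gamma(f_2)\ge\gamma(f_3)$ (Lemma~\ref{lemma:gammacomp}), so $f_3\preceq f_2\circ f_1$ by Lemma~\ref{lemma:gammaprec}.

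The remaining case, $a_2a_3<1$ and $a_1a_2<1$ (hence $a_1,a_3>1$ and $0<a_2<1$), is the heart of the matter: now both $f_3\circ f_2$ and $f_2\circ f_1$ have slope below $1$, and by \ref{lemma:gammaprec+-} in Lemma~\ref{lemma:gammaprec} the two goals become $\gamma(f_3\circ f_2)\ge\gamma(f_1)$ and $\gamma(f_3)\ge\gamma(f_2\circ f_1)$. I would split once more on whether $\gamma(f_2\circ f_1)\le\gamma(f_3)$; if so, the second goal holds at once. Otherwise I would use the identities
\begin{align*}
\gamma(f_2\circ f_1)=\gamma(f_2)-R\,(\gamma(f_1)-\gamma(f_2)),\qquad
\gamma(f_3\circ f_2)=\gamma(f_2)+S\,(\gamma(f_2)-\gamma(f_3)),
\end{align*}
with $R=\frac{a_2(a_1-1)}{1-a_1a_2}>0$ and $S=\frac{a_3-1}{1-a_2a_3}>0$. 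The sub-case hypothesis $\gamma(f_2\circ f_1)>\gamma(f_3)$ reads $R\,(\gamma(f_1)-\gamma(f_2))<\gamma(f_2)-\gamma(f_3)$, while
\begin{align*}
RS-1=\frac{(1-a_2)\,(a_1a_2a_3-1)}{(1-a_1a_2)(1-a_2a_3)}\ \ge\ 0
\end{align*}
gives $S\ge 1/R$; chaining these yields $\gamma(f_3\circ f_2)-\gamma(f_2)=S\,(\gamma(f_2)-\gamma(f_3))>\gamma(f_1)-\gamma(f_2)$, i.e.\ $\gamma(f_3\circ f_2)>\gamma(f_1)$, which is the first goal.

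The step I expect to be the main obstacle is exactly this last sub-case: it is the only one not closable by grouping two of the three functions and reading a single $\gamma$-comparison off Lemma~\ref{lemma:gammacomp}, and it is where the slope hypothesis $a_1a_2a_3\ge1$ is genuinely needed — namely, to make $RS-1\ge0$ (together with $a_2<1$), so the statement is tight there. A minor bookkeeping point, which I would dispose of separately at the outset, is the degenerate situation in which one of $a_1a_2$, $a_2a_3$, $a_1a_2a_3$ equals $1$ (so some $\gamma$-value is $\pm\infty$, or one of $f_2\circ f_1$, $f_3\circ f_2$ is the identity): there the monotone-$\gamma$ hypothesis forces one of the two relations to hold trivially.
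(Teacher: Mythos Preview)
Your proof is correct and follows the paper's case structure almost exactly: the reduction to showing one of the two relations $f_3\circ f_2\preceq f_1$ or $f_3\preceq f_2\circ f_1$, and the handling of the cases $a_2a_3\ge 1$ and $a_1a_2\ge 1$ via Lemma~\ref{lemma:gammacomp} and Lemma~\ref{lemma:gammaprec}, are identical to the paper. The only divergence is in the residual case $a_1a_2<1$, $a_2a_3<1$ with $\gamma(f_2\circ f_1)>\gamma(f_3)$. There the paper, instead of your explicit $RS\ge 1$ computation, argues by contradiction: assuming in addition $\gamma(f_3\circ f_2)<\gamma(f_1)$, it applies Lemma~\ref{lemma:gammacomp} once more to the \emph{triple} composition, grouped two ways, to obtain $\gamma(f_1)\le\gamma\bigl((f_3\circ f_2)\circ f_1\bigr)=\gamma\bigl(f_3\circ(f_2\circ f_1)\bigr)\le\gamma(f_3)$, forcing $\gamma(f_1)=\gamma(f_2)=\gamma(f_3)$ and hence (by part~\ref{lemma:gammacompeq}) $\gamma(f_3\circ f_2)=\gamma(f_1)$, a contradiction. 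The paper's finish is a bit slicker since it recycles Lemma~\ref{lemma:gammacomp} rather than unpacking the $\gamma$-formulas; your finish is more self-contained and makes transparent exactly where the hypothesis $a_1a_2a_3\ge 1$ enters (namely, as the sign of $RS-1$).
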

\begin{lemma}\label{lemma:comp-+-}
For monotone nondecreasing linear functions \(f_i(x)=a_ix+b_i\)  $(i=1,2,3)$, 
if $a_1,a_3< 1$, $a_2\geq 1$, \(a_1\cdot a_2\cdot a_3< 1\) and \(\gamma(f_1) \ge \gamma(f_2) \ge \gamma(f_3)\), 
 then we have
\begin{align*}
f_3\circ f_2\circ f_1(x)\le \max\{f_2\circ f_1\circ f_3(x),~f_1\circ f_3\circ f_2(x)\}\quad(\forall x).
\end{align*}
\end{lemma}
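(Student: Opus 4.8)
The plan is to treat Lemma~\ref{lemma:comp+-+} and Lemma~\ref{lemma:comp-+-} together, reducing each to an inequality between three real numbers. Write $f_i(x)=a_ix+b_i$ and set $A=a_1a_2a_3$. The three compositions $f_3\circ f_2\circ f_1$, $f_2\circ f_1\circ f_3$ and $f_1\circ f_3\circ f_2$ are all linear with the \emph{same} slope $A$, say equal to $Ax+B_0$, $Ax+B_1$, $Ax+B_2$ respectively. Since the pointwise maximum of two linear functions of equal slope is again linear of that slope, with intercept $\max\{B_1,B_2\}$, the claimed inequality is equivalent to $B_0\le\max\{B_1,B_2\}$, i.e.\ to
\[
f_3\circ f_2\circ f_1(x)\le f_2\circ f_1\circ f_3(x)\ \ (\forall x)\qquad\text{or}\qquad f_3\circ f_2\circ f_1(x)\le f_1\circ f_3\circ f_2(x)\ \ (\forall x),
\]
which by definition of $\preceq$ is the same as: $f_3\preceq f_2\circ f_1$ or $f_3\circ f_2\preceq f_1$. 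This reduction uses only that the three slopes coincide, not the hypothesis on $A$.

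To decide the two relations I would record the identities
\begin{align*}
B_1-B_0&=(1-a_3)(1-a_1a_2)\bigl(\gamma(f_2\circ f_1)-\gamma(f_3)\bigr),\\
B_2-B_0&=(a_1-1)(1-a_2a_3)\bigl(\gamma(f_3\circ f_2)-\gamma(f_1)\bigr),
\end{align*}
together with the affine-combination formulas $\gamma(f_2\circ f_1)=c_1\gamma(f_1)+(1-c_1)\gamma(f_2)$ with $c_1=\frac{a_2(1-a_1)}{1-a_1a_2}$, and $\gamma(f_3\circ f_2)=(1-c_3)\gamma(f_2)+c_3\gamma(f_3)$ with $c_3=\frac{1-a_3}{1-a_2a_3}$. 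By Lemma~\ref{lemma:gammaprec}, $f_3\preceq f_2\circ f_1$ (resp.\ $f_3\circ f_2\preceq f_1$) becomes a comparison of $\gamma(f_2\circ f_1)$ with $\gamma(f_3)$ (resp.\ of $\gamma(f_3\circ f_2)$ with $\gamma(f_1)$), in a direction dictated by the slopes of the pieces. The finitely many degenerate configurations where some slope, or one of the products $a_1a_2$, $a_2a_3$, equals $1$ are handled separately from the raw expansions $B_1-B_0=(1-a_3)(a_2b_1+b_2)+(a_1a_2-1)b_3$ and $B_2-B_0=(a_1-1)(a_3b_2+b_3)+(1-a_2a_3)b_1$, each of which then makes one of these differences obviously nonnegative.

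The rest is a sign chase. For Lemma~\ref{lemma:comp+-+} ($a_1,a_3\ge1$, $a_2<1$, $A\ge1$, $\gamma(f_1)\ge\gamma(f_2)\ge\gamma(f_3)$) I would split on the sign of $1-a_1a_2$. If $a_1a_2>1$ then $c_1\ge1$, hence $\gamma(f_2\circ f_1)\ge\gamma(f_1)\ge\gamma(f_3)$, giving $f_3\preceq f_2\circ f_1$. If $a_1a_2<1$ then $A\ge1$ forces $a_3>1$; if moreover $a_2a_3>1$ then $c_3\ge1$, hence $\gamma(f_3\circ f_2)\le\gamma(f_3)\le\gamma(f_1)$, giving $f_3\circ f_2\preceq f_1$. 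The only case left is $a_1a_2<1$ and $a_2a_3<1$, which then forces $a_1,a_3>1$; here $c_1\le0$ and $c_3<0$, and I must show that at least one of $\gamma(f_2\circ f_1)\le\gamma(f_3)$, $\gamma(f_3\circ f_2)\ge\gamma(f_1)$ holds. After disposing of the easy subcases $\gamma(f_1)=\gamma(f_2)$ and $\gamma(f_2)=\gamma(f_3)$, assume $\gamma(f_1)>\gamma(f_2)>\gamma(f_3)$; if both inequalities failed then $|c_1|<\frac{\gamma(f_2)-\gamma(f_3)}{\gamma(f_1)-\gamma(f_2)}$ and $|c_3|<\frac{\gamma(f_1)-\gamma(f_2)}{\gamma(f_2)-\gamma(f_3)}$, so $|c_1|\,|c_3|<1$; but
\[
|c_1|\,|c_3|-1=\frac{a_2(a_1-1)(a_3-1)-(1-a_1a_2)(1-a_2a_3)}{(1-a_1a_2)(1-a_2a_3)}=\frac{(1-a_2)(a_1a_2a_3-1)}{(1-a_1a_2)(1-a_2a_3)}\ge0
\]
because $a_2<1$ and $a_1a_2a_3\ge1$ --- a contradiction. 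This last case, i.e.\ the identity $|c_1|\,|c_3|-1=\frac{(1-a_2)(a_1a_2a_3-1)}{(1-a_1a_2)(1-a_2a_3)}$, is the crux, and is exactly where the hypothesis $a_1a_2a_3\ge1$ is used; everything else is bookkeeping.

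Finally, Lemma~\ref{lemma:comp-+-} would be proved the same way. The identities for $B_1-B_0$, $B_2-B_0$, $c_1$, $c_3$ and for $|c_1|\,|c_3|-1$ are unchanged, and under its hypotheses ($a_1,a_3<1$, $a_2\ge1$, $a_1a_2a_3<1$) the numerator $(1-a_2)(a_1a_2a_3-1)$ is once more a product of two nonpositive factors, hence $\ge0$; the case split on the signs of $1-a_1a_2$ and $1-a_2a_3$ (and the degenerate configurations) goes through verbatim with all slope inequalities reversed. The main obstacle I anticipate is simply the volume of sign-checking and keeping the degenerate cases straight; the only substantive estimate is the displayed identity for $|c_1|\,|c_3|-1$.
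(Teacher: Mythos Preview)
Your proposal is correct and follows the same overall architecture as the paper: reduce the pointwise inequality to the disjunction $f_3\preceq f_2\circ f_1$ or $f_3\circ f_2\preceq f_1$, then split cases on the signs of $1-a_1a_2$ and $1-a_2a_3$, disposing of the easy cases directly and deriving a contradiction in the remaining one. The paper's write-up proves Lemma~\ref{lemma:comp+-+} and declares Lemma~\ref{lemma:comp-+-} analogous, exactly as you do.

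The only substantive difference is the crux step. In the residual case ($a_1a_2<1$, $a_2a_3<1$, both desired $\gamma$-comparisons failing) the paper argues qualitatively via Lemma~\ref{lemma:gammacomp}: it shows that $\gamma(f_3\circ f_2\circ f_1)\ge\gamma(f_1)$ (viewing the triple as $(f_3\circ f_2)\circ f_1$) and simultaneously $\gamma(f_3\circ f_2\circ f_1)\le\gamma(f_3)$ (viewing it as $f_3\circ(f_2\circ f_1)$), forcing $\gamma(f_1)=\gamma(f_2)=\gamma(f_3)$ and hence a contradiction with $\gamma(f_3\circ f_2)<\gamma(f_1)$. You instead compute the explicit identity
\[
|c_1|\,|c_3|-1=\frac{(1-a_2)(a_1a_2a_3-1)}{(1-a_1a_2)(1-a_2a_3)}\ge 0,
\]
which directly contradicts $|c_1|\,|c_3|<1$. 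Your route is more self-contained (it does not lean on Lemma~\ref{lemma:gammacomp}) and makes transparent exactly where the hypothesis on $a_1a_2a_3$ enters; the paper's route is shorter to state once Lemma~\ref{lemma:gammacomp} is available and avoids the bookkeeping with the affine weights $c_1,c_3$. Both are valid, and the identity you display is indeed equivalent to the sandwich argument the paper gives.
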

\begin{proof}
We only prove Lemma \ref{lemma:comp+-+} since Lemma \ref{lemma:comp-+-} can be prove in a similar way.
If \(a_2\cdot a_3\ge 1\), then \(\gamma(f_3\circ f_2)\le \gamma(f_3)\le \gamma(f_1)\)
by  \ref{lemma:gammacompeq} and \ref{lemma:gammacomp+->} in Lemma \ref{lemma:gammacomp},
and it implies \(f_3\circ f_2\circ f_1(x)\le f_1\circ f_3\circ f_2(x)\)
by \ref{lemma:gammaprec11} and \ref{lemma:gammaprec++} in Lemma \ref{lemma:gammaprec}.
If \(a_2\cdot a_3< 1\) and \(\gamma(f_3\circ f_2)\ge \gamma(f_1)\),
then \(f_3\circ f_2\circ f_1(x)\le f_1\circ f_3\circ f_2(x)\)
by \ref{lemma:gammaprec+-} in Lemma \ref{lemma:gammaprec}.

If \(a_1\cdot a_2\ge 1\), then \(\gamma(f_2\circ f_1)\ge \gamma(f_1)\ge \gamma(f_3)\)
by  \ref{lemma:gammacompeq} and \ref{lemma:gammacomp-+>} in Lemma \ref{lemma:gammacomp},
and it implies \(f_3\circ f_2\circ f_1(x)\le f_2\circ f_1\circ f_3(x)\) by \ref{lemma:gammaprec11} and \ref{lemma:gammaprec++} in Lemma \ref{lemma:gammaprec}.
If \(a_1\cdot a_2< 1\) and \(\gamma(f_2\circ f_1)\le \gamma(f_3)\),
then \(f_3\circ f_2\circ f_1(x)\le f_2\circ f_1\circ f_3(x)\) by \ref{lemma:gammaprec+-} in Lemma \ref{lemma:gammaprec}.

Otherwise, we have
\(a_2\cdot a_3< 1\), \(a_1\cdot a_2<1\),
\(\gamma(f_3\circ f_2)<\gamma(f_1)\),
and \(\gamma(f_2\circ f_1)>\gamma(f_3)\).
Then we have 
\(\gamma((f_3\circ f_2)\circ f_1)\ge \gamma(f_1)\) by \ref{lemma:gammacomp-+>} in Lemma \ref{lemma:gammacomp},
and \(\gamma(f_3\circ (f_2\circ f_1))\le \gamma(f_3)\) by \ref{lemma:gammacomp+->} in Lemma \ref{lemma:gammacomp}
since \(a_1\cdot a_2\cdot a_3\ge 1\).
Therefore \(\gamma(f_1)=\gamma(f_2)=\gamma(f_3)\),
This together with \(\gamma(f_3\circ f_2)< \gamma(f_1)\) contradicts 
\ref{lemma:gammacompeq}  in Lemma \ref{lemma:gammacomp}. 
\end{proof}

By Lemmas \ref{lemma:gammaprec}, \ref{lemma:gammacomp}, \ref{lemma:atmost2}, \ref{lemma:comp+-+}, and \ref{lemma:comp-+-},
we get the following lemmas.
\begin{lemma}\label{lemma:prod>}
If \(\prod_{i=1}^n a_i\ge 1\), then
there exists an optimal permutation \(\sigma\) such that, 
for some two integers \(s,t\) \((0\le s\le t\le n)\), 
\(\delta(f_{\sigma(t+1)})=\dots=\delta(f_{\sigma(n)})=\delta(f_{\sigma(1)})=\dots=\delta(f_{\sigma(s)})=-1\), 
 \(\delta(f_{\sigma(s+1)})=\dots=\delta(f_{\sigma(t)})=1\), 
\(\gamma_{\sigma(t+1)}\le\dots\le\gamma_{\sigma(n)}\leq
\gamma_{\sigma(1)}\le\dots\le \gamma_{\sigma(s)}\),  
and 
\(\gamma_{\sigma(s+1)}\le\dots\le\gamma_{\sigma(t)}\).  
\end{lemma}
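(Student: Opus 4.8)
\textbf{Proof proposal for Lemma \ref{lemma:prod>}.}

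The plan is to take an optimal permutation $\sigma$ guaranteed by Lemma \ref{lemma:atmost2} (so that at most two indices $i$ satisfy $\delta(f_{\sigma(i)})\cdot\delta(f_{\sigma(i+1)})=-1$), and then argue that among all such optimal permutations we may choose one that is in addition sorted by $\gamma$ within each of the two $\delta$-monochromatic blocks and arranged in the claimed cyclic pattern. Since there are at most two sign changes in the cyclic sequence $\delta(f_{\sigma(1)}),\dots,\delta(f_{\sigma(n)}),\delta(f_{\sigma(1)})$, the multiset $\{\sigma(1),\dots,\sigma(n)\}$ splits, after a cyclic shift, into one contiguous block of functions with $\delta=-1$ followed by one contiguous block with $\delta=+1$. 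When $\prod_i a_i\ge 1$, a cyclic shift does not change the composed value (because $f_{\sigma(n)}\circ\dots\circ f_{\sigma(1)}$ and its cyclic rotations are conjugate affine maps whose value at the fixed point behaves monotonically; more precisely, as computed in the total-composition algorithm, rotating changes the value by a factor involving $a=\prod a_i\ge 1$ and one can always rotate to the extremal one), so without loss of generality we may place the $-1$-block at positions $t+1,\dots,n,1,\dots,s$ and the $+1$-block at positions $s+1,\dots,t$, which is exactly the cyclic structure asserted.

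Next I would establish the sorting-by-$\gamma$ claim inside each block. Fix the block structure above and suppose, for contradiction, that within the $+1$-block (positions $s+1,\dots,t$) there are two consecutive functions violating $\gamma\le\gamma$, i.e.\ $\gamma(f_{\sigma(j)})>\gamma(f_{\sigma(j+1)})$ with $\delta(f_{\sigma(j)})=\delta(f_{\sigma(j+1)})=+1$. By Lemma \ref{lemma:gammaprec}\ref{lemma:gammaprec11} and \ref{lemma:gammaprec++}, this gives $f_{\sigma(j+1)}\preceq f_{\sigma(j)}$, hence by Lemma \ref{lemma:localchange} swapping them does not decrease the value; doing this repeatedly (a bubble-sort argument) sorts the $+1$-block by $\gamma$ while preserving both optimality and the sign pattern. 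The same argument with Lemma \ref{lemma:gammaprec}\ref{lemma:gammaprec--} handles the $-1$-block. The one subtlety is that the $-1$-block wraps around position $n$ to position $1$, so I must check that a swap of the two functions straddling the wrap-around (at positions $n$ and $1$) is also value-nondecreasing; this follows again by conjugating the whole composition by a cyclic shift so that the straddling pair becomes an ordinary adjacent pair, applying Lemma \ref{lemma:localchange}, and shifting back, using $\prod a_i\ge 1$ to keep the value unchanged under the shift.

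The last point to verify is the boundary inequality $\gamma_{\sigma(n)}\le\gamma_{\sigma(1)}$ linking the two halves of the wrapped $-1$-block; but since after the sorting step the $-1$-block read cyclically starting at position $t+1$ is $\gamma$-nondecreasing all the way around to position $s$, this is just one of the inequalities in that chain, so nothing extra is needed. The main obstacle I anticipate is the bookkeeping around the wrap-around: Lemmas \ref{lemma:localchange} and \ref{lemma:extotalorder} are stated for linear orderings, not cyclic ones, so the argument that cyclic shifts preserve the objective value (which uses $\prod a_i\ge 1$ in an essential way, since for $\prod a_i<1$ the extremal rotation is the opposite one — that is presumably the content of a companion lemma for $\prod a_i<1$) has to be made carefully, most cleanly via the recurrence $d_{k+1}=a_{\sigma(k)}(d_k-a\,c)-b_{\sigma(k)}(a-1)+a\,c$ with $a=\prod_i a_i$ already derived for the total-composition algorithm, from which one reads off directly how the value changes under a rotation and that choosing the right rotation loses nothing.
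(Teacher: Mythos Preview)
Your proposal has a genuine gap: the claim that cyclic rotations of the composition order preserve (or can be made to preserve) the objective value when $\prod_i a_i\ge 1$ is false. The objective $f_{\sigma(n)}\circ\cdots\circ f_{\sigma(1)}(c)$ depends on the \emph{linear} order, and the recurrence $d_{k+1}=a_{\sigma(k)}(d_k-a\,c)-b_{\sigma(k)}(a-1)+a\,c$ you cite shows precisely that the values $d_k$ for the different rotations are in general all different; the assumption $a=\prod_i a_i\ge 1$ does not make them equal. So you cannot ``rotate to put the $-1$-block where you want'' and stay optimal, nor can you conjugate by a cyclic shift to swap the pair straddling positions $n$ and $1$: positions $1$ and $n$ are not adjacent in the composition, and Lemma~\ref{lemma:localchange} simply does not apply there.

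The paper's proof does not use cyclic shifts at all. Starting from an optimal $\sigma$ with at most two sign changes (Lemma~\ref{lemma:atmost2}), it sorts each of the at most three $\delta$-runs by $\gamma$ via Lemma~\ref{lemma:gammaprec} and Lemma~\ref{lemma:localchange} — this part you have right. The real work is in linking the first and last runs, and it splits into two cases according to the sign pattern. If the pattern is $(-,+,-)$, the paper sets $g=f_{\sigma(n-1)}\circ\cdots\circ f_{\sigma(2)}$, notes that its slope $(\prod_i a_i)/(a_{\sigma(1)}a_{\sigma(n)})\ge 1$ (this is where $\prod_i a_i\ge 1$ is used), and then reads off from optimality of $\sigma$ and Lemma~\ref{lemma:gammaprec}\ref{lemma:gammaprec+-} that $\gamma(f_{\sigma(1)})\ge\gamma(g)\ge\gamma(f_{\sigma(n)})$, giving the bridge inequality $\gamma_{\sigma(n)}\le\gamma_{\sigma(1)}$ directly. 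If the pattern is $(+,-,+)$, the paper uses Lemma~\ref{lemma:comp+-+} (again this is where $\prod_i a_i\ge 1$ enters) together with Lemma~\ref{lemma:gammaprec}\ref{lemma:gammaprec+-} to show that one of three block-level rearrangements is at least as good, reducing to one of the earlier cases. Your attempt does not invoke Lemma~\ref{lemma:comp+-+} at all, and without it (or something equivalent) the $(+,-,+)$ case cannot be handled.
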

\begin{lemma}\label{lemma:prod<}
If \(\prod_{i=1}^n a_i< 1\),
then there exists an optimal permutation \(\sigma\) such that, for some two integers \(s,t\)
\((0\le s\le t\le n)\), 
\(\delta(f_{\sigma(t+1)})=\dots=\delta(f_{\sigma(n)})=\delta(f_{\sigma(1)})=\dots=\delta(f_{\sigma(s)})=1\), 
\(\delta(f_{\sigma(s+1)})=\dots=\delta(f_{\sigma(t)})=-1\), 
 \(\gamma_{\sigma(t+1)}\le\dots\le\gamma_{\sigma(n)} \leq \gamma_{\sigma(1)}\le\dots\le \gamma_{\sigma(s)}\), and
\(\gamma_{\sigma(s+1)}\le\dots\le\gamma_{\sigma(t)}\).
\end{lemma}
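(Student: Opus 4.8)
These two lemmas are proved together; I will describe the argument for the final statement, Lemma~\ref{lemma:prod<} (the case $\prod_i a_i<1$), since Lemma~\ref{lemma:prod>} is its mirror image, obtained by interchanging the roles of the slope-$\ge 1$ and the slope-$<1$ functions and using Lemmas~\ref{lemma:comp+-+} and~\ref{lemma:comp+-+-} in place of Lemmas~\ref{lemma:comp-+-} and~\ref{lemma:comp-+-+}. The plan is a ``most canonical optimum'' argument: I would fix an optimal permutation $\sigma$ minimizing, in this order of priority, (1) the number of indices $i$ with $\delta(f_{\sigma(i)})\,\delta(f_{\sigma(i+1)})=-1$, and then (2) a secondary potential that vanishes exactly on the permutations of the claimed normal form (roughly, the number of pairs of functions of the same sign $\delta$ that are out of cyclic $\gamma$-order) and that is decreased by each of the exchanges used below. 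By Lemma~\ref{lemma:atmost2} the quantity (1) is at most two, and by Lemma~\ref{lemma:gammaprec}$(2)$--$(3)$ together with Lemma~\ref{lemma:localchange} an adjacent swap of two functions that are in the same maximal $\delta$-block and out of $\gamma$-order is value-nondecreasing and lowers (2) without changing (1); so we may assume each maximal $\delta$-block of $\sigma$ is $\gamma$-sorted. If (1) is $0$ or $1$, the $\delta$-pattern is a single block, or $(+1)^a(-1)^b$, or $(-1)^a(+1)^b$, which is already the claimed form with one of the three blocks empty and suitable $s,t$, and we are done.

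The core case is (1) $=2$, so the $\delta$-pattern is $\varepsilon^a(-\varepsilon)^b\varepsilon^c$ with $a,b,c\ge 1$; write $g_1,g_2,g_3$ for the compositions of the three blocks, so the objective composition is $g_3\circ g_2\circ g_1$ and its slope equals $\prod_i a_i$. First I would show $\gamma(g_1)\ge\gamma(g_2)\ge\gamma(g_3)$: the blocks $g_1,g_2$ (and $g_2,g_3$) have slopes on opposite sides of $1$, so by Lemma~\ref{lemma:gammaprec}$(4)$ a violation $\gamma(g_j)<\gamma(g_{j+1})$ would let us swap the two adjacent blocks, obtaining a permutation of no smaller objective value but with only one sign change, contradicting the minimality of (1). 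Next, if $\varepsilon=-1$ then $g_1,g_3$ have slope $<1$, $g_2$ has slope $\ge 1$, and the product of their slopes is $\prod_i a_i<1$, so Lemma~\ref{lemma:comp-+-} gives $g_3\circ g_2\circ g_1\le\max\{\,g_2\circ g_1\circ g_3,\ g_1\circ g_3\circ g_2\,\}$; each of the two permutations realizing the right-hand side has only one sign change, so one of them is optimal with fewer sign changes, again contradicting minimality of (1). Therefore $\varepsilon=+1$: the $\delta$-pattern is $(+1)^a(-1)^b(+1)^c$, which, together with the within-block sortedness already arranged, gives every assertion of Lemma~\ref{lemma:prod<} except the wrap-around inequality $\gamma_{\sigma(n)}\le\gamma_{\sigma(1)}$.

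It remains to use priority (2) to force $\gamma_{\sigma(n)}\le\gamma_{\sigma(1)}$, i.e.\ that the first block $B_1$ and the last block $B_3$ list the slope-$\ge 1$ functions in cyclically increasing $\gamma$-order. Suppose not; then $f_{\sigma(n)}$ (the $\gamma$-largest function of $B_3$) and $f_{\sigma(1)}$ (the $\gamma$-smallest function of $B_1$) both have slope $\ge 1$ and $\gamma_{\sigma(n)}>\gamma_{\sigma(1)}$. I would consider the two natural repairs --- move $f_{\sigma(n)}$ to the front, or move $f_{\sigma(1)}$ to the back --- each of which changes the objective composition only by cycling one slope-$\ge 1$ function past the composition $K$ of the other $n-1$ functions, whose slope equals $\prod_i a_i$ divided by a number $\ge 1$ and is therefore still $<1$. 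By Lemma~\ref{lemma:gammaprec}$(4)$ the first repair is value-nondecreasing exactly when $\gamma(K)\le\gamma_{\sigma(n)}$, and the second exactly when $\gamma(K)\ge\gamma_{\sigma(1)}$. Writing $K=g_3'\circ g_2\circ g_1$ (respectively $g_3\circ g_2\circ g_1''$), where $g_3'$ (respectively $g_1''$) is the block $B_3$ (respectively $B_1$) with its extreme function deleted, I would pin down $\gamma(K)$ with Lemma~\ref{lemma:gammacomp}, splitting according to whether the partial products $\mathrm{slope}(g_2\circ g_1)$ and $\mathrm{slope}(g_2\circ g_1'')$ exceed $1$, and conclude that at least one of the two repairs works; since either repair, followed by re-sorting within blocks, strictly decreases (2) (the moved function being extremal in its block, the wrap-inversions it was involved in all disappear and no new ones of its kind appear), this contradicts the choice of $\sigma$. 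Hence $\gamma_{\sigma(n)}\le\gamma_{\sigma(1)}$ and the normal form is achieved.

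The step I expect to be the main obstacle is exactly this last one: establishing that one of the two repairs never decreases the objective --- equivalently, locating $\gamma(K)$ on the right side of $\gamma_{\sigma(n)}$ or of $\gamma_{\sigma(1)}$ --- requires a somewhat delicate case analysis that tracks how the $\gamma$ of a long composition relates to the $\gamma$'s of its three blocks through the mixed-slope middle block $g_2$, precisely the kind of reasoning packaged in Lemma~\ref{lemma:gammacomp} but with several subcases on partial slope products (including the annoying endpoints where a slope equals $1$ and the corresponding $\gamma$ is $\pm\infty$). By comparison, checking the degenerate boundary cases $s=0$, $t=n$, an empty $B_1$ or $B_3$ against the exact quantifiers in the statement, and transporting the whole argument to Lemma~\ref{lemma:prod>}, is routine.
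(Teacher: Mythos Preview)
Your plan coincides with the paper's proof up to the last step. The paper, like you, starts from Lemma~\ref{lemma:atmost2}, sorts each maximal $\delta$-block by $\gamma$ via Lemma~\ref{lemma:gammaprec} and Lemma~\ref{lemma:localchange}, disposes of the ``wrong'' sign pattern (your case $\varepsilon=-1$ for Lemma~\ref{lemma:prod<}) by first checking $\gamma(g_1)\ge\gamma(g_2)\ge\gamma(g_3)$ and then invoking Lemma~\ref{lemma:comp-+-} to cut the number of sign changes, and is then left with the pattern $(+1)^a(-1)^b(+1)^c$ and only the wrap-around inequality $\gamma_{\sigma(n)}\le\gamma_{\sigma(1)}$ to prove.

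Where you diverge is precisely the step you flag as the main obstacle, and the paper's argument there is a one-liner rather than the case analysis you outline. Instead of cycling an endpoint past the remaining $n-1$ functions and trying to locate $\gamma(K_1)$ or $\gamma(K_2)$, the paper sets $g=f_{\sigma(n-1)}\circ\dots\circ f_{\sigma(2)}$, the composition of the \emph{middle} $n-2$ functions. Since $\prod_i a_i<1$ and the two removed slopes $a_{\sigma(1)},a_{\sigma(n)}$ are both $\ge 1$, one has $\delta(g)=-1$, opposite to $\delta(f_{\sigma(1)})=\delta(f_{\sigma(n)})=+1$. Optimality of $\sigma$ forbids a strict improvement from either of the swaps $g\leftrightarrow f_{\sigma(1)}$ and $g\leftrightarrow f_{\sigma(n)}$, and Lemma~\ref{lemma:gammaprec}\ref{lemma:gammaprec+-} converts these two constraints directly into $\gamma_{\sigma(n)}\le\gamma(g)\le\gamma_{\sigma(1)}$; no appeal to Lemma~\ref{lemma:gammacomp}, no splitting on partial slope products, and no secondary potential are needed. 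Your two ``repairs'' can in fact be pushed through---if both failed one would obtain $\gamma_{\sigma(n)}<\gamma(K_1)\le\gamma(F)\le\gamma(K_2)<\gamma_{\sigma(1)}$ for the full composition $F$, contradicting the assumed violation $\gamma_{\sigma(n)}>\gamma_{\sigma(1)}$---but the paper's choice of the common middle block $g$ makes the two optimality constraints directly comparable and collapses the anticipated case analysis into a single sentence.
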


\begin{proof}
We only prove Lemma \ref{lemma:prod>} since Lemma \ref{lemma:prod<} can be proved in a similar way.
By Lemma \ref{lemma:atmost2},
there exists an optimal permutation \(\sigma\) and two integers \(s,t\) \((0\le s\le t\le n)\)
such that
\(\delta(f_{\sigma(1)})=\dots=\delta(f_{\sigma(s)})=-\delta(f_{\sigma(s+1)})=\dots=-\delta(f_{\sigma(t)})=\delta(f_{\sigma(t+1)})=\dots=\delta(f_{\sigma(n)})\).
By Lemma \ref{lemma:gammaprec}, we have 
\begin{align*}
\gamma_{\sigma(1)}\le\dots\le \gamma_{\sigma(s)},~
\gamma_{\sigma(s+1)}\le\dots\le \gamma_{\sigma(t)},~
\gamma_{\sigma(t+1)}\le\dots\le \gamma_{\sigma(n)}.
\end{align*}
This implies that the lemma holds when \(s=0\) or \(t=n\). 
For \(0<s\le t<n\), we separately consider the following  two cases.

\noindent\textbf{Case 1:} If \(\delta(f_{\sigma(s+1)})=\dots=\delta(f_{\sigma(t)})=+1\), 
let \(g=f_{\sigma(n-1)}\circ\dots\circ f_{\sigma(2)}\).
Then  Lemma \ref{lemma:gammaprec} and the optimality of \(\sigma\) imply  
\(\gamma(f_{\sigma(1)})\ge \gamma(g)\ge \gamma(f_{\sigma(n)})\), 
since \(-\delta(f_{\sigma(1)})=\delta(g)=-\delta(f_{\sigma(n)})=+1\).
This proves the lemma. 

\noindent\textbf{Case 2:} If \(\delta(f_{\sigma(s+1)})=\dots=\delta(f_{\sigma(t)})=-1\), 
then let \(h_1=f_{\sigma(s)}\circ\dots\circ f_{\sigma(1)}\),
\(h_2=f_{\sigma(t)}\circ\dots\circ f_{\sigma(s+1)}\) and
\(h_3=f_{\sigma(n)}\circ\dots\circ f_{\sigma(t+1)}\).
If \(\gamma(h_1)<\gamma(h_2)\), then \(h_3\circ h_2\circ h_1(x)\le h_3\circ h_1\circ h_2(x)\)
by \ref{lemma:gammaprec+-} in Lemma \ref{lemma:gammaprec}.
If \(\gamma(h_2)<\gamma(h_3)\), then \(h_3\circ h_2\circ h_1(x)\le h_2\circ h_3\circ h_1(x)\)
by \ref{lemma:gammaprec+-} in Lemma \ref{lemma:gammaprec}.
Otherwise (i.e., \(\gamma(h_1)\ge \gamma(h_2)\ge \gamma(h_3)\)), we have
\begin{align*}
h_3\circ h_2\circ h_1(x)\le \max\{h_2\circ h_1\circ h_3(x),\, h_1\circ h_3\circ h_2(x)\}
\end{align*}
by Lemma \ref{lemma:comp+-+}.
In either case, we can obtain a desired optimal solution by modifying $\sigma$. 
\end{proof}

By Lemmas \ref{lemma:prod>} and \ref{lemma:prod<}, we obtain Lemma \ref{lemma:tc}.

% Algorithm \ref{alg:tc} is also applicable for 
% the minimum total composition ordering problem \(((f_i)_{i\in[n]},c)\),
% since the optimal solution for this problem is 
% the same as the optimal solution  for
% maximum total composition ordering problem \(((\tilde{f}_i)_{i\in[n]},c)\)
% where \(\tilde{f}_i(x)=-f_i(-x)\) by Lemma \ref{lemma:minmax}.

\section{Negative Results}\label{sec:nr}
In the previous sections, we show that both the total and partial composition ordering problems can be solved efficiently if all $f_i$'s are monotone linear.  
It turns out that they cannot be generalized to nonlinear functions $f_i$. 
In this section, we show the optimal composition ordering problems  are in general intractable, even if all $f_i$'s are 
monotone increasing, piecewise linear functions with at most two pieces. 
We remark that the maximum total composition ordering problem is known to be \strongly NP-hard, even if all $f_i$'s are  
 monotone increasing,  
{\em concave},  piecewise linear functions with at most two pieces \cite{cheng1998tco}, which can be shown by considering the time-dependent scheduling problem. 
%In this paper, we consider the convex case for the maximum total composition ordering problem and the convex and concave cases for the partial composition ordering problem. 

For our reductions, 
we use the following NP-complete problems (see \cite{garey1979cai,ng2010ppa}).
\begin{description}
\item[\PARTITION:] Given $n$ positive integers $a_1,\dots,a_n$ with $\sum_{i=1}^n a_i=2T$,
ask whether exists a subset \(I\subseteq [n]\) such that $\sum_{i \in I}a_i=T$. 
\item[\PRODUCTPARTITION:] Given $n$ positive integers $a_1,\dots,a_n$ with $\prod_{i=1}^n a_i=T^2$,
ask whether there exists a subset \(I\subseteq [n]\) such that $\prod_{i\in I}a_i=T$. 
\end{description}
We use \PARTITION{} problem for concave case and
\PRODUCTPARTITION{} for convex case.
%Due to the space limitation, we provide the proof for convex case in Appendix.

\subsection{Monotone increasing, concave, piecewise linear functions with at most two pieces}
In this section, we consider the case in which all $f_i$'s are monotone increasing, concave,  piecewise linear functions with at most two pieces, 
that is, 
$f_i$ is given as
\begin{equation}
\label{eq-sec5-a1} 
f_i(x)=\min\{a_i^1x+b_i^1,\,a_i^2x+b_i^2\}
\end{equation}
for some reals $a_i^1$, $a_i^2$, $b_i^1$ and $b_i^2$ with $a_i^1, a_i^2>0$.

%% \begin{theorem}\label{theorem:2concaveNPH}
%% The maximum partial composition ordering problem is NP-hard, even if  all $f_i$'s are monotone increasing,  concave,  piecewise linear functions with at most two pieces. 
%% \end{theorem}
\renewcommand{\proofname}{Proof for Theorem \ref{theorem:hardness} (i)}
\begin{proof}
We show that \PARTITION{}  can be reduced to the problem. 
Let  $a_1,\dots, a_n$ denote positive integers with $\sum_{i=1}^n a_i=2T$.
We construct $n+2$ functions $f_i$ $(i=1, \dots , n+2)$ as follows:
\begin{align*}
f_i(x)=
\begin{cases}
x+a_i &{\rm if} \ i=1,\dots , n,\\
\min\left\{2x,~\frac{1}{2}x+\frac{3}{2}T\right\} &{\rm if} \ i=n+1,\\
6\alpha T(x-(3T-\frac{1}{2}))+(3T-\frac{1}{2}) &{\rm if} \ i=n+2.
\end{cases}
\end{align*}
It is clear  that all $f_i$'s are monotone, concave, and piecewise linear with at most two pieces.  
Moreover, we note that all $f_i$'s ($i=1,\dots,n+1$) satisfy $f_i(x) \geq x$ if $0\leq x \leq 3T$,
and \(f_{n+2}(x)\le x\) if $x \leq 3T-1/2$.
We claim that $3T$ is the optimal value for the maximum partial (total) composition ordering problem $((f_i)_{i\in [n+1]},c=0)$ 
if there exists a partition \(I\subseteq [n]\) such that $\sum_{i \in I}a_i=T$, 
and the optimal value is at most \(3T-1/2\) if $\sum_{i \in I}a_i\ne T$ for any partition \(I\subseteq [n]\).
This implies that
the optimal value for the maximum partial (total) composition ordering problem $((f_i)_{i\in[n+2]},c=0)$ is at least \(3\alpha T\) if $\sum_{i \in I}a_i=T$ for some \(I\subseteq [n]\), 
and at most \(3T\) if $\sum_{i \in I}a_i\ne T$ for any partition \(I\subseteq [n]\),
since \(f_{n+2}(3T)=3\alpha T+3T-1/2>3\alpha T\) and \(f_{n+2}(x)\le x\) if $x \le 3T-1/2$.
Thus, there exists no $\alpha$-approximation algorithm for the problems unless P=NP.

Let $\sigma:[n+1] \to [n+1]$ denote a permutation with $\sigma(l)=n+1$. 
Then define $I=\{\sigma(i) \mid i=1, \dots , l-1\}$ and 
$q=\sum_{i \in I} a_{i}$.
Note that $\sum_{i=l+1}^{n+1} a_{\sigma(i)}=\sum_{i \not\in I} a_i=2T-q$. 
Consider the function composition by $\sigma$:
\begin{align*}
f_{\sigma(n+1)}&\circ\dots\circ f_{\sigma(l+1)}\circ f_{\sigma(l)} \circ 
f_{\sigma(l-1)}\circ\dots\circ f_{\sigma(1)}(0)\\
&=f_{\sigma(n)}\circ\dots\circ f_{\sigma(l+1)}\circ f_{n+1}(q)\\
&=f_{\sigma(n)}\circ\dots\circ f_{\sigma(l+1)}\left(\min\left\{2q,~\frac{1}{2}q+\frac{3}{2}T\right\}\right)\\
&=\min\left\{2q,~\frac{1}{2}q+\frac{3}{2}T\right\}+2T-q \, =\min\left\{q,~-\frac{1}{2}q+\frac{3}{2}T\right\}+2T.
\end{align*}
Note that $\min\left\{q,~-\frac{1}{2}q+\frac{3}{2}T\right\} \leq T$ holds, where the equality holds only when $q=T$. 
This implies that 
\begin{equation}
\label{eq-sec5-a3}
f_{\sigma(n+1)}\circ\dots\circ f_{\sigma(l+1)}\circ f_{\sigma(l)} \circ 
f_{\sigma(l-1)}\circ\dots\circ f_{\sigma(1)}(0) 
\begin{cases}
= 3T &(q=T),\\
\leq 3T-1/2 &(q\ne T)
\end{cases}
\end{equation}
since \(q\) is an integer, which proves the claim. 
%% Thus, $I$ above is a desired partition if and only if 
%%  the total function composition by $\sigma$ has value $3T$.  
%% Recall that  all $f_i$ satisfy $f_i \geq x$ if $0\leq x \leq 3T$.
%% By this together with (\ref{eq-sec5-a3}) and Lemma \ref{lemma:partialtotal}, 
%%  $k$ can be restricted to $k=n+1$ for the instance of 
%% the maximum partial composition ordering problem, 
%% i.e., the partial setting can be solved by dealing with the total setting.
\end{proof}
\renewcommand{\proofname}{proof}

By Lemma \ref{lemma:partialtotal}, we have the following corollary. 
We also have the following corollary.
\begin{corollary}\label{cor:2concaveNPH}
The maximum total composition ordering problem 
is NP-hard, even if all $f_i$'s are represented by  $f_i(x)=\max\{x,\min\{a_i^1x+b_i^1,a_i^2x+b_i^2\}\}$ for some 
 reals $a_i^1$, $a_i^2$, $b_i^1$ and $b_i^2$ with $a_i^1, a_i^2>0$. 
\end{corollary}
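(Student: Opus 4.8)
The plan is to obtain this corollary as an immediate consequence of the reduction constructed in the proof of Theorem~\ref{theorem:hardness}~(i) together with Lemma~\ref{lemma:partialtotal}. Recall that that proof reduces \PARTITION{} to the maximum \emph{partial} composition ordering problem $((f_i)_{i\in[n+2]},c=0)$, where each $f_i$ is monotone increasing, concave, and piecewise linear with at most two pieces; a single linear piece, as in $f_i$ for $i\le n$ and $i=n+2$, is written as $\min$ of two identical linear pieces with positive slope. By Lemma~\ref{lemma:partialtotal}, the optimal value of this partial instance is equal to the optimal value of the maximum \emph{total} composition ordering problem $((\overline{f}_i)_{i\in[n+2]},c=0)$, where $\overline{f}_i(x)=\max\{f_i(x),x\}=\max\{x,\min\{a_i^1x+b_i^1,a_i^2x+b_i^2\}\}$.

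First I would check that each $\overline{f}_i$ is literally of the form claimed in the corollary with both slopes positive: for $i\le n$ one has $\overline{f}_i(x)=\max\{x,\min\{x+a_i,x+a_i\}\}$ with $a_i^1=a_i^2=1$; for $i=n+1$ one has $\overline{f}_{n+1}(x)=\max\{x,\min\{2x,\tfrac12x+\tfrac32T\}\}$ with slopes $2$ and $\tfrac12$; and for $i=n+2$ the inner function is a single line of slope $6\alpha T>0$, so $\overline{f}_{n+2}$ again fits the template. Hence the composition of the \PARTITION{}-to-partial reduction with the (value-preserving, identity-level) passage to $\overline{f}_i$ is a polynomial-time reduction whose image lies in the desired class of instances.

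Next I would transfer the gap analysis verbatim. Since Lemma~\ref{lemma:partialtotal} preserves the optimal objective value, the dichotomy established in the proof of Theorem~\ref{theorem:hardness}~(i) — optimum at least $3\alpha T$ when a balanced partition exists, and at most $3T$ otherwise (indeed at most $3T-\tfrac12$) — carries over to the total instance $((\overline{f}_i)_{i\in[n+2]},c=0)$. Taking $\alpha=1$ already separates a yes-instance from a no-instance, which suffices for NP-hardness; taking $\alpha$ arbitrarily small as in that proof would additionally preclude any constant-factor approximation.

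I do not expect a genuine obstacle here: the argument is essentially an invocation of Lemma~\ref{lemma:partialtotal} plus the observation that taking $\max\{\,\cdot\,,x\}$ does not leave the class of monotone two-piece functions. The only point requiring care is the purely syntactic one of confirming that the single-piece functions and the $\max\{\cdot,x\}$ closures all fit the form $\max\{x,\min\{a_i^1x+b_i^1,a_i^2x+b_i^2\}\}$ with $a_i^1,a_i^2>0$, so that the reduction stays within the stated class.
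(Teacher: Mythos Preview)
Your proposal is correct and follows exactly the paper's intended route: the corollary is stated immediately after the proof of Theorem~\ref{theorem:hardness}(i) with the sole justification ``By Lemma~\ref{lemma:partialtotal}'', and your write-up is precisely the natural unpacking of that sentence. The syntactic checks you perform (that each $\overline{f}_i$ from the reduction indeed has the form $\max\{x,\min\{a_i^1x+b_i^1,a_i^2x+b_i^2\}\}$ with positive slopes) and the transfer of the yes/no gap via the value-preserving equivalence of Lemma~\ref{lemma:partialtotal} are exactly what is needed and nothing more.
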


\subsection{Monotone increasing,  convex,  piecewise linear functions with at most two pieces}
In this section, we consider the case in which all $f_i$'s are monotone increasing,  convex,  piecewise linear functions with at most two pieces, 
that is, 
$f_i$ is given as
\begin{equation}
\label{eq-sec5-a4} 
f_i(x)=\max\{a_i^1x+b_i^1,\,a_i^2x+b_i^2\}
\end{equation}
for some reals $a_i^1$, $a_i^2$, $b_i^1$ and $b_i^2$ with $a_i^1, a_i^2>0$. 
Before showing the intractability of the problems, we present two basic properties for the function composition. 

For an integer $i \in [n]$, let $g_i=a_i(x-d)+d$. 
Then we have 
\begin{equation}
\label{eq-sec5-a5}
g_{n}\circ g_{n-1}\circ \dots\circ g_{1}(x)=(x-d)\prod_{i=1}^na_i+d.  
\end{equation}
Thus, $\prod_{i=1}^na_i >0$ implies the following inequalities:
\begin{equation}
\label{eq-sec5-a6}
\left.
\begin{array}{ll}
g_{n}\circ g_{n-1}\circ \dots\circ g_{1}(x) < d & {\rm if } \ x< d,\\
g_{n}\circ g_{n-1}\circ \dots\circ g_{1}(x) = d & {\rm if } \ x= d,\\
g_{n}\circ g_{n-1}\circ \dots\circ g_{1}(x)> d & {\rm if } \  x> d.
\end{array}
\right.
\end{equation}

We are now ready to prove the intractability. 
%% \begin{theorem}\label{theorem:2convexNPH}
%% Both the maximum total and partial composition ordering problems are NP-hard,  
%% even if  all $f_i$'s are monotone increasing,  convex,  piecewise linear functions with at most two pieces. 
%% \end{theorem}
\renewcommand{\proofname}{Proof for Theorem \ref{theorem:hardness} (ii)}
\begin{proof}
We show that \PRODUCTPARTITION{} can be reduced to them. 

Let  $a_1,\dots, a_n\,(> 1)$ denote positive integers with $\prod_{i=1}^na_i=T^2$.
We construct $n+2$ functions $f_i$ $(i=1, \dots , n+2)$ as follows:
\begin{align*}
f_i(x)=
\begin{cases}
\max\left\{\frac{1}{a_i}(x-T^2)+T^2,~ a_i(x-T^2)+T^2\right\} &\text{if}  \ i=1,\dots , n,\\
x+2T &\text{if} \ i=n+1,\\
4\alpha (T+1)^2\left(x-2T^2+\left(\frac{T}{T+1}\right)^2\right)-2T^2+\left(\frac{T}{T+1}\right)^2
&\text{if} \ i=n+2,\\
\end{cases}
\end{align*}

\noindent 
It is clear that all $f_i$'s are monotone, convex, and piecewise linear with at most two pieces.  
Moreover, we note that  $f_i \geq x$ holds for all functions $f_i$, which together with Lemma \ref{lemma:partialtotal} implies that  
the maximum partial and total composition ordering problems are equivalent for the functions $f_i$. 
Therefore, we deal with the total setting only. 
We now claim that $2T^2$ is the optimal value for the maximum partial composition ordering problem $((f_i)_{i\in [n+1},c=0)$ 
if there exists a desired partition \(I\subseteq [n]\) for \PRODUCTPARTITION{}, i.e., $\prod_{i\in I}a_i=T$,
and at most \(2T^2-(T/(T+1))^2\) otherwise.
This implies that the optimal value for the maximum total composition ordering problem $((f_i)_{i\in[n+2]},c=0)$ is at least \(2\alpha T^2\) 
if $\prod_{i\in I}a_i=T$ for an \(I\subseteq [n]\),
and at most \(2T^2\) if $\prod_{i\in I}a_i\ne T$ for any \(I\subseteq [n]\),
since \(f_{n+2}(2T^2)>2\alpha T^2\) and \(f_{n+2}(x)\le x\) if \(x\le 2T^2-(T/(T+1))^2\).
Thus, there exists no \(\alpha\)-approximation algorithm for the problems unless P=NP.

Let $\sigma:[n+1] \to [n+1]$ denote a permutation with $\sigma(l)=n+1$. 
Then define $I=\{\sigma(i) \mid i=1, \dots , l-1\}$ and 
$p=\frac{1}{\prod_{i \in I} a_{i}}$.
Note that $\prod_{i=l+1}^{n+1} a_{\sigma(i)}=\prod_{i \not\in I} a_i=pT^2$. 
Consider the function composition by $\sigma$:
\begin{align}
f_{\sigma(n+1)} \circ\dots \circ f_{\sigma(l+1)}\circ&  f_{\sigma(l)} \circ   f_{\sigma(l-1)}\circ\dots\circ f_{\sigma(1)}(0)\nonumber\\
&= \,\, f_{\sigma(n)}\circ\dots\circ f_{\sigma(l+1)}\circ f_{n+1}(T^2(1-p)) \label{eq-sec5-a7}\\
&=\,\, f_{\sigma(n)}\circ\dots\circ f_{\sigma(l+1)}(T^2(1-p)+2T) \nonumber\\
&\leq \,\, pT^2 (T^2(1-p)+2T-T^2)+T^2 \label{eq-sec5-a8}\\
&=\,\, 2T^2 -T^2(pT-1)^2 \nonumber
% &\leq \,\, 2T^2,  \label{eq-sec5-a9}
\end{align}
where \eqref{eq-sec5-a7} follows from \eqref{eq-sec5-a5} and \eqref{eq-sec5-a6}, 
and  
(\ref{eq-sec5-a8}) follows from  \eqref{eq-sec5-a5} and $a_{\sigma(i)} > 1$ for all $\ i \geq l+1$. 
We also note that \eqref{eq-sec5-a8} is satisfied by equality if and only if  \(T^2(1-p)+2T\ge T^2\),
i.e., \(p\le 2/T\).
%% Therefore, the value of the total composition is at most \(2T^2\) for any permutation $\sigma$, and it  
%% becomes \(2T^2\) only when $p=1/T$.
Thus, we have
\begin{align*}
f_{\sigma(n+1)}\circ\dots\circ f_{\sigma(1)}(0)
\begin{cases}
=2T^2 &(p=1/T),\\
\le 2T^2-\left(\frac{T}{T+1}\right)^2&(p\ne 1/T)
\end{cases}
\end{align*}
since $1/p$ is an integer, which proves the claim.
\end{proof}
\renewcommand{\proofname}{proof}

By Lemma~\ref{lemma:partialtotal}, we also have the following result. 
\begin{corollary}
The maximum total composition ordering problem 
is NP-hard, even if all $f_i$'s are represented by  $f_i(x)=\max\{x, a_i^1x+b_i^1,a_i^2x+b_i^2\}$ for some 
 reals $a_i^1$, $a_i^2$, $b_i^1$ and $b_i^2$ with $a_i^1, a_i^2>0$. 
\end{corollary}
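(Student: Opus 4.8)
The plan is to obtain this corollary directly from Theorem~\ref{theorem:hardness}~(ii) and Lemma~\ref{lemma:partialtotal}, with no new gadget construction. Recall that the proof of Theorem~\ref{theorem:hardness}~(ii) reduces \PRODUCTPARTITION{} to the maximum \emph{partial} composition ordering problem on instances where each $f_i$ is monotone, convex, and piecewise linear with at most two pieces, i.e.\ $f_i(x)=\max\{a_i^1x+b_i^1,\,a_i^2x+b_i^2\}$ with $a_i^1,a_i^2>0$, and it does so with a multiplicative promise gap of $\alpha$ in the optimal value. By Lemma~\ref{lemma:partialtotal}, the optimal value of the maximum partial composition ordering problem $((f_i)_{i\in[n]},c)$ equals that of the maximum total composition ordering problem $((\overline{f}_i)_{i\in[n]},c)$, where $\overline{f}_i(x)=\max\{f_i(x),x\}=\max\{x,\,a_i^1x+b_i^1,\,a_i^2x+b_i^2\}$. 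This is precisely the function class in the statement: each $\overline{f}_i$ has $a_i^1,a_i^2>0$, and it is automatically monotone nondecreasing since all three slopes ($1$, $a_i^1$, $a_i^2$) are nonnegative.

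First I would record that $f_i\mapsto\overline{f}_i$ is computable in constant time per function, so it converts any instance of the maximum partial composition ordering problem (in particular the hard instances of Theorem~\ref{theorem:hardness}~(ii)) into an instance of the maximum total composition ordering problem with functions of the claimed form and with \emph{the same} optimal value. Consequently the decision question ``is the optimal total-composition value at least $\theta$?'' for the $\overline{f}_i$-instance has the same answer as ``is the optimal partial-composition value at least $\theta$?'' for the original $f_i$-instance, and the latter is NP-hard by Theorem~\ref{theorem:hardness}~(ii). Hence the former is NP-hard as well. The same bookkeeping in fact transfers the full inapproximability statement, but NP-hardness is all the corollary asks for.

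There is essentially no obstacle here, and indeed a short-cut is available: in the reduction used for Theorem~\ref{theorem:hardness}~(ii) every constructed function already satisfies $f_i(x)\ge x$, so $\overline{f}_i=f_i$ and those functions are \emph{a fortiori} of the form $\max\{x,\,a_i^1x+b_i^1,\,a_i^2x+b_i^2\}$ with the $x$-term redundant; thus the very same family of instances witnesses the corollary. The only point that needs a moment's care is confirming that the promise gap of Theorem~\ref{theorem:hardness}~(ii) survives the passage from $f_i$ to $\overline{f}_i$, which is immediate because Lemma~\ref{lemma:partialtotal} gives an \emph{exact} equality of optimal values, not merely an inequality.
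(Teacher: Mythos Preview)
Your main argument is correct and is exactly the paper's approach: the paper deduces this corollary in one line from Theorem~\ref{theorem:hardness}~(ii) via Lemma~\ref{lemma:partialtotal}, precisely because $\overline{f}_i(x)=\max\{x,\,a_i^1x+b_i^1,\,a_i^2x+b_i^2\}$ has the required shape and the optimal values coincide.

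One small caveat about your ``short-cut'' remark: not \emph{every} function in the Theorem~\ref{theorem:hardness}~(ii) reduction satisfies $f_i(x)\ge x$ for all $x$; the amplifying function $f_{n+2}$ has $f_{n+2}(x)\le x$ on the region $x\le 2T^2-(T/(T+1))^2$ (this is even used in the proof). So the claim $\overline{f}_i=f_i$ is not literally true for $i=n+2$. This does not affect your actual argument, which goes through Lemma~\ref{lemma:partialtotal} and needs no such property, but you should drop or qualify that aside.
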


\bibliographystyle{plain}
\newpage
\bibliography{all}

\begin{thebibliography}{10}

\bibitem{babaioff2007aks}
Moshe Babaioff, Nicole Immorlica, David Kempe, and Robert Kleinberg.
\newblock A knapsack secretary problem with applications.
\newblock {\em Approximation, Randomization, and Combinatorial Optimization.
  Algorithms and Techniques}, pages 16--28, 2007.

\bibitem{babaioff2007msp}
Moshe Babaioff, Nicole Immorlica, and Robert Kleinberg.
\newblock Matroids, secretary problems, and online mechanisms.
\newblock In {\em Proceedings of the eighteenth annual ACM-SIAM symposium on
  Discrete algorithms}, pages 434--443, 2007.

\bibitem{cai1998oas}
Jin-Yi Cai, Pu~Cai, and Yixin Zhu.
\newblock On a scheduling problem of time deteriorating jobs.
\newblock {\em Journal of Complexity}, 14(2):190--209, 1998.

\bibitem{cheng1998tco}
T.~C.~E. Cheng and Q.~Ding.
\newblock The complexity of scheduling starting time dependent tasks with
  release times.
\newblock {\em Information Processing Letters}, 65(2):75--79, 1998.

\bibitem{cheng2004acs}
T.~C.~E. Cheng, Q~Ding, and B.M.T Lin.
\newblock A concise survey of scheduling with time-dependent processing times.
\newblock {\em European Journal of Operational Research}, 152(1):1--13, 2004.

\bibitem{cheng2003sjw}
T.~C.~E. Cheng, Qing Ding, Mikhail~Y. Kovalyov, Aleksander Bachman, and Adam
  Janiak.
\newblock Scheduling jobs with piecewise linear decreasing processing times.
\newblock {\em Naval Research Logistics}, 50(6):531--554, 2003.

\bibitem{dean2005aaa}
B.C. Dean, M.X. Goemans, and J.~Vondr{\'a}k.
\newblock Adaptivity and approximation for stochastic packing problems.
\newblock In {\em Proceedings of the sixteenth annual ACM-SIAM Symposium on
  Discrete Algorithms}, pages 395--404. Society for Industrial and Applied
  Mathematics, 2005.

\bibitem{dean2008ats}
B.C. Dean, M.X. Goemans, and J.~Vondr{\'a}k.
\newblock Approximating the stochastic knapsack problem: the benefit of
  adaptivity.
\newblock {\em Mathematics of Operations Research}, 33(4):945--964, 2008.

\bibitem{ferguson1989wst}
Thomas~S. Ferguson.
\newblock Who solved the secretary problem?
\newblock {\em Statical Science}, 4(3):282--289, 1989.

\bibitem{garey1979cai}
Michael~R. Garey and David~S. Johnson.
\newblock {\em Computers and Intractability: A Guide to the Theory of
  {NP}-Completeness}.
\newblock Freeman New York, 1979.

\bibitem{gawiejnowicz2007sdj}
S.~Gawiejnowicz.
\newblock Scheduling deteriorating jobs subject to job or machine availability
  constraints.
\newblock {\em European Journal of Operational Research}, 180(1):472--478,
  2007.

\bibitem{gawiejnowicz2008tds}
Stanis{\l}aw Gawiejnowicz.
\newblock {\em Time-Dependent Scheduling}.
\newblock Springer, 2008.

\bibitem{gawiejnowicz1995sjw}
Stanis{\l}aw Gawiejnowicz and Lidia Pankowska.
\newblock Scheduling jobs with varying processing times.
\newblock {\em Information Processing Letters}, 54(3):175--178, 1995.

\bibitem{gupta1988sfs}
Jatinder~N.D. Gupta and Sushil~K. Gupta.
\newblock Single facility scheduling with nonlinear processing times.
\newblock {\em Computers \& Industrial Engineering}, 14(4):387--393, 1988.

\bibitem{ho1993cos}
Kevin I-J. Ho, Joseph Y-T. Leung, and W-D. Wei.
\newblock Complexity of scheduling tasks with time-dependent execution times.
\newblock {\em Information Processing Letters}, 48(6):315--320, 1993.

\bibitem{melnikov1980ppo}
O.~I. Melnikov and Y.~M. Shafransky.
\newblock Parametric problem of scheduling theory.
\newblock {\em Cybernetics}, 15:352--357, 1980.

\bibitem{mosheiov1994sju}
Gur Mosheiov.
\newblock Scheduling jobs under simple linear deterioration.
\newblock {\em Computers \& Operations Research}, 21(6):653--659, 1994.

\bibitem{ng2010ppa}
C.~T. Ng, M.S. Barketau, T.~C.~E. Cheng, and Mikhail~Y. Kovalyov.
\newblock ``{P}roduct partition'' and related problems of scheduling and
  systems reliability: Computational complexity and approximation.
\newblock {\em European Journal of Operational Research}, 207:601--604, 2010.

\bibitem{gharan2011ovo}
Shayan Oveis~Gharan and Jan Vondr\'{a}k.
\newblock On variants of the matroid secretary problem.
\newblock In {\em Proceedings of the 19th Annual European Symposium on
  Algorithms}, pages 335--346, 2011.

\bibitem{tanaev1994sts}
V.~S. Tanaev, V.~S. Gordon, and Y.~M. Shafransky.
\newblock {\em Scheduling Theory: Single-Stage Systems}.
\newblock Kluwer Academic Publishers, 1994.

\bibitem{wajs1986paf}
W.~Wajs.
\newblock Polynomial algorithm for dynamic sequencing problem.
\newblock {\em Archiwum Automatyki i Telemechaniki}, 31(3):209--213, 1986.

\end{thebibliography}
\newpage
\appendix
\setlength\intextsep{10pt}
\section*{Appendix: Omitted Proofs}
\subsection*{Proof of Lemma \ref{lemma:gammaprec}}
\newtheorem*{lemma:gammaprec}{Lemma \ref{lemma:gammaprec}}
\begin{lemma:gammaprec}
Let \(f_i(x)=a_ix+b_i\) and \(f_j(x)=a_jx+b_j\) be $($non-identity$)$ monotone nondecreasing functions $($i.e., \((a_i,b_i),(a_j,b_j)\ne (1,0)\), \(a_i,a_j\ge 0\)$)$. 
Then we have the following statements;
\begin{enumerate}
\item if \(a_i,a_j=1\), then \(f_i\simeq f_j\),
\item if \(a_i,a_j\ge 1\) and \(a_i\cdot a_j>1\), then \(f_i\preceq f_j\siff\gamma(f_i)\le \gamma(f_j)\),
\item if \(a_i,a_j<1\), then \(f_i\preceq f_j\siff\gamma(f_i)\le \gamma(f_j)\),
\item if \(a_i\ge 1\), \(a_j<1\), then \(f_i\preceq f_j\siff\gamma(f_i)\ge \gamma(f_j)\) and \(f_i\succeq f_j\siff\gamma(f_i)\le \gamma(f_j)\).
\end{enumerate}
\end{lemma:gammaprec}
\begin{proof}
$\boldsymbol{ (a)}$:  It immediately follows from  \(f_i\circ f_j(x)=f_j\circ f_i(x)=x+b_i+b_j\).

\smallskip
\noindent
$\boldsymbol{(b)}$: If \(a_i,a_j>1\),  then the lemma holds, since we have the following equivalences  \(\eqref{eq:lprec}\siff \frac{b_i}{1-a_i}\le \frac{b_j}{1-a_j}\siff \gamma(f_i)\le\gamma(f_j)\). 
If \(a_i>1\) and \(a_j=1\), 
 then the lemma holds, since we have the following equivalences \(\eqref{eq:lprec}\siff 0\le b_j(1-a_i)\siff b_j<0\siff \gamma(f_j)=+\infty\siff \gamma(f_i)\le\gamma(f_j)\).
Otherwise (i.e., \(a_i=1\) and \(a_j>1\)), 
we have \(\eqref{eq:lprec}\siff b_i(1-a_j)\le 0\siff b_i>0\siff \gamma(f_i)=-\infty\siff \gamma(f_i)\le\gamma(f_j)\), which prove the lemma.

\smallskip
\noindent
$\boldsymbol{(c)}$: The lemma holds, since we have the following equivalences  \(\eqref{eq:lprec}\siff \frac{b_i}{1-a_i}\le \frac{b_j}{1-a_j}\siff \gamma(f_i)\le\gamma(f_j)\).

\smallskip
\noindent
$\boldsymbol{(d)}$: If \(a_i>1\), the lemma holds since we have the following equivalences \(\eqref{eq:lprec}\siff \frac{b_i}{1-a_i}\ge \frac{b_j}{1-a_j}\siff \gamma(f_i)\ge\gamma(f_j)\).
On the other hand, if \(a_i=1\), then \(f_i\preceq f_j\siff b_i(1-a_j)\le 0\siff b_i<0\siff \gamma(f_i)=+\infty\siff \gamma(f_i)\ge\gamma(f_j)\),
and \(f_i\succeq f_j\siff b_i(1-a_j)\ge 0\siff b_i>0\siff \gamma(f_i)=-\infty\siff \gamma(f_i)\le\gamma(f_j)\).
% \item If \(a_j>1\), the lemma holds since the equation \(\eqref{eq:lprec}\siff \frac{b_i}{1-a_i}\le \frac{b_j}{1-a_j}\siff \gamma(f_i)\ge\gamma(f_j)\).
% On the other hand, if  \(a_j=1\), then \(f_i\preceq f_j\siff -b_j(a_i-1)\ge 0\siff b_j>0\siff \gamma(f_j)=-\infty\siff \gamma(f_i)\ge\gamma(f_j)\).
\end{proof}

\subsection*{Proof of Lemma \ref{lemma:gammaprec-check}}
\newtheorem*{lemma:gammaprec-check}{Lemma \ref{lemma:gammaprec-check}}
\begin{lemma:gammaprec-check}
For $($non-identity$)$ monotone nondecreasing linear functions \(f_i(x)=a_ix+b_i\) and \(f_j(x)=a_jx+b_j\),  
we have the following statements;
\begin{enumerate}
\item if \(a_i, a_j \geq 1\) and \(\gamma(f_i)\le \gamma(f_j)\), then \(\overline{f}_i\preceq \overline{f}_j\),
\item if \(a_i, a_j < 1\)  and \(\gamma(f_i)\le \gamma(f_j)\), then \(\overline{f}_i\preceq \overline{f}_j\),
\item if \(a_i <1 \), \(a_j \geq 1\), and \(\gamma(f_i)\le \gamma(f_j)\), then \(\overline{f}_i\simeq \overline{f}_j\),
\item if \(a_i \geq 1\), \(a_j<1\), and \(\gamma(f_i)\le \gamma(f_j)\), then \(\overline{f}_i\succeq \overline{f}_j\).
\end{enumerate}
\end{lemma:gammaprec-check}
\begin{proof}
\noindent
$\boldsymbol{(a)}$:  We prove that \(\overline{f}_j\circ\overline{f}_i(x)\ge \overline{f}_i\circ\overline{f}_j(x)\) holds for any \(x\).
We separately consider three cases \(x< \gamma(f_i)\), \(\gamma(f_i)\le x\le \gamma(f_j)\), and \(\gamma(f_j)< x\) 
(see Figure  \ref{fig:gammaprec-check} \subref{fig:gammaprec-check++}).

\smallskip
\noindent \textbf{Case} $\boldsymbol{a}${\bf -1}: If \(x< \gamma(f_i)\), then 
we have
\(\overline{f}_i\circ \overline{f}_j(x)=\overline{f}_i(x)=x\) 
and \(\overline{f}_j\circ \overline{f}_i(x)=\overline{f}_j(x)=x\)
by \(x<\gamma(f_i)\le\gamma(f_j)\).
Thus, we obtain \(\overline{f}_j\circ \overline{f}_i(x)= \overline{f}_i\circ \overline{f}_j(x)\).

\smallskip
\noindent \textbf{Case} $\boldsymbol{a}${\bf -2}: If \(\gamma(f_i)\le x\le \gamma(f_j)\),  then it holds that 
\(\overline{f}_i\circ \overline{f}_j(x)=\overline{f}_i(x)=f_i(x)\) and 
\(\overline{f}_j\circ \overline{f}_i(x)=\overline{f}_j(f_i(x))\) 
by \(\gamma(f_i)\le x\le \gamma(f_j)\).
Thus, we obtain \(\overline{f}_j\circ \overline{f}_i(x)\ge \overline{f}_i\circ \overline{f}_j(x)\), 
since \(\overline{f}_j(y)\ge y\) for any \(y\).

\smallskip
\noindent\textbf{Case} $\boldsymbol{a}${\bf -3}: If \(\gamma(f_j)< x\), then 
we have
\(\overline{f}_i\circ \overline{f}_j(x)=\overline{f}_i(f_j(x))=f_i(f_j(x))\) 
by \(\gamma(f_i)\le\gamma(f_j)<x\le f_j(x)\),
and \(\overline{f}_j\circ \overline{f}_i(x)=\overline{f}_j(f_i(x))=f_j(f_i(x))\)
by \(\gamma(f_i)\le\gamma(f_j)<x\le f_i(x)\).
Thus, we obtain \(\overline{f}_j\circ \overline{f}_i(x)\ge \overline{f}_i\circ \overline{f}_j(x)\)
by \ref{lemma:gammaprec11} and \ref{lemma:gammaprec++} in Lemma~\ref{lemma:gammaprec}.

\smallskip
\noindent
$\boldsymbol{(b)}$:  We prove that \(\overline{f}_j\circ\overline{f}_i(x)\ge \overline{f}_i\circ\overline{f}_j(x)\) holds for any \(x\).
We separately consider four cases \(x< f_j^{-1}(\gamma(f_i))\), \(f_j^{-1}(\gamma(f_i))\le x< \gamma(f_i)\),
\(\gamma(f_i)\le x< \gamma(f_j)\), and \(\gamma(f_j)\le x\)
%where we define \(f_j^{-1}(\gamma(f_i))=-\infty\) 
(see Figure \ref{fig:gammaprec-check} \subref{fig:gammaprec-check--}).

\smallskip
\noindent\textbf{Case} $\boldsymbol{b}${\bf -1}: If \(x< f_j^{-1}(\gamma(f_i))\), then 
we have 
\(\overline{f}_i\circ \overline{f}_j(x)=\overline{f}_i(f_j(x))=f_i(f_j(x))\) 
by \(x\le f_j(x)\le\gamma(f_i)\le \gamma(f_j)\),
and \(\overline{f}_j\circ \overline{f}_i(x)=\overline{f}_j(f_i(x))=f_j(f_i(x))\)
by \(x\le f_i(x)\le\gamma(f_i)\le\gamma(f_j)\).
Thus, we obtain \(\overline{f}_j\circ \overline{f}_i(x)\ge\overline{f}_i\circ \overline{f}_j(x)\)
by \ref{lemma:gammaprec--} in Lemma \ref{lemma:gammaprec}.

\smallskip
\noindent\textbf{Case} $\boldsymbol{b}${\bf -2}: If \(f_j^{-1}(\gamma(f_i))\le x< \gamma(f_i)\), then 
we have 
\(\overline{f}_i\circ \overline{f}_j(x)=\overline{f}_i(f_j(x))=f_j(x)\)
and \(\overline{f}_j\circ \overline{f}_i(x)=\overline{f}_j(f_i(x))=f_j(f_i(x))\)
by \(x\le f_i(x)\le\gamma(f_i)\le f_j(x)\le \gamma(f_j)\).
Thus, we obtain \(\overline{f}_j\circ \overline{f}_i(x)\ge \overline{f}_i\circ \overline{f}_j(x)\), 
since \(f_i(x)\ge x\) and \(f_j\) is monotone nondecreasing.

\smallskip
\noindent\textbf{Case} $\boldsymbol{b}${\bf -3}: If \(\gamma(f_i)\le x< \gamma(f_j)\), 
then we have 
\(\overline{f}_i\circ \overline{f}_j(x)=\overline{f}_i(f_j(x))=f_j(x)\) 
and \(\overline{f}_j\circ \overline{f}_i(x)=\overline{f}_j(x)=f_j(x)\)
by \(\gamma(f_i)\le x\le f_j(x)< \gamma(f_j)\).
Thus, we obtain \(\overline{f}_j\circ \overline{f}_i(x) = \overline{f}_i\circ \overline{f}_j(x)\).

\smallskip
\noindent\textbf{Case} $\boldsymbol{b}${\bf -4}: If \(\gamma(f_j)\le x\), 
then we have 
\(\overline{f}_i\circ \overline{f}_j(x)=\overline{f}_i(x)=x\) 
and \(\overline{f}_j\circ \overline{f}_i(x)=\overline{f}_j(x)=x\)
by \(\gamma(f_i)\le \gamma(f_j)\le x\).
Thus, we obtain \(\overline{f}_j\circ \overline{f}_i(x)= \overline{f}_i\circ \overline{f}_j(x)\).

\smallskip

\noindent
$\boldsymbol{(c)}$: 
We prove that \(\overline{f}_j\circ\overline{f}_i(x)= \overline{f}_i\circ\overline{f}_j(x)\) holds for any \(x\).
We separately consider three cases \(x< \gamma(f_i)\), \(\gamma(f_i)\le x< \gamma(f_j)\), and \(\gamma(f_j)\le x\)
(see Figure \ref{fig:gammaprec-check} \subref{fig:gammaprec-check-+}).

\smallskip
\noindent\textbf{Case} $\boldsymbol{c}${\bf -1}: If \(x<\gamma(f_i)\), 
then we  have 
\(\overline{f}_i\circ \overline{f}_j(x)=\overline{f}_i(x)=f_i(x)\) 
and \(\overline{f}_j\circ \overline{f}_i(x)=\overline{f}_j(f_i(x))=f_i(x)\)
by \(x\le f_i(x)\le\gamma(f_i)\le\gamma(f_j)\).
Thus, we obtain \(\overline{f}_j\circ \overline{f}_i(x)= \overline{f}_i\circ \overline{f}_j(x)\).

\smallskip
\noindent\textbf{Case} $\boldsymbol{c}${\bf -2}: If \(\gamma(f_i)\le x<\gamma(f_j)\), then 
we have 
\(\overline{f}_i\circ \overline{f}_j(x)=\overline{f}_i(x)=x\) 
and \(\overline{f}_j\circ \overline{f}_i(x)=\overline{f}_j(x)=x\)
by \(\gamma(f_i)\le x<\gamma(f_j)\).
Thus, we obtain \(\overline{f}_j\circ \overline{f}_i(x)= \overline{f}_i\circ \overline{f}_j(x)\).

\smallskip
\noindent\textbf{Case} $\boldsymbol{c}${\bf -3}: If \(\gamma(f_j)\le x\), 
then we have 
\(\overline{f}_i\circ \overline{f}_j(x)=\overline{f}_i(f_j(x))=f_j(x)\) 
and \(\overline{f}_j\circ \overline{f}_i(x)=\overline{f}_j(x)=f_j(x)\)
by \(\gamma(f_i)\le \gamma(f_j)\le x\le f_j(x)\).
Thus, we obtain \(\overline{f}_j\circ \overline{f}_i(x)= \overline{f}_i\circ \overline{f}_j(x)\).

\smallskip

\noindent
$\boldsymbol{(d)}$: 
We prove that \(\overline{f}_j\circ\overline{f}_i(x)\le \overline{f}_i\circ\overline{f}_j(x)\) holds for any \(x\).
We separately consider four cases \(x<\gamma(f_i)\), \(\gamma(f_i)\le x< f_i^{-1}(\gamma(f_j))\), \(f_i^{-1}(\gamma(f_j))\le x<\gamma(f_j)\), and \(\gamma(f_j)\le x\)
(see Figure \ref{fig:gammaprec-check} \subref{fig:gammaprec-check+-}).

\smallskip
\noindent\textbf{Case} $\boldsymbol{d}${\bf -1}: If \(x<\gamma(f_i)\), then 
we have 
\(\overline{f}_i\circ \overline{f}_j(x)=\overline{f}_i(f_j(x))\) 
and \(\overline{f}_j\circ \overline{f}_i(x)=\overline{f}_j(x)=f_j(x)\)
by \(x<\gamma(f_i)\le\gamma(f_j)\).
Thus, we obtain \(\overline{f}_j\circ \overline{f}_i(x)\le \overline{f}_i\circ \overline{f}_j(x)\), 
since \(\overline{f}_i(y)\ge y\) for any \(y\).

\smallskip
\noindent\textbf{Case} $\boldsymbol{d}${\bf -2}: If \(\gamma(f_i)\le x< f_i^{-1}(\gamma(f_j))\), then 
we have 
\(\overline{f}_i\circ \overline{f}_j(x)=\overline{f}_i(f_j(x))=f_i(f_j(x))\)
by \(\gamma(f_i)\le x\le f_j(x)\le \gamma(f_j)\),
and \(\overline{f}_j\circ \overline{f}_i(x)=\overline{f}_j(f_i(x))=f_j(f_i(x))\)
by \(\gamma(f_i)\le x\le f_i(x)\le \gamma(f_j)\).
Thus, we obtain \(\overline{f}_j\circ \overline{f}_i(x)\le \overline{f}_i\circ \overline{f}_j(x)\)
by \ref{lemma:gammaprec+-} in Lemma \ref{lemma:gammaprec}.

\smallskip
\noindent\textbf{Case} $\boldsymbol{d}${\bf -3}: If \(f_i^{-1}(\gamma(f_j))\le x< \gamma(f_j)\), then 
we have 
\(\overline{f}_i\circ \overline{f}_j(x)=\overline{f}_i(f_j(x))=f_i(f_j(x))\)
by \(\gamma(f_i)\le f_i^{-1}(\gamma(f_j))\le x\le f_j(x)\le\gamma(f_j)\)
and \(\overline{f}_j\circ \overline{f}_i(x)=\overline{f}_j(f_i(x))=f_i(x)\)
by \(\gamma(f_i)\le f_i^{-1}(\gamma(f_j))\le x\le\gamma(f_j)\le f_i(x)\).
Thus, we obtain \(\overline{f}_j\circ \overline{f}_i(x)\le \overline{f}_i\circ \overline{f}_j(x)\), 
since \(f_j(x)\ge x\) and \(f_i\) is monotone nondecreasing.

\smallskip
\noindent\textbf{Case} $\boldsymbol{d}${\bf -4}:  If \(\gamma(f_j)\le x\), then 
we have 
\(\overline{f}_i\circ \overline{f}_j(x)=\overline{f}_i(x)=f_i(x)\) 
and \(\overline{f}_j\circ \overline{f}_i(x)=\overline{f}_j(f_i(x))=f_i(x)\)
by \(\gamma(f_i)\le \gamma(f_j)\le x\le f_i(x)\).
Thus, we obtain \(\overline{f}_j\circ \overline{f}_i(x)= \overline{f}_i\circ \overline{f}_j(x)\).

\begin{figure}[t]
\centering{
\subfloat[\(a_i,a_j\ge 1,~ \gamma(f_i)\le \gamma(f_j)\)]%
{\includegraphics[width=0.30 \textwidth]{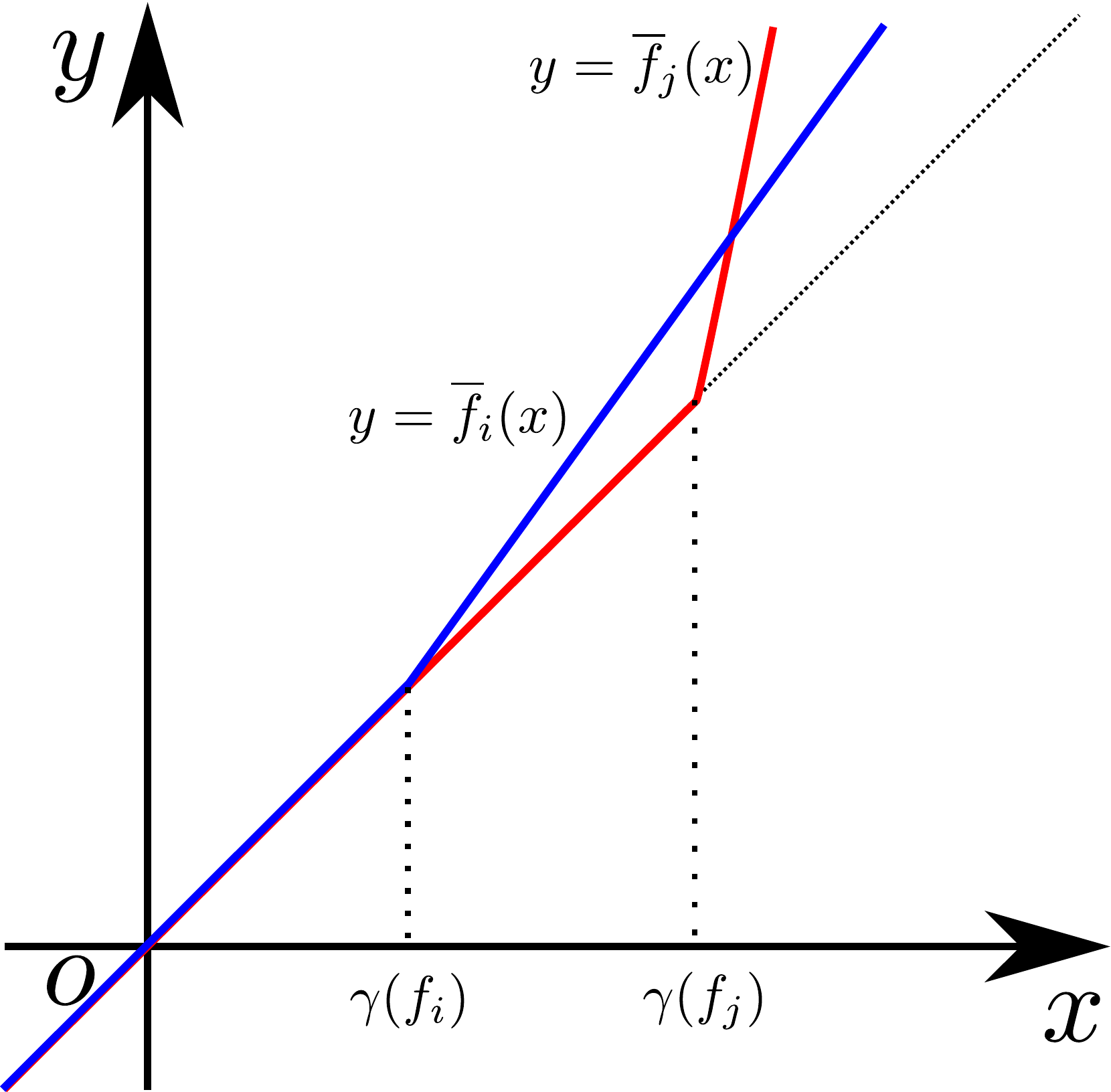}\label{fig:gammaprec-check++}}\qquad\qquad
\subfloat[\(0\le a_i,a_j< 1,~ \gamma(f_i)\le \gamma(f_j)\)]%
{\includegraphics[width=0.30 \textwidth]{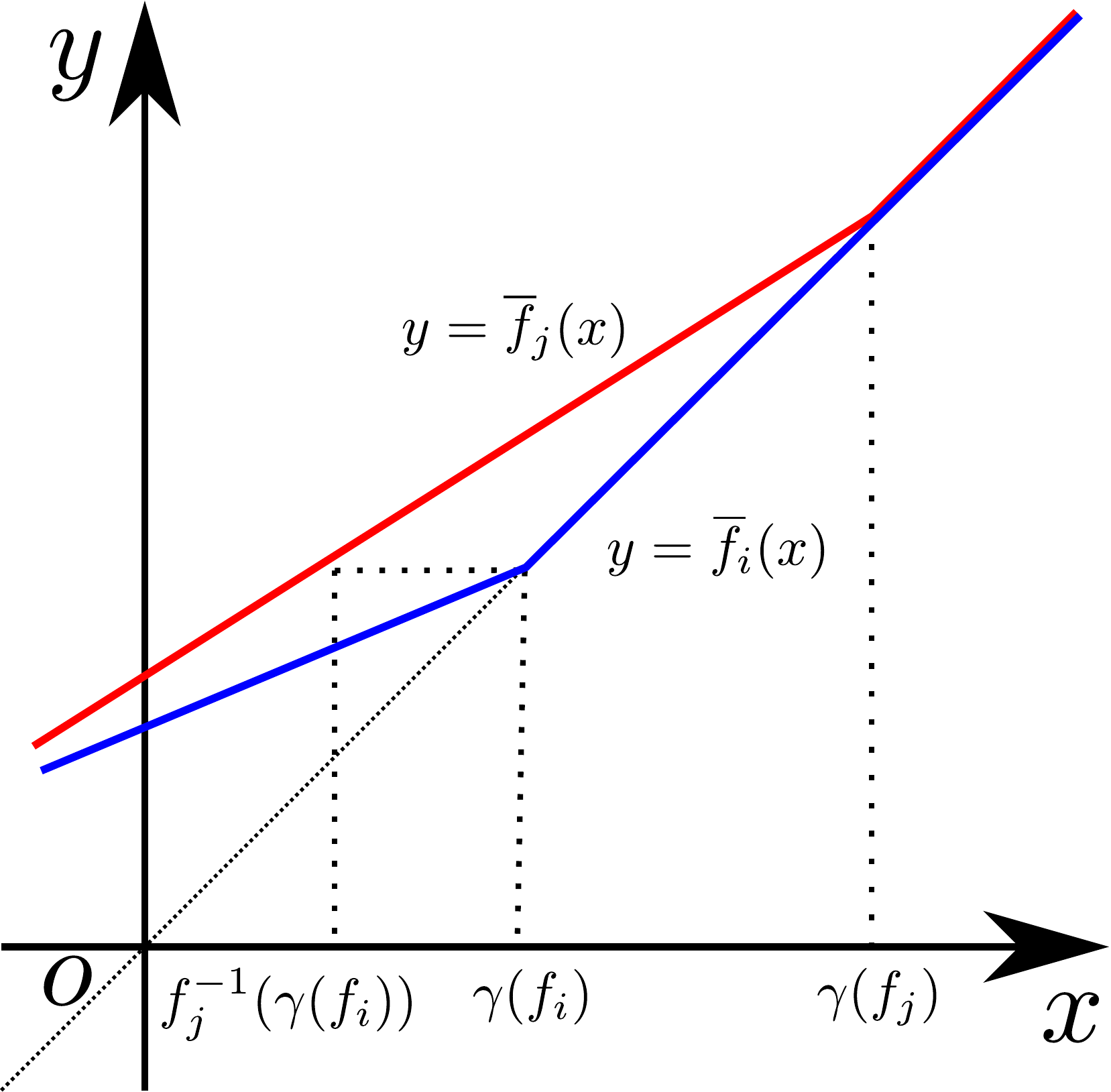}\label{fig:gammaprec-check--}}\\
\subfloat[\(0\le a_i<1,a_j\ge 1,~ \gamma(f_i)\le \gamma(f_j)\)]%
{\includegraphics[width=0.30 \textwidth]{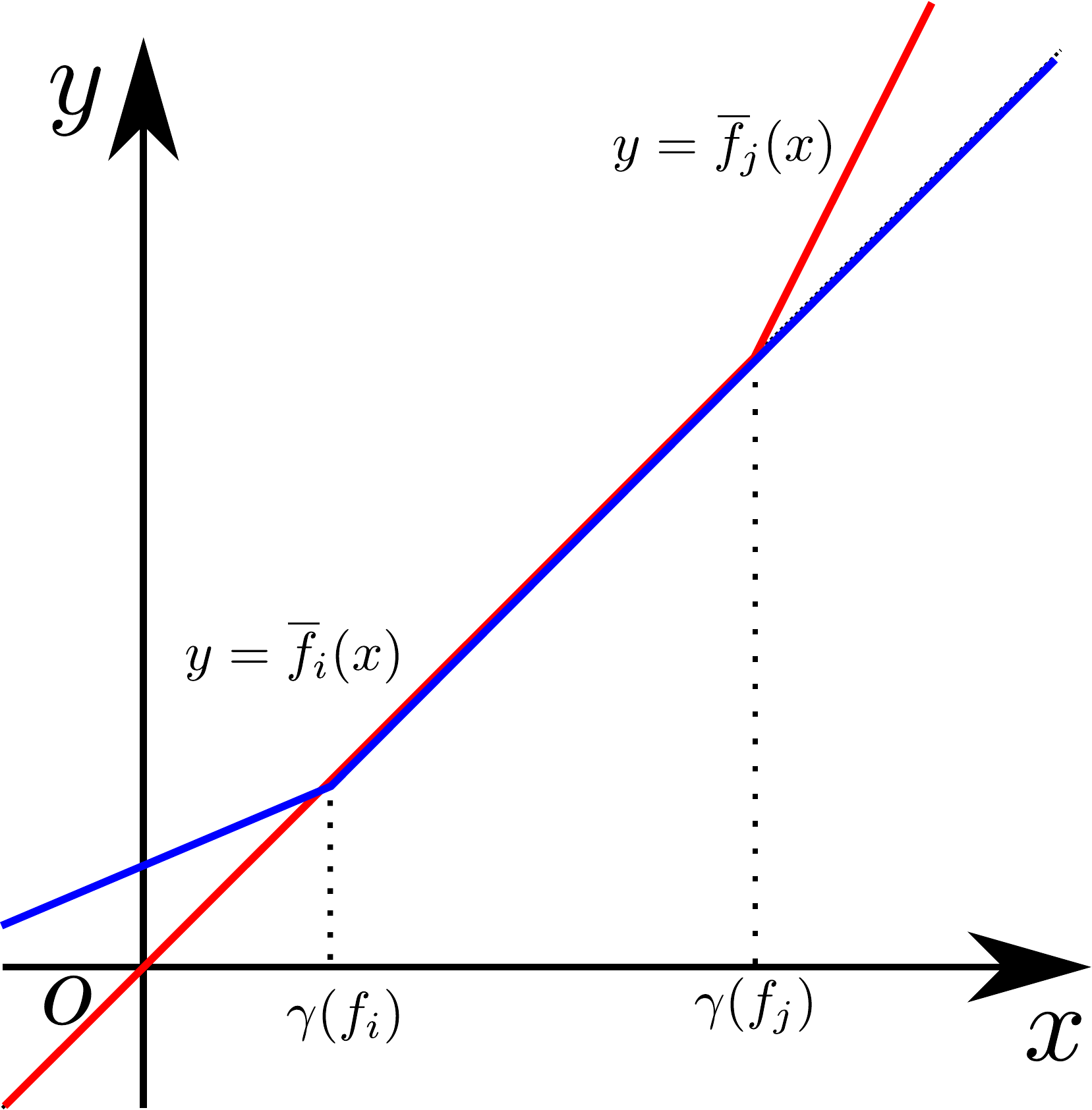}\label{fig:gammaprec-check-+}}\qquad\qquad
\subfloat[\(a_i\ge 1,~0\le a_j< 1,~ \gamma(f_i)\le \gamma(f_j)\)]%
{\includegraphics[width=0.30 \textwidth]{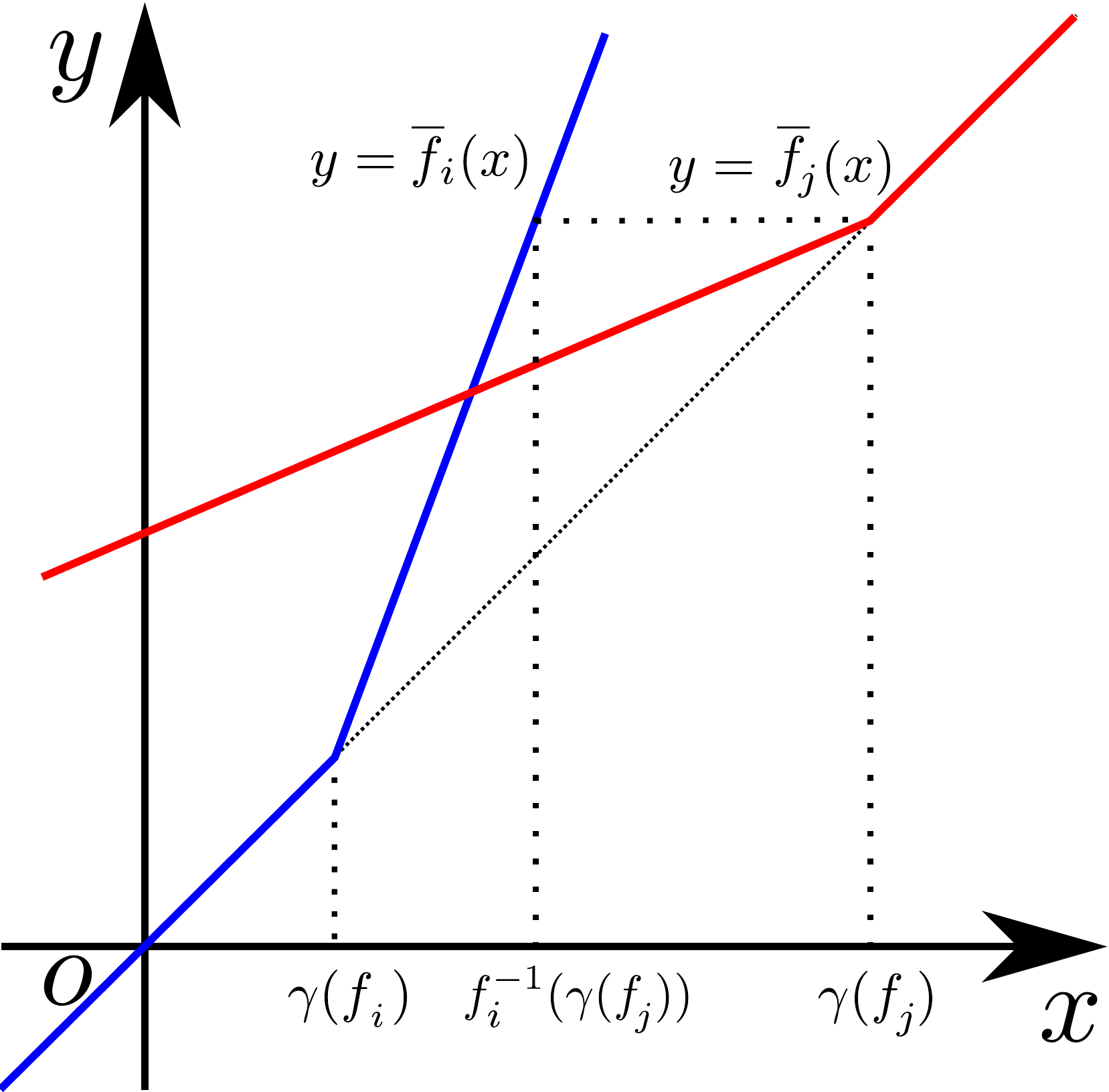}\label{fig:gammaprec-check+-}}%Caution: file name is inverse
}
\caption{Typical situations for the functions \(\overline{f}_i\) and \(\overline{f}_j\).} \label{fig:gammaprec-check}
\end{figure}
\end{proof}

\subsection*{Proof of Lemma \ref{lemma:gammacomp}}
\newtheorem*{lemma:gammacomp}{Lemma \ref{lemma:gammacomp}}
\begin{lemma:gammacomp}
For monotone nondecreasing linear functions \(f_i(x)=a_ix+b_i\) and \(f_j(x)=a_j x+b_j\) \((a_i,a_j\ge 0)\),
we have the following statements. 
\begin{enumerate}
\item If \(\gamma(f_i)=\gamma(f_j)\), then \(\gamma(f_i)=\gamma(f_j)= \gamma(f_j\circ f_i)\),
\item If \(\gamma(f_i)<\gamma(f_j)\) and \(a_i,a_j\ge 1\), then  \(\gamma(f_i)\le \gamma(f_j\circ f_i)\le\gamma(f_j)\),
\item If \(\gamma(f_i)<\gamma(f_j)\) and \(a_i,a_j<1\), then  \(\gamma(f_i)\le \gamma(f_j\circ f_i)\le\gamma(f_j)\),
\item If \(\gamma(f_i)<\gamma(f_j)\), \(a_i<1\), \(a_j\ge 1\), and \(a_i\cdot a_j\ge 1\),
then \(\gamma(f_j\circ f_i)\ge\gamma(f_j)~(>\gamma(f_i))\),
\item If \(\gamma(f_i)<\gamma(f_j)\), \(a_i<1\), \(a_j\ge 1\), and \(a_i\cdot a_j< 1\), 
then  \(\gamma(f_j\circ f_i)\le\gamma(f_i)~(<\gamma(f_j))\),
\item If \(\gamma(f_i)<\gamma(f_j)\), \(a_i\ge 1\), \(a_j<1\), and \(a_i\cdot a_j\ge 1\),
then  \(\gamma(f_j\circ f_i)\le\gamma(f_i)~(<\gamma(f_j))\),
\item If \(\gamma(f_i)<\gamma(f_j)\), \(a_i\ge 1\), \(a_j<1\), and \(a_i\cdot a_j< 1\),
then \(\gamma(f_j\circ f_i)\ge\gamma(f_j)~(>\gamma(f_i))\).
\end{enumerate}
% where we allow to write \(+\infty\le +\infty\), \(-\infty\le +\infty\), and 
% \(-\infty\le -\infty\).
\end{lemma:gammacomp}
\begin{proof}
\renewcommand{\theenumi}{$(\roman{enumi})$}
To prove the theorem, we use the following facts for a real \(c\) and a linear function \(f(x)=ax+b\):
\begin{enumerate}
\item If \(a>1\), then \(f(c)> c\siff \gamma(f)<c\), \(f(c)<c \siff \gamma(f)>c\), and \(f(c)=c \siff \gamma(f)=c\).\label{lemma:gammaul+}
\item If \(a<1\), then \(f(c)> c\siff \gamma(f)>c\), \(f(c)<c \siff \gamma(f)<c\), and \(f(c)=c \siff \gamma(f)=c\).\label{lemma:gammaul-}
\item If \(a=1\), then \(f(c)\ge c\siff \gamma(f)=-\infty\), \(f(c)<c \siff \gamma(f)=+\infty\).\label{lemma:gammaul1}
\end{enumerate}
%% \begin{proof}
%% \begin{enumerate}
%% \item It holds because \(f(\gamma(f))=\gamma(f)\) and \(f(x)-x\) is  monotone increasing if \(a>1\).
%% \item It holds because \(f(\gamma(f))=\gamma(f)\) and \(f(x)-x\) is  monotone decreasing if \(a>1\).
%% \item This statement directly follows from the definition of \(\gamma\).
%% \end{enumerate}
%% \end{proof}

\renewcommand{\theenumi}{$(\alph{enumi})$}
\begin{enumerate}
%a
\item Let \(d=\gamma(f_i)=\gamma(f_j)\).
If \(d=+\infty\), then \(a_i=a_j=1\)  and \(b_i,b_j<0\).
Thus, \(\gamma(f_j\circ f_i)=\gamma(x+b_i+b_j)=+\infty\).
If \(d=-\infty\), then \(a_i=a_j=1\)  and \(b_i,b_j\ge 0\).
Thus, \(\gamma(f_j\circ f_i)=\gamma(x+b_i+b_j)=-\infty\).
Otherwise (i.e., \(a_i,a_j\ne 1\)), we have \(f_i(x)=a_i(x-d)+d\) and \(f_j(x)=a_j(x-d)+d\).
Therefore, \(f_j\circ f_i(x)=a_ia_j(x-d)+d\) and \(\gamma(f_j\circ f_i)=d\).

%b
\item By \ref{lemma:gammaul+} and \ref{lemma:gammaul1} and \(\gamma(f_i)<\gamma(f_j)\), we have
\begin{align}
f_j\circ f_i(\gamma(f_i))&= f_j(\gamma(f_i))\le \gamma(f_i),\label{eq:gammacomp++1}\\
f_j\circ f_i(\gamma(f_j))&\ge f_j(\gamma(f_j))= \gamma(f_j).\label{eq:gammacomp++2}
\end{align}
Therefore, we obtain \(\gamma(f_i)\le \gamma(f_j\circ f_i)\le\gamma(f_j)\)
where the first inequality holds by \eqref{eq:gammacomp++1} and by \ref{lemma:gammaul+} and \ref{lemma:gammaul1},
and the second inequality holds by \eqref{eq:gammacomp++2} and by \ref{lemma:gammaul+} and \ref{lemma:gammaul1},

%c
\item By \ref{lemma:gammaul-} and \(\gamma(f_i)<\gamma(f_j)\), we have
\begin{align}
f_j\circ f_i(\gamma(f_i))&= f_j(\gamma(f_i))\ge \gamma(f_i),\label{eq:gammacomp--1}\\
f_j\circ f_i(\gamma(f_j))&\le f_j(\gamma(f_j))= \gamma(f_j).\label{eq:gammacomp--2}
\end{align}
Therefore, we obtain \(\gamma(f_i)\le \gamma(f_j\circ f_i)\le\gamma(f_j)\)
where the first inequality holds by \eqref{eq:gammacomp--1} and by \ref{lemma:gammaul-},
and the second inequality holds by \eqref{eq:gammacomp--2} and by \ref{lemma:gammaul-}.

%d
\item By \ref{lemma:gammaul-} and \(\gamma(f_i)<\gamma(f_j)\), we have
\begin{align*}
f_i\circ f_j(\gamma(f_j))= f_i(\gamma(f_j))\le \gamma(f_j).
\end{align*}
Therefore, we obtain \(\gamma(f_i\circ f_j)\ge \gamma(f_j)\) by \ref{lemma:gammaul+} and \ref{lemma:gammaul1}.

%e
\item By \ref{lemma:gammaul+} and \ref{lemma:gammaul1} and \(\gamma(f_i)<\gamma(f_j)\), we have
\begin{align*}
f_i\circ f_j(\gamma(f_i))\le f_i(\gamma(f_i))= \gamma(f_i).
\end{align*}
Therefore, we obtain \(\gamma(f_i\circ f_j)\le \gamma(f_i)\) by \ref{lemma:gammaul-}.

%f
\item By \ref{lemma:gammaul-} and \(\gamma(f_i)<\gamma(f_j)\), we have
\begin{align*}
f_i\circ f_j(\gamma(f_i))\ge f_i(\gamma(f_i))= \gamma(f_i).
\end{align*}
Therefore, we obtain \(\gamma(f_i\circ f_j)\le \gamma(f_i)\) by \ref{lemma:gammaul+} and \ref{lemma:gammaul1}.

%g
\item By \ref{lemma:gammaul+}, \ref{lemma:gammaul1}, and \(\gamma(f_i)<\gamma(f_j)\), we have
\begin{align*}
f_i\circ f_j(\gamma(f_j))= f_i(\gamma(f_j))\ge \gamma(f_j).
\end{align*}
Therefore, we obtain \(\gamma(f_i\circ f_j)\ge \gamma(f_j)\)
by \ref{lemma:gammaul-}.
\end{enumerate}
\end{proof}

\end{document}